\newtheorem*{remark}{Remark}
\newtheorem{proposition}{Proposition}
\newtheorem{defn}{Definition}
\newtheorem{lemma}{Lemma}
\newcommand{\revres}{\textcolor{black}} 
\title{Design and validation of a state-dependent Riccati equation filter for state of charge estimation in a latent thermal storage device}
\author{Michael Shanks\\
    School of Mechanical Engineering\\
    Purdue University \\
    West Lafayette, Indiana, 47907\\
    shanks5@purdue.edu
}
\author{Uduak Inyang-Udoh\\
    School of Mechanical Engineering\\
    Purdue University \\
    West Lafayette, Indiana, 47907\\
    uinyangu@purdue.edu
}
\author{Neera Jain\\
    School of Mechanical Engineering\\
    Purdue University \\
    West Lafayette, Indiana, 47907\\
    neerajain@purdue.edu
}
\begin{document}

\maketitle    

\begin{abstract}
{\it\indent Latent thermal energy storage (TES) devices could enable advances in many thermal management applications, including peak load shifting for reducing energy demand and cost of HVAC or providing supplemental heat rejection in transient thermal management systems. However, real-time feedback control of such devices is currently limited by the absence of suitable state of charge estimation techniques, given the nonlinearities associated with phase change dynamics. In this paper we design and experimentally validate a state-dependent Riccati equation (SDRE) filter for state of charge estimation in a phase change material (PCM)-based TES device integrated into a single-phase thermal-fluid loop. The advantage of the SDRE filter is that it does not require linearization of the nonlinear finite-volume model; instead, it uses a linear parameter-varying system model which can be quickly derived using graph-based methods. We leverage graph-based methods to prove that the system model is uniformly detectable, guaranteeing that the state estimates are bounded.  Using measurements from five thermocouples embedded in the PCM of the TES and two thermocouples measuring the fluid temperature at the inlet and outlet of the device, the state estimator uses a reduced-order finite-volume model to determine the temperature distribution inside the PCM and in turn, the state of charge of the device.  We demonstrate the state estimator in simulation and on experimental data collected from a thermal management system testbed to show that the state estimation error converges near zero and remains bounded.
}
\end{abstract}

\begin{nomenclature}
\entry{$a$}{Area $\left[\textup{m}^2\right]$}
\entry{$A$}{State-space coefficient matrix}
\entry{$\mathcal{A}$}{Set of adjacent control volumes}
\entry{$B$}{State-space input matrix}
\entry{$c_p$}{Specific heat $\left[\frac{\textup{J}}{\textup{kg-K}}\right]$}
\entry{$C$}{State-space output matrix}
\entry{$\mathcal{C}$}{Set of PCM control volumes}
\entry{$d$}{Distance $\left[\textup{m}\right]$}
\entry{$e$}{Estimation error}
\entry{$e_{rms}$}{Root-mean-square error}
\entry{$f_m$}{Melt fraction of a PCM control volume}
\entry{$\mathcal{F}$}{Set of fluid control volumes}
\entry{$h$}{Specific enthalpy $\left[\frac{\textup{J}}{\textup{kg}}\right]$}
\entry{$h_{fus}$}{Specific enthalpy of fusion $\left[\frac{\textup{J}}{\textup{kg}}\right]$}
\entry{$H$}{Enthalpy $\left[\textup{J}\right]$}
\entry{$i,j,l$}{Control volume indices}
\entry{$I_n$}{$n\times n$ identity matrix}
\entry{$k$}{Discrete time step index}
\entry{$K$}{Kalman gain}
\entry{$L$}{Graph Laplacian matrix}
\entry{$m$}{Mass $\left[\textup{kg}\right]$}
\entry{$\dot{m}$}{Mass flow rate $\left[\frac{\textup{kg}}{\textup{s}}\right]$}
\entry{$M$}{Capacitance matrix}
\entry{$n$}{Number of control volumes}
\entry{$N_s$}{Number of time steps}
\entry{$\mathcal{N}$}{Gaussian random variable}
\entry{$p$}{Number of outputs}
\entry{$P$}{State estimate error covariance}
\entry{$\mathcal{P}$}{Set of metal plate control volumes}
\entry{$\dot{Q}$}{Heat transfer rate $\left[\textup{W}\right]$}
\entry{$R$}{Thermal resistance $\left[\frac{\textup{K}}{\textup{W}}\right]$}
\entry{$t$}{Time $\left[\textup{s}\right]$}
\entry{$T$}{Temperature $\left[\textup{K}\right]$}
\entry{$T_{pc}$}{Phase-change temperature $\left[\textup{K}\right]$}
\entry{$u$}{Input vector}
\entry{$U$}{Convective heat transfer coefficient $\left[\frac{\textup{W}}{\textup{m}^2\textup{-K}}\right]$}
\entry{$v$}{Measurement noise vector}
\entry{$V$}{Measurement noise covariance}
\entry{$w$}{Process noise vector}
\entry{$W$}{Process noise covariance}
\entry{$x$}{State vector}
\entry{$\hat{x}$}{Estimated state vector}
\entry{$x_{soc}$}{State of charge}
\entry{$y$}{Output or measurement vector}
\entry{$\alpha$}{Specific heat function width parameter}
\entry{$\Gamma$}{Discrete-time input matrix}
\entry{$\kappa$}{Thermal conductivity $\left[\frac{\textup{W}}{\textup{m-K}}\right]$}
\entry{$\Phi$}{Discrete-time state transition matrix}
\end{nomenclature}


\section{Introduction} \label{sec:intro}

There is growing interest in the integration of thermal energy storage (TES) devices with a variety of thermal-fluid systems for improving performance. However, new control strategies are needed to take full advantage of the potential benefits of TES devices in fast-timescale transient thermal management systems (TMSs) \cite{shanks_control_2022, shafiei_model_2015, pangborn_hierarchical_2020, barz_state_2018, zsembinszki_evaluation_2020}. Control strategies for many forms of TMSs with integrated latent thermal storage, or \emph{hybrid} thermal management systems, have been studied in both experimental and simulation environments.  In \cite{shanks_control_2022}, Shanks et al.\ demonstrate a logic-based controller for state of charge management in a simulated single-phase hybrid TMS for aircraft electronics cooling.  In \cite{shafiei_model_2015}, Shafiei and Alleyne develop a model predictive controller in simulation for a two-phase hybrid TMS in a transport refrigeration system. Similarly, Pangborn et al.\ develop a model predictive controller for a vehicle TMS with distributed TES and demonstrate it in simulation \cite{pangborn_hierarchical_2020}.  In each of these examples, state estimation and measurement are neglected because the controllers are incorporated into the simulation model, and all system states are known.  However, in practice, real-time controllers for TES require accurate state estimation, including real-time estimation of the internal temperatures and the state of charge (SOC) \cite{barz_state_2018, zsembinszki_evaluation_2020}. Experimental testing of control strategies for fast-timescale hybrid TMS, such as vehicle TMS, is limited; therefore, development and experimental validation of real-time model-based state estimation techniques for latent TES devices will contribute to filling this gap in the literature.   

For latent TES containing phase-change materials (PCM), the SOC is typically defined as a function of either (i) the fraction of PCM in either the liquid or solid phase (phase fraction) or (ii) the amount of energy stored \cite{zsembinszki_evaluation_2020}.  Methods for direct measurement of the phase fraction definition have been developed for various types of TES \cite{zsembinszki_evaluation_2020, paberit_detecting_2016}, but direct measurement of the stored energy is difficult when the stored energy can be in the form of \emph{both} latent and sensible heat. Instead, methods for determining this SOC definition must rely on indirect measurements and model-based estimators \cite{beyne_estimating_2022}.  Dynamic state estimators such as the Kalman filter or Luenberger observer are commonly used to estimate unmeasured or unmeasurable states in dynamic systems given limited measurements \cite{kalman_new_1960, luenberger_observers_1966}.  In latent TES systems, temperature and SOC estimation is particularly difficult because the temperature of the PCM is nearly constant during the phase change process when the PCM exchanges most of its energy.  In addition, the nonlinear or hybrid dynamics associated with the phase change increase the complexity of the dynamic model, making observability, stability, and convergence difficult to guarantee in advance.

\subsection{Related Work}
In experimental environments, state of charge measurement or estimation methods vary extensively and are often application-specific. Zsembinszki et al.\ present an overview of various measurement methods for determining the phase fraction \cite{zsembinszki_evaluation_2020}.  These include (i) displacement or level sensors for measuring the volume change as the PCM melts and solidifies \cite{henze_experimental_2005}, (ii) pressure sensors for measuring the volume change in enclosed PCM containers \cite{paberit_detecting_2016, steinmaurer_development_2014}, (iii) digital cameras or image sensors for locating the phase change front \cite{charvat_visual_2017}, and (iv) electrical conductivity sensors for determining the phase at selected locations in electrically conductive PCMs \cite{paberit_detecting_2016, ezan_ice_2011}.  Locating the phase change front using distributed temperature sensors is possible, but calculating the stored energy from temperature measurements introduces large uncertainties because of hysteresis, undercooling, and the isothermal nature of phase change \cite{beyne_estimating_2022}.  Beyne et al.\ discuss additional methods for estimating the stored energy; one method involves measuring the heat transfer rate between the working fluid and the PCM with temperature and mass flow rate sensors in the fluid only and then integrating the heat transfer rate over time to calculate the stored energy.  This method requires no temperature sensors embedded in the PCM, but heat transfer between the TES and the surroundings must be negligible for the method to be valid \cite{beyne_estimating_2022}. 

Research on model-based dynamic state estimation for thermal energy storage is limited.  Barz et al.\ design and implement an extended Kalman filter (EKF) for temperature and SOC estimation in a shell-and-tube TES device using temperature measurements.  The authors find that the SOC estimated by the EKF tracks that of a high-fidelity simulation more closely than the SOC calculated directly from the temperature measurements \cite{barz_state_2018}. Similarly, Pernsteiner et al.\ implement an EKF for state and parameter estimation of a reduced-order PCM-based TES simulation model \cite{pernsteiner_state_2021}.  Jaccoud et al.\ investigate a particle filter, a type of Monte Carlo method for state estimation, for estimating the temperature and phase change front location in a 1-dimensional heat transfer problem \cite{jaccoud_state_2018}.  However, a limitation of these works is the absence of guarantees on the convergence of the state estimates. Although Barz et al.\ \cite{barz_state_2018} verify local observability, none of these authors attempt to prove uniform observability or uniform detectability of the nonlinear system model or guarantee convergence of the state estimates. One exception is Morales Sandoval et al.\ who utilize a nonlinear Luenberger observer for temperature estimation in a \emph{sensible} TES device, a hot water thermal storage tank.  The authors check observability of their five-state model but find that the system is only observable when all five states are measured \cite{morales_sandoval_design_2021}.  Hence, the thermal storage device must be designed with observability and state estimation in mind to ensure there are a sufficient number of sensors to fully observe the system. Increasing the number of states in a thermodynamic model would improve model accuracy, but the higher order model could be unobservable, leading to divergence of the state estimate \cite{barz_state_2018, morales_sandoval_design_2021}.  In many \emph{latent} TES architectures, including a sensor for each state in the model is not feasible because of manufacturability or cost constraints, so the estimation model must be constructed such that the system is observable, or at least detectable, given limited measurements.

\subsection{Contribution}
We fill this gap in the literature by designing and experimentally validating a state-dependent Riccati equation (SDRE) filter \cite{mracek_new_1996, jaganath_sdre-based_2005, berman_comparisons_2014} for a PCM-based thermal energy storage device integrated with a single-phase cooling loop.  In contrast to the ubiquitous nonlinear state estimators---the extended Kalman filter \cite{the_analytic_sciences_corporation_applied_1974} and the unscented Kalman filter \cite{julier_new_1997,wan_unscented_2001}---the SDRE filter uses a linear parameter-varying model parameterized by the state of the system with which uniform detectability and boundedness of the estimation error can be guaranteed in advance for many types of dynamic systems \cite{beikzadeh_exponential_2012}.  A key difference between the SDRE filter and the EKF is that the SDRE filter does not require the Jacobian matrix for linearization.  For latent TES, the highly nonlinear nature of phase change makes using estimation methods requiring linearization especially undesirable; derivation of the Jacobian matrix can be difficult, and its calculation for higher order models can be too computationally intensive for real-time estimators.  Additionally, in highly nonlinear systems, linearization can lead to poor performance, instability, and loss of observability. Fortunately, these issues can be avoided by using the SDRE filter \cite{ewing_analysis_2000}.


We leverage a finite-volume model of a PCM-based TES and limited temperature measurements to estimate the temperature distribution inside the TES using the continuous-discrete SDRE formulation \cite{berman_comparisons_2014}.  A state of charge metric is defined that is directly calculated from the estimated temperatures. We show that the nonlinear finite-volume model, when parameterized as a linear parameter-varying model for the SDRE filter, is uniformly detectable, thereby guaranteeing boundedness of the state estimation error.  Furthermore, we show that the SDRE approach can be generalized to any phase-change TES architecture by constructing the finite-volume heat transfer model with a strongly connected weighted graph, which will be uniformly detectable given at least one state measurement. \revres{ While there are other nonlinear state estimators (particle and Kalman-type) \cite{simon_optimal_2006} which, like the SDRE, do not require the Jacobian, this graph-structured model for which we can directly show detectability and error boundedness with limited sensing, makes the SDRE a natural choice.}  Through a series of simulated and experimental case studies, we demonstrate the boundedness and accuracy of the SDRE filter applied to the nonlinear estimation problem of a latent TES device. 

This paper is organized as follows.  In Section~\ref{sec:model} we present the TES architecture as well as the system model---both a higher fidelity one used for simulation and a reduced-order model used for estimation. Section~\ref{sec:SDRE} describes the SDRE filter formulation and proves that the estimation model is uniformly detectable. The experimental setup is described in Section~\ref{sec:setup}.  Section~\ref{sec:testing} presents a series of case studies demonstrating the estimator's accuracy and convergence in both simulation and experiments.

\section{TES Thermodynamic Model} \label{sec:model}

The TES device considered in this work is designed for use in a vehicle thermal management system to provide supplementary heat rejection capability during periods of large transient heat loading. The TES design consists of a paraffin phase change material, hexadecane, with a melting point of 289.5 K \cite{hale_phase_1971} embedded in a rectangular fin heat sink.  The working fluid enters the device and flows along a flat metal separator plate below the PCM layer, as shown in Fig.\ \ref{fig:TES_3D}.  The fins in the PCM layer increase the thermal conductivity of the PCM/fin composite.  The entire TES module is contained in an insulating enclosure to reduce heat transfer with the surroundings. 
\begin{figure}[tpb]
    \centering
    \includegraphics[width=3.1in]{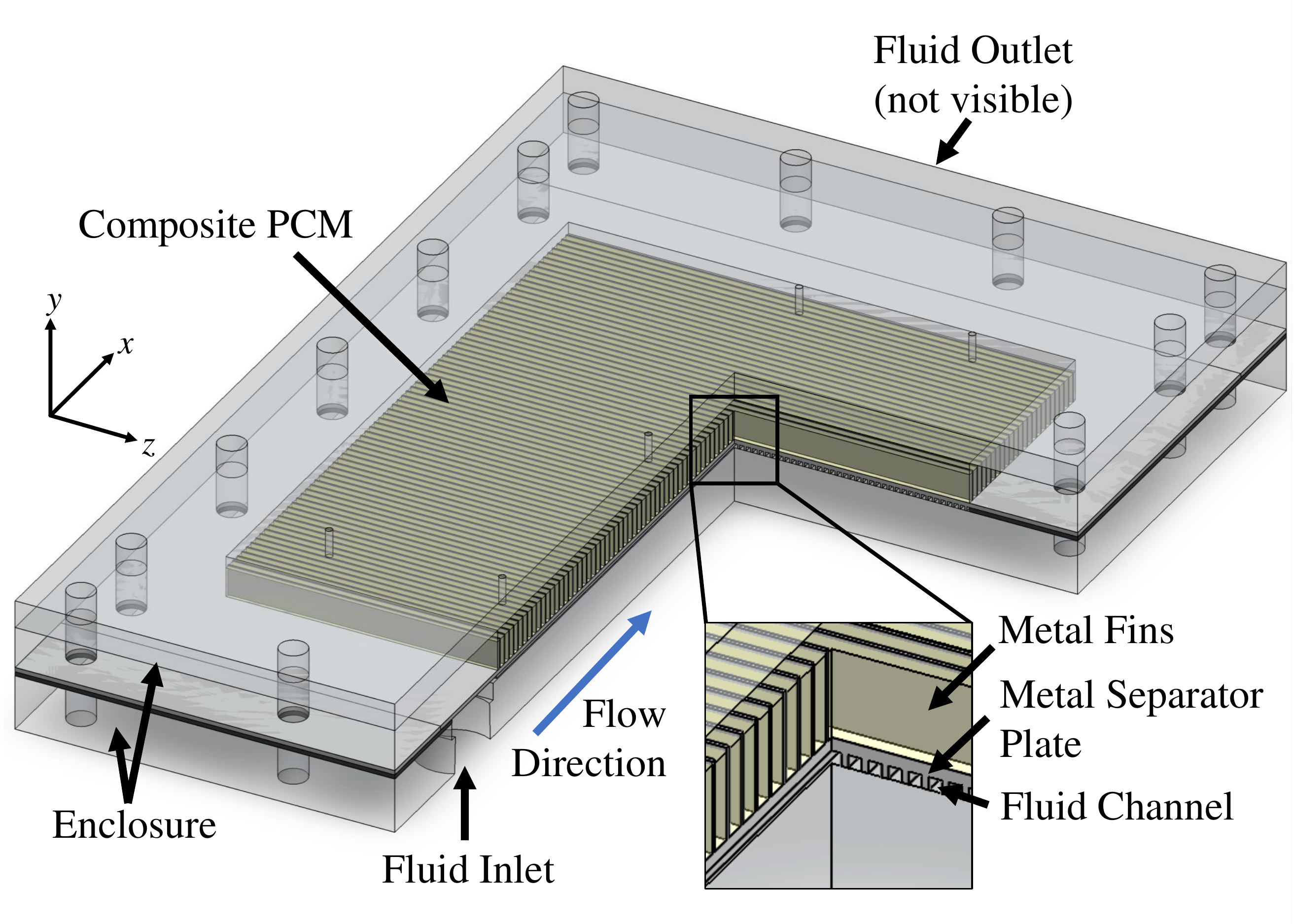}
    \caption{Section view of a 3D model of the TES module.  The fluid outlet is not visible, but the device is symmetric and the outlet geometry is identical to the inlet.}
    \label{fig:TES_3D}
\end{figure}

Experimental testing of the TES design has been used to validate a high-fidelity nonlinear simulation model, which is briefly discussed in Section~\ref{sec:fvm}.  More detail about the simulation model can be found in \cite{gohil_reduced-order_2020} and \cite{shanks_design_2022}. The TES model used for state estimation, discussed in Section~\ref{sec:SE_model}, is based on this validated simulation model.

\subsection{Finite Volume Method} \label{sec:fvm}

To model heat transfer in the TES, we derive a model based on finite-volume discretization as described in \cite{gohil_reduced-order_2020}.  Here we summarize the model for the benefit of the reader. We discretize the cross-section of the device into a grid of $n_x\times n_y$ control volumes, as shown in Fig.\ \ref{fig:TES_diag}, and derive an energy balance equation for each control volume.  The fluid channel and metal plate each comprise one layer of control volumes, and the remaining $n_y-2$ layers contain the PCM and metal fins. The set of all control volumes in the fluid channel is denoted as $\mathcal{F}$, and the fluid control volumes are numbered $j=1$ to $j=n_x$ from the inlet to the outlet\footnote{In Eqn.\ \eqref{eq:advection}, when $j=1$, the term $T_{j-1}=T_{0}$ is defined as the temperature of the fluid entering the TES, or $T_{in}$.}.  The set of metal plate control volumes, comprising $j=n_x+1$ to $j=2n_x$, is denoted as $\mathcal{P}$, and the set of PCM/fin control volumes ($j=2n_x+1$ to $j=n_xn_y$) is denoted as $\mathcal{C}$.

Eqn.\ \eqref{eq:energy_bal} defines an energy balance for control volume $j$.  Heat transfer along the width of the TES ($z$ direction) is assumed to be negligible, and temperatures across the width are assumed to be uniform.  The $\dot{Q}^{adv}_j$ term, defined in Eqn.\ \eqref{eq:advection}, represents heat transfer in fluid control volumes due to mass transfer, or advection; $\dot{m}$ is the mass flow rate, and $c_{p,f}$ is the specific heat of the working fluid.  Conductive heat transfer in the fluid channel is neglected because it is negligible compared to the advective and convective heat transfer rates.  The  $\dot{Q}_{i\rightarrow j}$ term, defined in Eqn.\ \eqref{eq:heat_transfer}, represents the conductive (solid-solid) or convective (fluid-solid) heat transfer rate from $i$ to $j$, where $i\in\mathcal{A}(j)$, and $\mathcal{A}(j)$ is the set of up to four control volumes adjacent to $j$.  Heat transfer between adjacent control volumes is modeled using the thermal resistance, given in Eqn.\ \eqref{eq:resistance}, which is either conductive or convective. In Eqn.\ \eqref{eq:resistance}, $d_{i,j}$ is the distance between the centers of control volumes $i$ and $j$, $\kappa_j$ is the thermal conductivity of $j$, $a_{i,j}$ is the area of the boundary between $i$ and $j$, and $U_j$ is the convective heat transfer coefficient of fluid control volume $j$.  
\begin{figure}[tpb]
    \centering
    \includegraphics[width=3.1in]{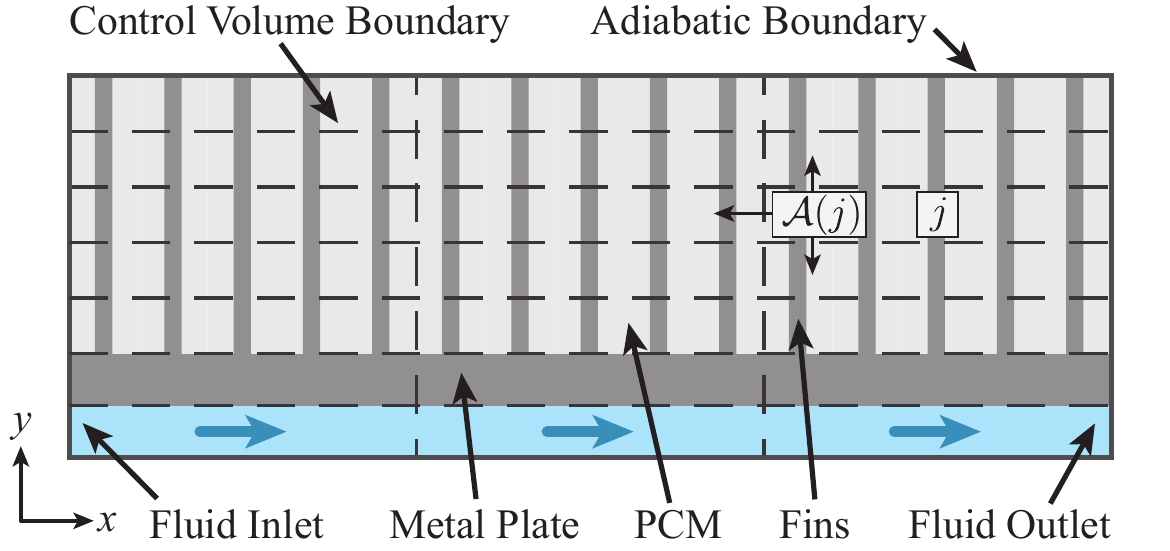}
    \caption{Cross section of the TES module considered in this work (not to scale).  A finite volume heat transfer model is constructed by discretizing the cross section into a grid of rectangular control volumes.  In this figure, the dashed lines demarcate the grid of control volumes for the model used for state estimation.}
    \label{fig:TES_diag}
\end{figure}
\begin{subequations}
\begin{gather}
    m_{j}c_{p,j} \frac{d T_j}{d t}=\dot{Q}^{adv}_j +    \sum_{i \in \mathcal{A}(j)} \hspace{-4pt}\dot{Q}_{i\rightarrow j}\label{eq:energy_bal} \allowdisplaybreaks\\
    \dot{Q}^{adv}_j = \begin{cases}
        \dot{m}c_{p,f}\left(T_{j-1} - T_j\right)&j\in\mathcal{F}\\
        0&j\in\mathcal{P}\cup\mathcal{C}
    \end{cases} \label{eq:advection}\allowdisplaybreaks\\
    \dot{Q}_{i\rightarrow j} = \frac{T_{i}-T_{j}}{R_{j,i} + R_{i,j}} \label{eq:heat_transfer} \allowdisplaybreaks\\
    R_{j, i} = 
    \begin{cases}
        \dfrac{1}{U_j a_{i,j}} & i\in\mathcal{P};\:j \in\mathcal{F} \vspace{3pt}\\
        \dfrac{d_{i, j}}{2 \kappa_j a_{i,j}}& j \in{\mathcal{P}\cup\mathcal{C}}
    \end{cases} \label{eq:resistance}
\end{gather}
\end{subequations} 

The PCM/fin layer is modeled as a composite material composed of metal fins and PCM, which we call a composite PCM or CPCM. In Eqns.\ \eqref{eq:energy_bal} and \eqref{eq:resistance}, the thermal conductivity $\kappa_j$ and specific heat $c_{p,j}$ are temperature-dependent composite properties for control volumes in the CPCM layer.  Thermal properties of the metal plate and fluid layers are assumed constant. Equations for the composite properties can be found in \cite{shamberger_cooling_2018} and \cite{tamraparni_design_2021}.  Validation of the composite assumption for closely-spaced parallel rectangular fins is given in \cite{tamraparni_design_2021}. 

The phase change dynamics are modeled using the effective specific heat function given in Eqn.\ \eqref{eq:cp_eff}, which is validated in \cite{gillis_numerical_2021} and \cite{yangSolvingHeatTransfer2010}.  Near the phase change temperature $T_{pc}$, the effective specific heat of CPCM is greatly increased so that the latent heat is modeled as sensible heat over the temperature range $T_{pc}\pm \frac{\Delta T_{pc}}{2}$, with $\Delta T_{pc}=8$ K, as shown in Fig.\ \ref{fig:sp_heat}.  The width of this temperature range is determined by the parameter $\alpha = \frac{8}{\Delta T_{pc}}$.  Differential scanning calorimetry (DSC) of hexadecane has shown that the phase change occurs over this range of temperatures, although hexadecane also exhibits undercooling during the solidification process and thermal hysteresis between the melting and solidifying temperatures \cite{sgreva_thermo-physical_2022}.  These additional phase-change phenomena are neglected in the simulation model. Outside the latent temperature range, the specific heat approaches the liquid specific heat $c_{p,liq}$ for $T_j > T_{pc}$ or the solid specific heat $c_{p,sol}$ for $T_j < T_{pc}$.  The specific enthalpy of fusion, $h_{fus}$, and the solid and liquid specific heats, $c_{p,sol}$ and $c_{p,liq}$, represent properties of the CPCM. 
\begin{multline}\label{eq:cp_eff}
c_{p,j}(T_j)
= c_{p,sol}+\frac{\left(c_{p,liq}-c_{p,sol}\right)}{1+e^{-\alpha\left(T_j-T_{pc}\right)}} \\
+\frac{h_{fus}  \alpha}{2+e^{-\alpha\left(T_j-T_{pc}\right)}+e^{\alpha\left(T_j-T_{pc}\right)}} \;\forall j\in \mathcal{C}
\end{multline} 

During phase change, a control volume will contain both solid and liquid CPCM. The partially melted CPCM is then treated as an isotropic composite material \cite{yangSolvingHeatTransfer2010}; the volume fraction of liquid CPCM is given in Eqn.\ \eqref{eq:mf} \cite{gillis_numerical_2021}.
\begin{equation}\label{eq:mf}
f_{m,j}=\frac{1}{1+e^{-\alpha\left(T_j-T_{pc}\right)}}\;\forall j\in \mathcal{C}
\end{equation}
\begin{figure}[tpb]
    \centering
    \includegraphics[width=3.1in]{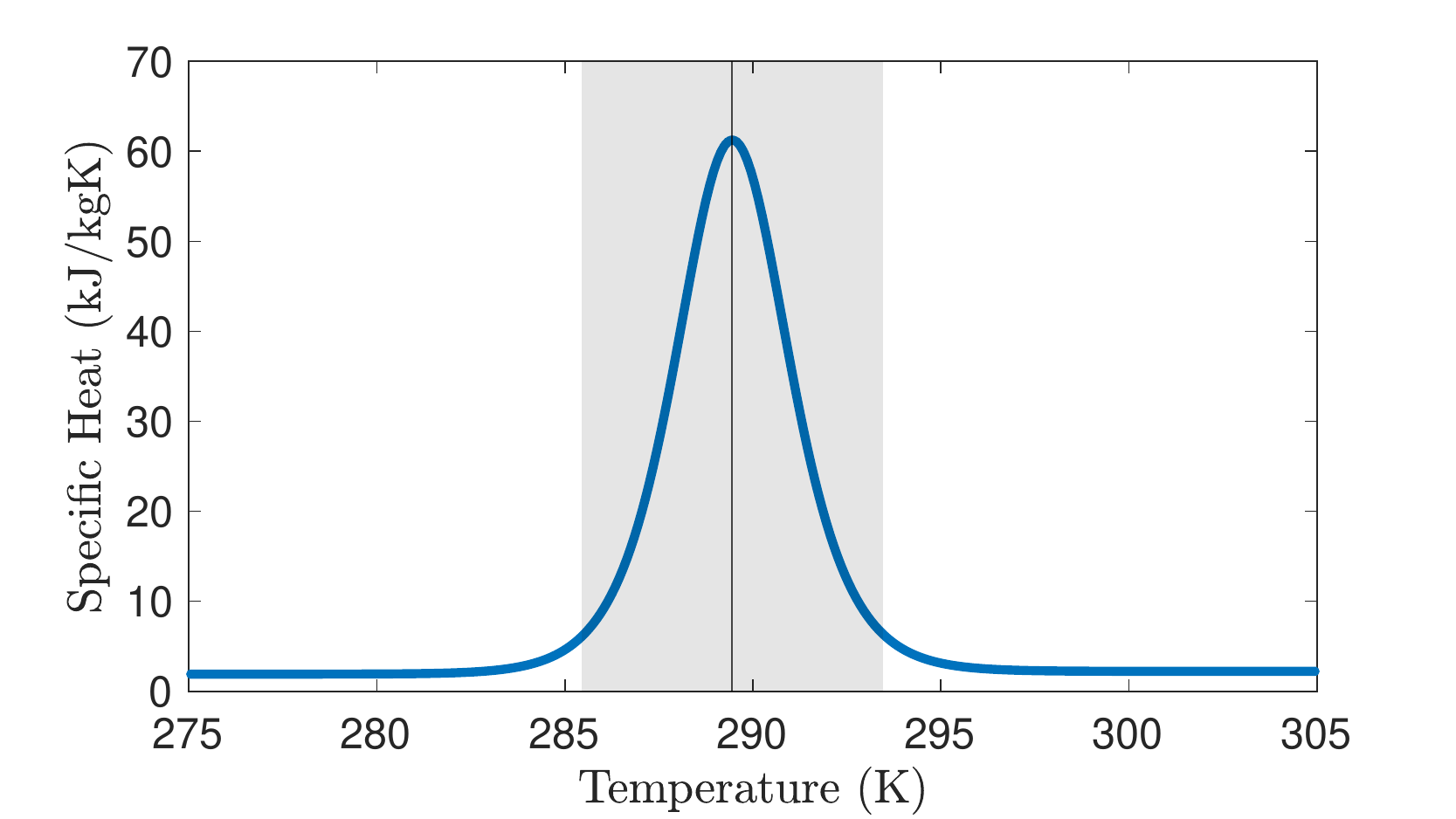}
    \caption{Effective specific heat function for simulating phase change in the TES device.  Shading represents the approximate range of temperatures over which the phase change occurs.  The black line is the true melting point, 289.5 K.}
    \label{fig:sp_heat}
\end{figure}

\subsection{Model Used for State Estimation}\label{sec:SE_model}
The number of states in the finite-volume model is a function of the level of discretization chosen by the user.  For a higher fidelity model, the user may choose $n_x$ and $n_y$ to be on the order of tens of control volumes.  However, this would results in a model that is not suitable for state estimation.  Instead, for the purpose of state estimation, we choose a grid of $n_x=3$ by $n_y=7$ control volumes for a total of $n=21$ states. Defining an energy balance for each control volume yields a system of differential equations that can be represented using a linear parameter-varying state-space model.  This model can be quickly derived using graph-based methods by representing the finite-volume model as the thermal resistance network in Fig.\ \ref{fig:res_network}.  

\begin{proposition}\label{prop:graph} The thermal resistance network in Fig.\ \ref{fig:res_network} forms a connected undirected weighted graph. The control volumes constitute the vertices of the graph, and the thermal resistances between control volumes are the graph's edges.\end{proposition}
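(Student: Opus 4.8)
The plan is to verify the three defining properties---undirected, weighted, and connected---directly from the energy-balance equations and the grid geometry. I would begin by fixing the vertex--edge correspondence: each of the $n=21$ control volumes is a vertex, and I place an edge between vertices $i$ and $j$ precisely when $i\in\mathcal{A}(j)$ and the two volumes exchange heat through a thermal resistance, i.e.\ when the term $\dot{Q}_{i\rightarrow j}$ of Eqn.\ \eqref{eq:heat_transfer} appears in the balance \eqref{eq:energy_bal}. The key observation at this stage is that the advective term $\dot{Q}^{adv}_j$ of Eqn.\ \eqref{eq:advection} is \emph{not} a thermal resistance and is therefore deliberately excluded from the edge set; this exclusion is exactly what permits an undirected graph despite the directional fluid flow from inlet to outlet.

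Next I would establish the undirected and weighted properties together. From Eqn.\ \eqref{eq:heat_transfer}, the coupling between $i$ and $j$ is governed by the series resistance $R_{j,i}+R_{i,j}$, which is invariant under interchange of $i$ and $j$ since addition commutes; consequently $\dot{Q}_{i\rightarrow j}=-\dot{Q}_{j\rightarrow i}$ and the edge carries no intrinsic orientation. I would then assign each edge the weight $(R_{j,i}+R_{i,j})^{-1}$, the thermal conductance, and read off from Eqn.\ \eqref{eq:resistance} that every resistance is strictly positive and finite, because the geometric and material parameters $U_j$, $a_{i,j}$, $d_{i,j}$, and $\kappa_j$ are all positive; hence each weight is well defined and positive.

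The substantive step is connectedness, which I would prove by exhibiting connectivity that respects the three layers $\mathcal{F}$, $\mathcal{P}$, and $\mathcal{C}$. Within the metal-plate and CPCM layers, both horizontal and vertical grid neighbors conduct, so these $n_x(n_y-1)$ volumes form a connected grid by the standard argument that any two grid cells are joined by a sequence of adjacent cells. The fluid layer requires care and is the main obstacle: since conductive heat transfer in the fluid is explicitly neglected, the fluid control volumes share no conductive edges with one another, so their only thermal-resistance edges are the convective links to the plate volumes directly above them (the $i\in\mathcal{P},\,j\in\mathcal{F}$ case of Eqn.\ \eqref{eq:resistance}). I would therefore argue that each fluid volume attaches to the already-connected plate--CPCM core through exactly one such convective edge, so the fluid vertices appear as pendant vertices hung off a connected subgraph; the whole graph is connected as a result.

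Finally, I would note that this layered-connectivity argument is structural rather than numerical, so it carries over verbatim to any $n_x\times n_y$ discretization, and that the positivity of the edge weights is what later guarantees the graph Laplacian $L$ has the spectral properties needed for the uniform detectability claim.
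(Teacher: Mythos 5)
Your proposal is correct, and it is considerably more substantive than the proof the paper actually gives: the paper's argument consists of one sentence recalling the definition of a connected undirected graph and then asserting that the proposition ``follows from this definition,'' leaving the reader to supply exactly the verification you carry out. Your route fills in the three pieces that assertion hides. The symmetry of the series resistance $R_{j,i}+R_{i,j}$ and the positivity of the conductance weights are immediate but worth stating, and your treatment of the fluid layer is the genuinely nontrivial point: since conduction within the fluid is neglected (as the caption of Fig.~\ref{fig:res_network} notes), the fluid volumes are \emph{not} mutually adjacent in the resistance network, and connectedness survives only because each one hangs as a pendant vertex off the plate layer through its single convective edge. You are also right to flag that advection is excluded from the edge set --- in the state-space form it enters through $B(x)u$ rather than the Laplacian, which is precisely why the graph can be undirected despite the directional flow. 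Your closing observation that the argument is structural and independent of $n_x$, $n_y$ anticipates the Remark following Proposition~\ref{prop:bounded}, and the positivity of the weights is indeed what the Appendix's consensus-based detectability argument later relies on. In short: same conclusion as the paper, but yours is the proof the paper should have written.
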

\begin{proof}
If a path, or a set of contiguous edges, exists between any two vertices, an undirected graph is said to be connected. Proposition \ref{prop:graph} follows from this definition.
\end{proof}
The state-space model in Eqn.\ \eqref{eq:TES_ss} is derived by calculating the graph's Laplacian matrix, $L(x)$, in addition to a diagonal capacitance matrix, $M(x)$, and an input matrix, $B(x)$, where the state vector
$x=\begin{bmatrix}
T_1 & \cdots & T_n
\end{bmatrix}^\top\in\mathbb{R}^n$ contains the temperatures of all $n$ control volumes; this includes the fluid control volumes, metal plate control volumes, and CPCM control volumes. 
The notation $x_k$ refers to the value of $x$ at time $t_k$, or $x_k = x(t_k)$. The control input $u\in \mathbb{R}^1$ is the mass flow rate, and $y_k\in \mathbb{R}^p$ is the (output) vector of states that are measured at time step $k$.  
\begin{subequations}\label{eq:TES_ss}
\begin{align}
\begin{split}
    \dot{x}&=-M^{-1}(x)L(x) x + B(x)u 
    \\ &= A(x)x + B(x)u 
\end{split}\\
    y_k &= Cx_k \label{eq:output}
\end{align}
\end{subequations}
\begin{figure}[tpb]
    \centering
    \includegraphics[width=3.1in]{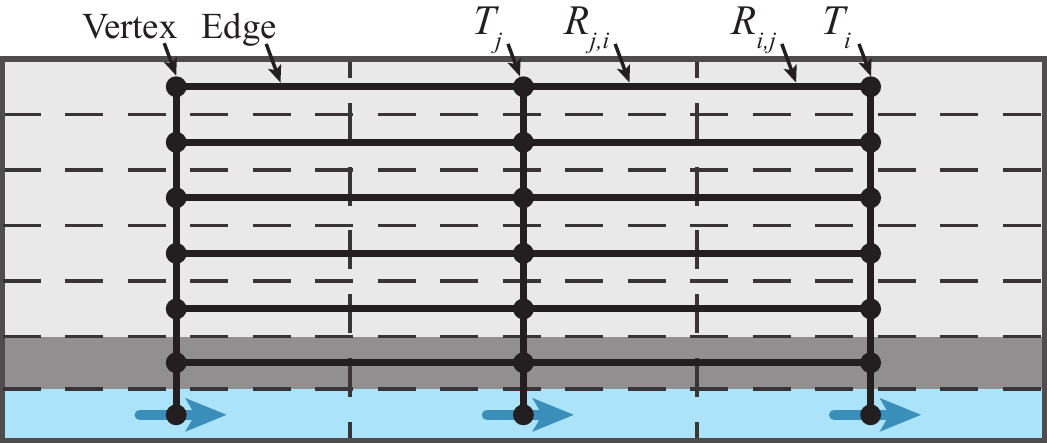}
    \caption{The finite-volume model forms a resistance network and can be analyzed as a connected graph. Each edge consists of two series resistances, one for each control volume connected by the edge.  The fluid control volumes are not connected by edges because conduction in the fluid is neglected. }
    \label{fig:res_network}
\end{figure}

The Laplacian matrix $L(x)\in\mathbb{R}^{n\times n}$ for the weighted graph is defined in Eqn.\ \eqref{eq:laplacian}\footnote{The Laplacian of a graph is more formally defined as the difference between the degree matrix and the adjacency matrix; see \cite{inyang-udoh_strongly_2022}}. The matrix $M(x)\in\mathbb{R}^{n\times n}$ is defined in Eqn.\ \eqref{eq:cap_mat} and contains the thermal capacitance $m_jc_{p,j}$ of each control volume; for CPCM control volumes, the specific heat is defined using Eqn.\ \eqref{eq:cp_eff}.  The input matrix $B(x)\in \mathbb{R}^{n\times1}$ is defined in Eqn.\ \eqref{eq:input_mat} and contains the advection terms. Finally, $T_{in}$ is the temperature of the fluid at the inlet, which is not a state. Recall that $T_1$, $T_2$, \dots, $T_{n_x}$ are the fluid control volume temperatures, which are the first $n_x$ states in $x$.  

\begin{subequations}
\begin{gather}
    L_{i,j} = 
    \begin{cases}
    \dfrac{-1}{R_{j,i} + R_{i,j}} &  i\neq j\textup{ and } \exists\textup{ edge } (i,j) \\
    0& i\neq j\textup{ and }\nexists\textup{  edge } (i,j) \\
    \displaystyle\underset{l\in\mathcal{A}(j)}{\sum}\hspace{-4pt}\left(-L_{l,j}\right)&  i=j 
\end{cases}\label{eq:laplacian} \allowdisplaybreaks\\
    M(x) = \begin{bmatrix}
    m_1c_{p,1} & 0 & \cdots  & 0 \\
    0 & m_2c_{p,2}   & \cdots  & 0  \\
    \vdots & \vdots & \ddots & \vdots \\
    0 & 0 & \cdots & m_nc_{p,n}  
\end{bmatrix}\label{eq:cap_mat}\allowdisplaybreaks\\
B (x)= c_{p,f}M^{-1}\begin{bmatrix}
  T_{in} - T_1 \\
  T_1 - T_2\\
   \vdots  \\
  T_{n_x-1} - T_{n_x}\\
  0 \\
   \vdots  \\
  0 
\end{bmatrix}\in\mathbb{R}^{n\times1}\label{eq:input_mat}
\end{gather}
\end{subequations}

Suppose measurements of the temperature of certain control volumes are available at discrete time instances $t_k$, $k=1,2,\dots$; then  $C\in\mathbb{R}^{p\times n}$ is a binary output matrix where $C_{i,j} = 1$ if $y_{k,i} = x_{k,j}$, and $C_{i,j} = 0$ otherwise.  Thus, the output equation (Eqn.\ \ref{eq:output}) is discrete-time although the state dynamics are continuous.

\subsection{State of Charge}
We define the state of charge (SOC) of the TES in Eqn.\ \eqref{eq:soc} as a piecewise linear function of the energy stored in the device, or the enthalpy of the CPCM, $H$ \cite{shanks_control_2022}.  Following the convention described in \cite{zsembinszki_evaluation_2020}, minimum stored energy corresponds to a maximum SOC of 1, and maximum stored energy corresponds to a minimum SOC of 0. Note that this definition of SOC considers both latent and sensible energy storage within the TES.
\begin{equation}
x_{soc}=\begin{cases}
1 & H<H_{min } \\
\dfrac{H_{max}-H}{H_{max }-H_{min }} & H_{min } \leq H \leq H_{max } \\
0 & H>H_{max}
\end{cases}
\label{eq:soc}
\end{equation}

With the finite volume model, the total stored energy, given in Eqn.\ \eqref{eq:total_enth}, is the sum of the enthalpies of each CPCM control volume.  The specific enthalpy of the PCM/fin composite in Eqn.\ \eqref{eq:abs_enth} is found by integrating the effective specific heat function, Eqn.\ \eqref{eq:cp_eff}, with respect to temperature.  Fig.\ \ref{fig:sp_enth} shows the shape of the specific enthalpy curve; the specific enthalpy is defined such that it is zero at the phase change temperature, $T_{pc}$. The maximum enthalpy $H_{max}$ and minimum enthalpy $H_{min}$ in Eqn.\ \eqref{eq:soc} are calculated for the PCM at a uniform temperature $T_{max} = 308$ K or $T_{min} = 278$ K, which are user-defined parameters corresponding to the maximum and minimum expected temperatures, respectively.  Thus, the SOC of the TES can be calculated given the temperature of each CPCM control volume.
\begin{gather}
\begin{multlined}
\label{eq:total_enth}
    H=\sum_{j\in\mathcal{C}}m_{j} h(T_{j})
    \end{multlined}\allowdisplaybreaks\\
\begin{multlined}\label{eq:abs_enth}
    h(T_{j\in\mathcal{C}})= \int_{T_{pc}}^{T_j}c_{p,j}(x)dx \\ 
    = \frac{h_{fus}}{2} \tanh \left[\frac{\alpha}{2}\left(T_j-T_{pc}\right)\right] + 
    (T_j-T_{pc})c_{p,sol} \\
    +\frac{c_{p,liq}-c_{p,sol}}{\alpha}\ln{\left[\frac{1+e^{\alpha\left(T_j-T_{pc}\right)}}{2}\right]} 
\end{multlined}
\end{gather}
\begin{figure}[tpb]
    \centering
    \includegraphics[width=3.1in]{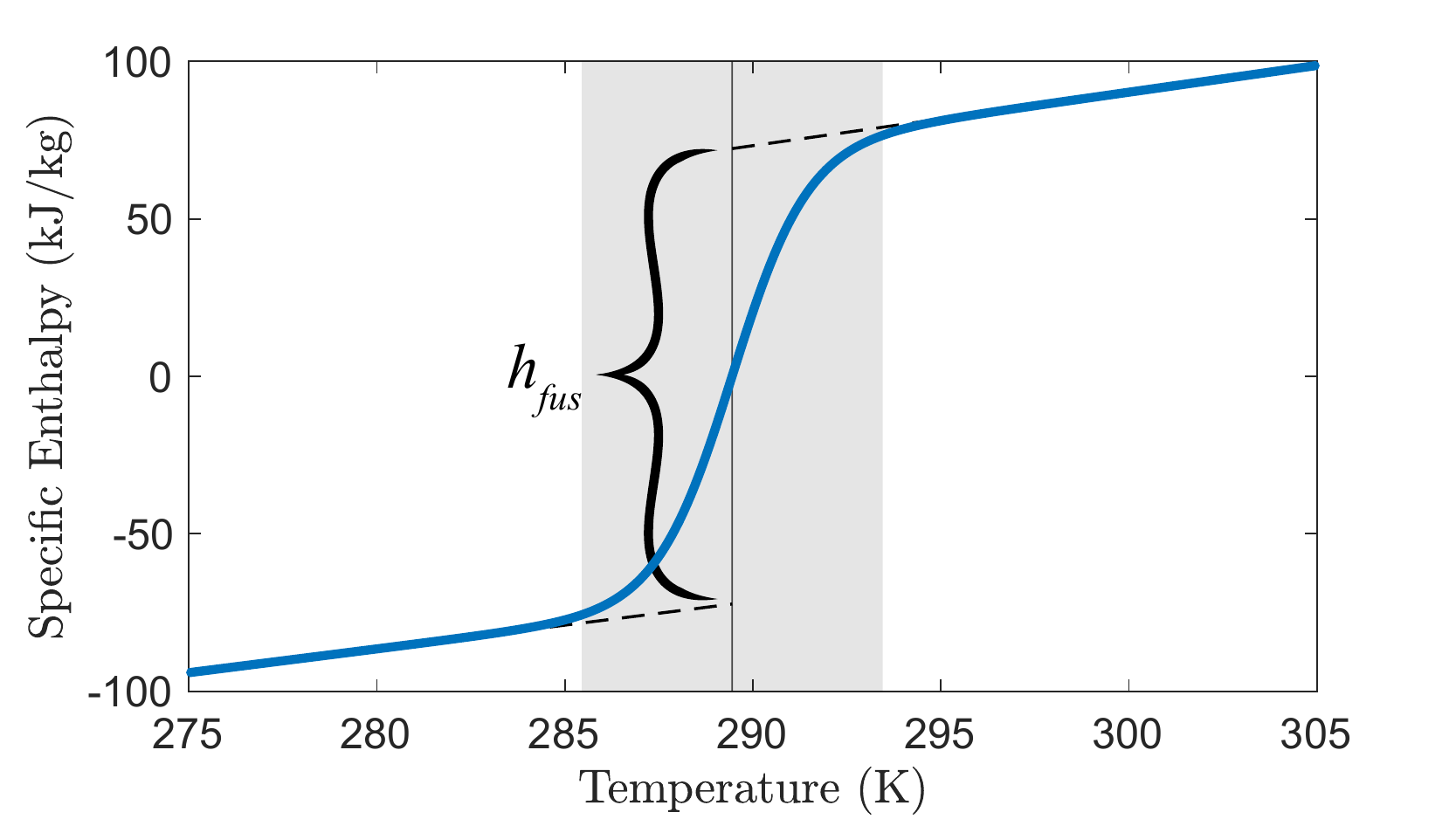}
    \caption{Specific enthalpy function for the CPCM derived by integrating the effective specific heat function.  Shading represents the approximate range of temperatures over which the phase change occurs.  The black line is the phase change temperature, 289.5 K.}
    \label{fig:sp_enth}
\end{figure}
\section{State-Dependent Riccati Equation Filter} \label{sec:SDRE}

In \cite{mracek_new_1996}, Mracek et al.\ discuss the continuous-time formulation of the SDRE filter in which the observer gain is obtained by solving the continuous-time state-dependent Riccati equation \cite{mracek_new_1996}. In \cite{jaganath_sdre-based_2005}, Jaganath et al.\ present the discrete-time SDRE filter in which the discrete-time Riccati equation is solved recursively with prediction and update steps resembling the EKF algorithm.  For systems with continuous dynamics and discrete sampling, Berman et al.\ \cite{berman_comparisons_2014} and Inyang-Udoh et al.\ \cite{inyang-udoh_sampling_2022} describe the continuous-discrete SDRE filter in which the state estimates and covariance matrix are propagated in discrete time steps using a state-dependent state transition matrix. In this section we outline the continuous-discrete SDRE filter algorithm and describe its application to the latent TES state estimation problem. As we will show in Section \ref{sec:sample_rate}, the continuous-discrete formulation can be particularly advantageous when the sample rate is limited.

Let the input $u = \dot m(t)$ be constant over each interval $\Delta t_k = t_{k+1} - t_k$. Further assume some process noise ${w}_k \sim \mathcal{N}({0}, W_k)$ and measurement noise ${v}_{k} \sim \mathcal{N}({0}, V_{k})$ are present, where $W_k \in \mathbb{R}^{n\times n}$, $V_{k} \in\mathbb{R}^{p\times p}$ are covariance matrices.  If the system dynamics change little during the interval $\Delta t_k$, the linear parameter-varying state-space model can be ``frozen-in-time'' (see \cite{inyang-udoh_strongly_2022}) and discretized using the exact solution for the state transition matrix of a linear dynamic system.  The  discrete-time system is given by 
\begin{subequations}\label{eq:sys_nonlin}
    \begin{align} 
        x_{k+1} &= \Phi_kx_k + \Gamma_{k}u_k + w_k \\
        y_{k} &= Cx_{k} + v_{k},    \end{align}\label{eq:discrete_sys}
\end{subequations}
where 
\begin{subequations}
    \begin{align}
        \Phi_k &= e^{A(x_k)\Delta t_k},\\
        \Gamma_k &= \int_{0}^{ \Delta t_{k}}e^{A(x_k)\tau}B(x_k)d\tau.
\end{align}\label{eq:discrete_mat}
\end{subequations}

To estimate the unmeasured temperature states of the TES, we use the two-step recursive formulation of the SDRE filter given in \cite{jaganath_sdre-based_2005} and \cite{berman_comparisons_2014} where the predicted state $\hat{x}_{k+1|k}$ and estimation error covariance $P_{k+1|k}$ are given by
\begin{subequations}
    \begin{align}
        \hat{x}_{k+1|k} &= \Phi_k\hat{x}_k + \Gamma_{k}{u}_k,\\
        P_{k+1|k} &= \Phi_k P_{k|k}\Phi_{k}^\top + W_k,
    \end{align}\label{eq:prop}
\end{subequations}
the Kalman gain $K_{k}$ is given by
\begin{equation}
    K_{k} = P_{k+1|k}C^\top\left(CP_{k+1|k}C^\top+V_k\right)^{-1},
    \label{eq:kalman_gain}
\end{equation}
and the updated state, output and covariance estimates are 
\begin{subequations}
    \begin{align}
        \hat{x}_{k+1|k+1} &= \hat{x}_{k+1|k} + K_k\left(y_k-\hat{y}_k\right),\\ 
        \hat{y}_k &= C\hat{x}_k, \\
        P_{k+1|k+1} &= \left(I-K_kC\right)P_{k+1|k}.
    \end{align} \label{eq:update}
\end{subequations}

Using a smaller time interval $\Delta t_k$ in the discrete-time model results in a better approximation of the continuous nonlinear dynamics.  If measurements are not available at every time step $t_k$, the prediction step in Eqn.\ \eqref{eq:prop} can be performed multiple times between each update step; we call this method the \textit{continuous-discrete} SDRE filter \cite{berman_comparisons_2014, inyang-udoh_sampling_2022} although it still uses a discrete approximation of the continuous dynamics.  

Now we establish the boundedness of the SDRE filter when applied to the latent TES under consideration.   

\begin{proposition}\label{prop:bounded} The estimation error covariance of the SDRE filter for the TES model given by Eqn.\ \eqref{eq:TES_ss} will remain bounded.
\end{proposition}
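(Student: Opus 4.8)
The plan is to reduce the claim to the standard boundedness theorem for Kalman-type filters (e.g., \cite{beikzadeh_exponential_2012}), which guarantees that the error covariance $P_{k|k}$ generated by the Riccati recursion in Eqns.~\eqref{eq:prop}--\eqref{eq:update} stays uniformly bounded provided three conditions hold for the frozen-in-time model of Eqn.~\eqref{eq:discrete_sys}: (i) the pair $(C,\Phi_k)$ is uniformly detectable; (ii) the pair $(\Phi_k, W_k^{1/2})$ is uniformly stabilizable (equivalently, the process-noise covariance is bounded below); and (iii) the matrices $\Phi_k$, $\Gamma_k$, $W_k$, $V_k$ are uniformly bounded above, with $W_k$ and $V_k$ bounded below. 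I would verify (ii) and (iii) directly and then devote the main effort to (i).

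For conditions (ii) and (iii), I would argue from physical boundedness of the state. Over any meaningful operating range the temperatures remain in a compact set, so the specific heat $c_{p,j}(T_j)$ of Eqn.~\eqref{eq:cp_eff} is continuous and strictly positive, hence bounded above and below. Consequently the diagonal capacitance matrix $M(x)$ in Eqn.~\eqref{eq:cap_mat} is uniformly bounded with a uniformly bounded inverse. The thermal conductances $1/(R_{j,i}+R_{i,j})$ are finite and bounded away from zero, so the Laplacian entries in Eqn.~\eqref{eq:laplacian}, and hence $A(x)=-M^{-1}(x)L(x)$, are uniformly bounded. It follows that $\Phi_k=e^{A(x_k)\Delta t_k}$ and $\Gamma_k$ are uniformly bounded (with $\Phi_k$ invertible), and choosing the design covariances $W_k,V_k$ to be uniformly positive definite and bounded secures (ii) and (iii).

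The crux is the uniform detectability in (i), which I would establish from the graph structure. Because the resistance network is a connected undirected weighted graph (Proposition~\ref{prop:graph}), $L(x)$ is symmetric positive semidefinite with zero row sums; its null space is spanned by the all-ones vector $\mathbf{1}_n$, and all other eigenvalues are strictly positive. Writing $A(x)=-M^{-1}L=-M^{-1/2}\bigl(M^{-1/2}LM^{-1/2}\bigr)M^{1/2}$ shows that $A(x)$ is similar to the symmetric negative-semidefinite matrix $-M^{-1/2}LM^{-1/2}$; hence $A(x)$ has real, non-positive eigenvalues, exactly one of which is zero (with right eigenvector $\mathbf{1}_n$). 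Passing to discrete time, $\Phi_k=e^{A(x_k)\Delta t_k}$ has all eigenvalues in the closed unit disk, a single eigenvalue equal to $1$ (eigenvector $\mathbf{1}_n$), and the remaining eigenvalues strictly inside. The only mode that is not strictly stable is therefore $\mathbf{1}_n$, and it is observable: since $C$ is a nonzero binary selection matrix (at least one state is measured), $C\mathbf{1}_n\neq 0$. Thus the unobservable subspace contains only strictly stable modes, so $(C,\Phi_k)$ is detectable at every frozen instant.

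The part requiring the most care is upgrading this pointwise argument to \emph{uniform} detectability, since the boundedness theorem needs a single constant valid across the whole frozen-time family. Here I would exploit the fact that the critical structure is weight-independent: the zero eigenvalue of $L(x)$ is exactly zero and its eigenvector $\mathbf{1}_n$ is fixed for all $x$, so the marginal mode is seen by $C$ with a gain independent of $x$, while the second-smallest eigenvalue of $L(x)$ (the algebraic connectivity) is bounded below by a positive constant uniformly because the edge weights are uniformly bounded below on the operating range. Together these yield a uniform lower bound on the detectability Gramian, and hence uniform detectability. Invoking \cite{beikzadeh_exponential_2012} with (i)--(iii) verified then gives the desired boundedness of $P_{k|k}$. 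I note that the generalization claimed in Section~\ref{sec:intro} follows identically: any strongly connected weighted-graph heat-transfer model produces an $A(x)$ of the same structure, which is uniformly detectable given at least one measurement.
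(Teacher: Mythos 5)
Your proof is correct in substance and rests on the same two structural facts as the paper's: connectivity of the resistance graph makes $\operatorname{span}\{\mathbf{1}_n\}$ the unique non-decaying mode of $A(x)$ (hence the unique unit-modulus mode of $\Phi_k$), and any nonzero selection matrix satisfies $C\mathbf{1}_n\neq 0$, so that mode is observed. The route differs, though. The paper never argues frozen-time spectra: it invokes the time-varying consensus result for continuously connected parameter-varying graphs (Lemma~\ref{lemma:sys_cons}, from \cite{cao_consensus_2011}) to conclude that the transition product $\Phi_{k+q|k}$ contracts every direction outside $\operatorname{span}\{\mathbf{1}_n\}$, then checks the Anderson--Moore detectability definition on that single remaining direction; uniform detectability alone then bounds the Riccati iterate by \cite{anderson_detectability_1981}, so your conditions (ii)--(iii) are not needed for this proposition (though spelling them out is a useful service the paper omits). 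You instead prove pointwise detectability by a PBH/spectral argument ($A(x)$ similar to $-M^{-1/2}LM^{-1/2}$, simple zero eigenvalue with the fixed eigenvector $\mathbf{1}_n$) and then appeal to a uniform lower bound on the algebraic connectivity. The one step you should tighten is the passage from frozen-time spectra to contraction of the time-varying product $\Phi_{k+q-1}\cdots\Phi_k$: Schur-ness of each factor on the complement of $\mathbf{1}_n$ does not by itself bound the product, because the stable eigenspaces rotate as $M(x_k)$ and $L(x_k)$ vary. The repair is exactly what the paper outsources to the consensus lemma---each $\Phi_k$ is row-stochastic with entries uniformly bounded below under your weight bounds, so the product contracts the disagreement seminorm at a uniform rate---or, equivalently, a common Lyapunov argument on the complement of $\mathbf{1}_n$. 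With that closed, the two arguments coincide.
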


\begin{proof}
For the discrete-time system of Eqn.\ \eqref{eq:discrete_sys}, the SDRE error covariance is bounded if the pair $\left(\Phi_k, C\right)$ is uniformly detectable \cite{anderson_detectability_1981}. The TES model given by Eqn.\ \ref{eq:TES_ss} represents a \revres{continuously connected undirected graph since $A_{i,j}(x) >0 ~\forall t$ for each existing edge  $(i,j)$ where $i\neq j $. If a parameter-varying undirected graph is continuously connected $\forall t$, the graph implements consensus, and is, hence, uniformly detectable from any node (see the Appendix). Furthermore, the pair $\left(\Phi_k, C\right)$ is uniformly detectable for any $C$ with at least
one non-zero row sum (see Lemma 3 in the Appendix).} Therefore, it follows that the error covariance of the state estimate will remain bounded.
\end{proof}

\begin{remark}
\textup{Proposition \ref{prop:bounded} can be generalized to show that any thermal energy storage device that can be spatially discretized into a contiguous lattice of control volumes (that is, the thermal resistance between any two control volumes is finite) is uniformly detectable given that at least one state measurement exists.  In other words, when the SDRE filter is used for state estimation, boundedness of the state estimate error covariance is guaranteed.}
\end{remark}

\section{Experimental Testbed} \label{sec:setup}
The TES is integrated into the experimental single-phase thermal-fluid loop testbed shown in Fig.\ \ref{fig:testbed_photo}.  The testbed has four TES modules arranged in series, each with a capacity of approximately 25.6 kJ.  
Each module uses a separate state estimator with a model tuned to the specific module's parameters, but in this work, we will show validation of the state estimator on only the first module in the series. 
\begin{figure}[tbp]
    \centering
    \begin{subfigure}{\columnwidth}
        \includegraphics[width=3.1in]{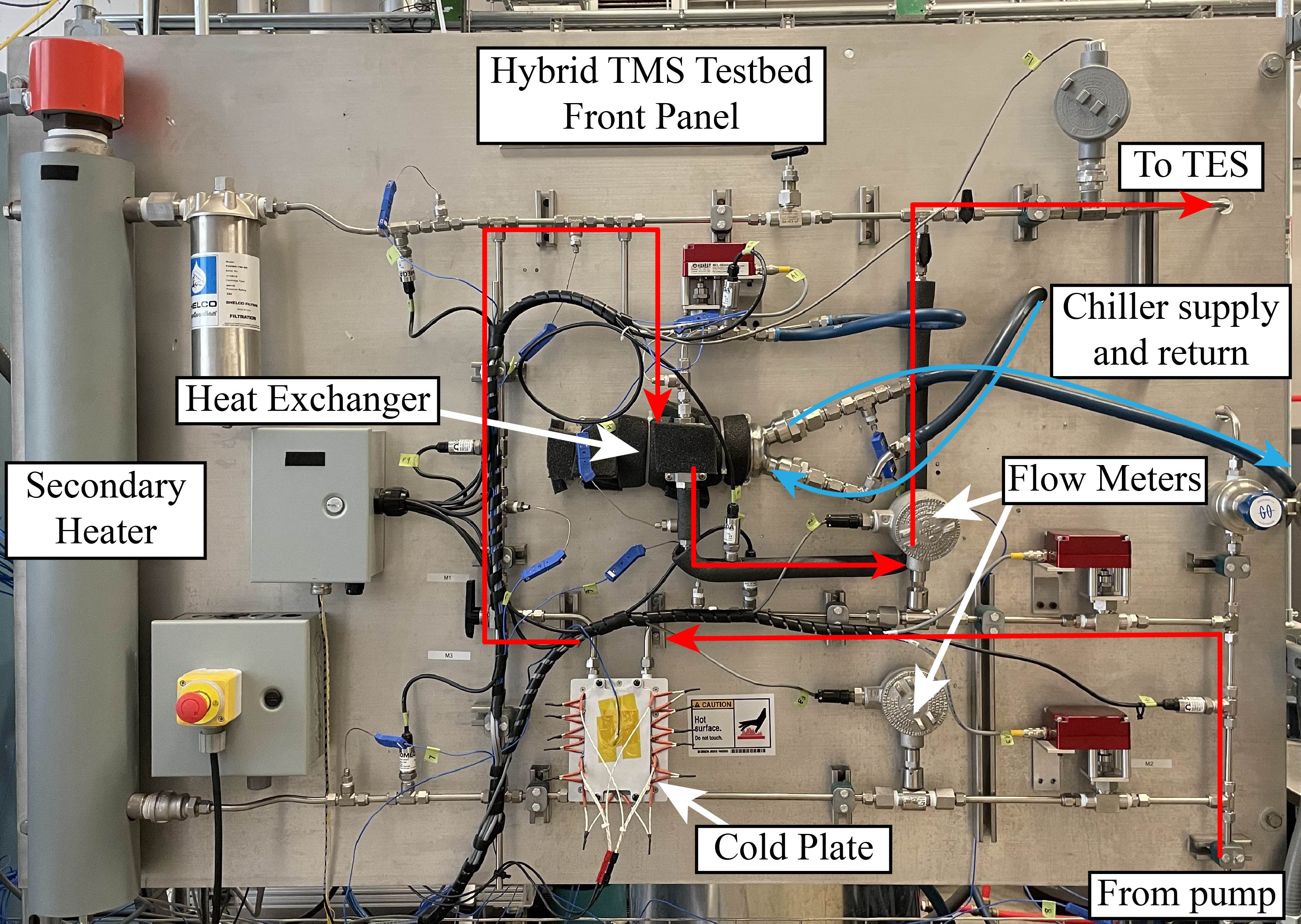}
        \caption{}
        \label{fig:front_panel}
    \end{subfigure}
    
    \begin{subfigure}{\columnwidth}
        \includegraphics[width=3.1in]{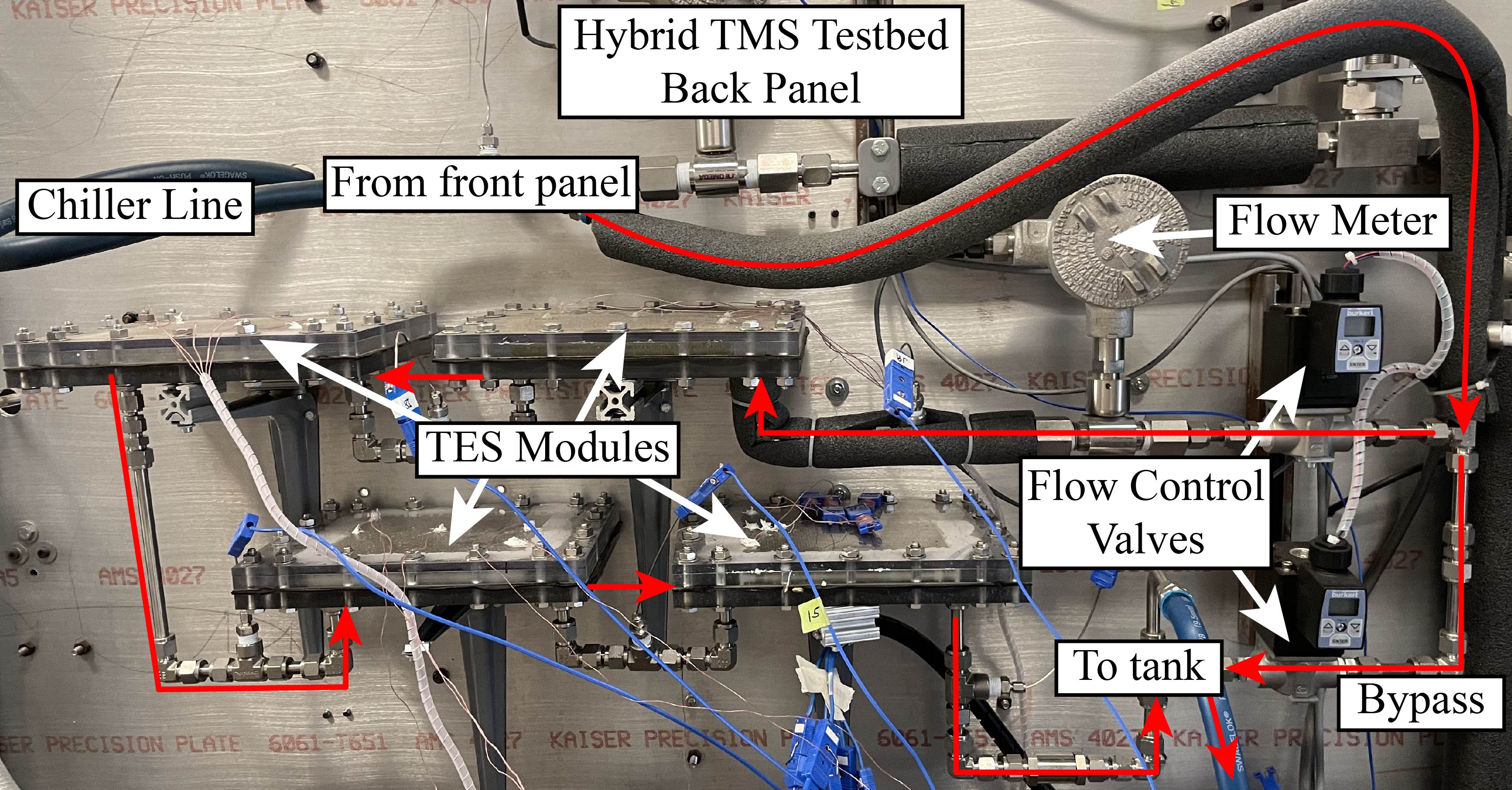}
        \caption{}
        \label{fig:back_panel}
    \end{subfigure}
    \caption{Images of the experimental single-phase thermal-fluid testbed with thermal energy storage.}
    \label{fig:testbed_photo}
\end{figure}

Figure \ref{fig:tms_diag} shows a simplified diagram of the thermal management system; a pump circulates the working fluid (water) though the primary fluid loop with a variable flow rate up to 0.183 kg/s.  Heat is added through a 6 kW 600 VDC electrical resistive heater mounted to an Advanced Thermal Solutions ATS-CP-1002 cold plate. The primary mode of heat rejection is a shell-and-tube heat exchanger cooled by a secondary chilled water loop from a Neslab HX300-DD 10 kW chiller.  A pair of Burkert Type 2875 variable position solenoid valves control the mass flow rate through the TES. The four modules are custom designed and fabricated for the testbed, and together they provide up to 2 kW of heat rejection. 
\begin{figure}[tbp]
    \centering
    \includegraphics[width=3.1in]{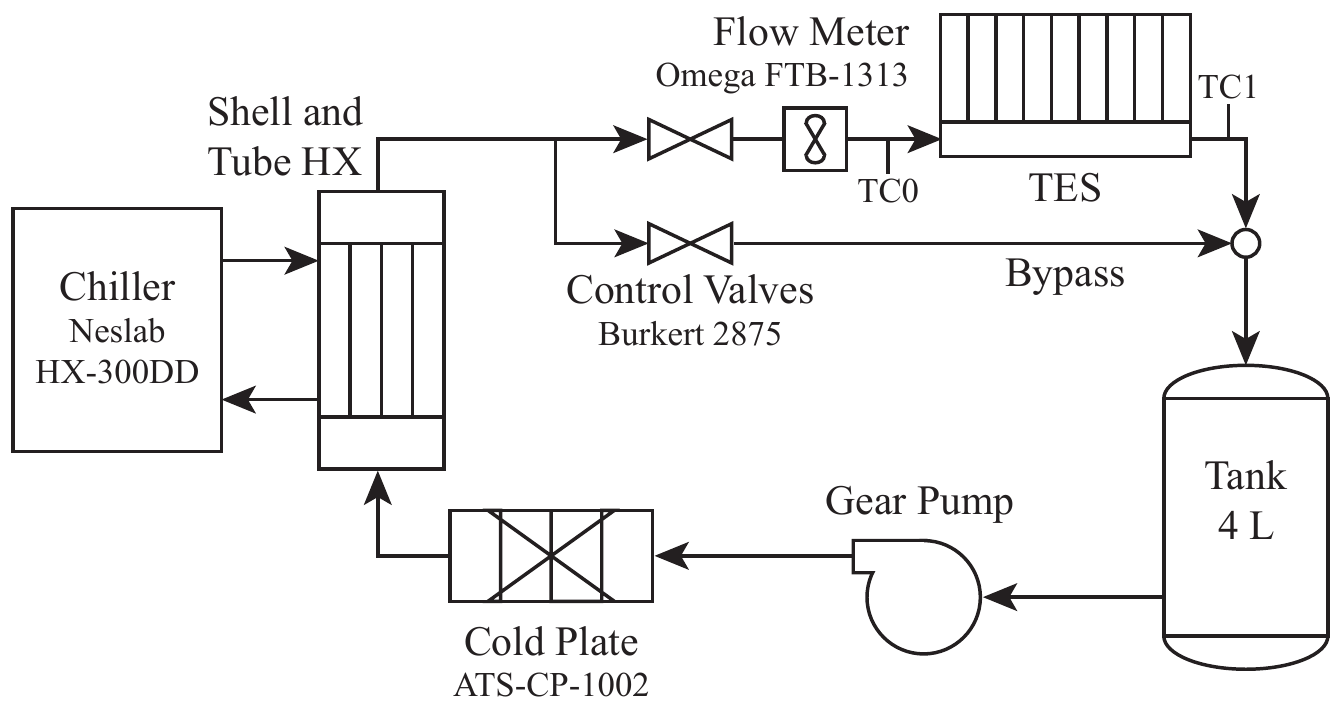}
    \caption{Simplified schematic of the experimental thermal management system.}
    \label{fig:tms_diag}
\end{figure}

Temperature measurements of the PCM layer in the TES are provided by five type T thermocouples bonded to the PCM side of the plate, labeled TC2a, TC2b, TC3, TC4a, and TC4b, in the pattern shown in Fig.\ \ref{fig:TES_3D_labeled}. TC2a and TC2b are placed at the same distance along the length of the module, and likewise for TC4a and TC4b.  Type T thermocouples labeled TC0 and TC1 in the fluid channel measure the inlet and outlet fluid temperature, respectively.  Not shown in Fig.\ \ref{fig:TES_3D_labeled} is an Omega FTB-1313 turbine flow meter upstream of the TES that measures the mass flow rate of the working fluid.
\begin{figure}[tbp]
    \centering
    \includegraphics[width=3.1in]{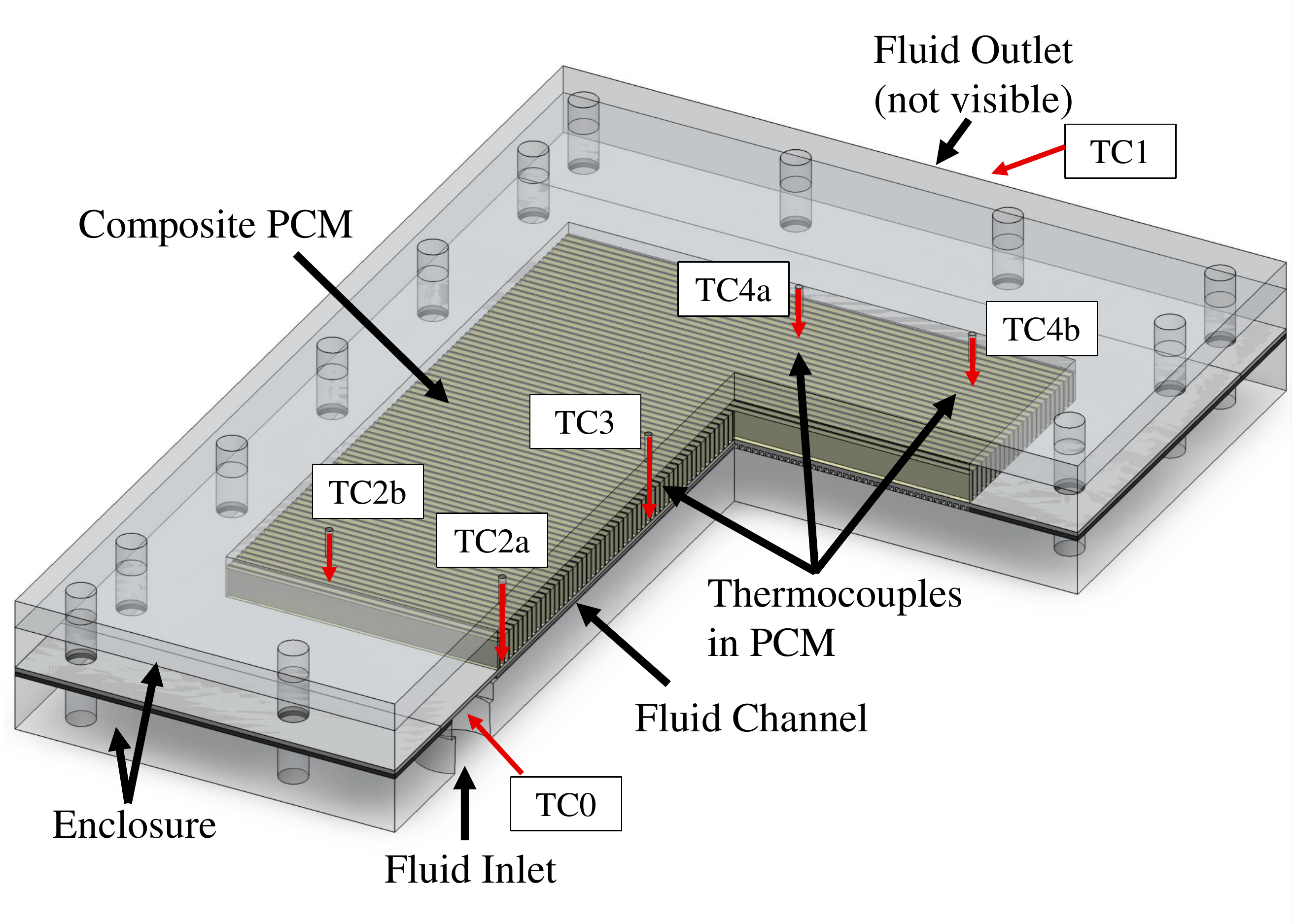}
    \caption{Section view of a 3D model of the TES module.  Thermocouples TC0 and TC1 measure the fluid inlet and outlet temperatures, respectively, and TC2a, TC2b, TC3, TC4a, and TC4b are embedded in the PCM.}
    \label{fig:TES_3D_labeled}
\end{figure}

The thermocouples are sampled at regular intervals to produce the measurement output vector $y_k$.  Thermocouple TC0 measures $T_{in}$, which is not a state but a parameter in the input matrix $B(x)$. Thermocouple TC1 measures the output from control volume CV1; the fluid outlet temperature is assumed to be equal to the temperature of the fluid in CV1.  The five thermocouples in the PCM layer lie within three control volumes (see Fig.\ \ref{fig:TC_CV_labels}). Thermocouples TC2a and TC2b lie within the same control volume, so measurements from these thermocouples are averaged to provide one output measurement, TC2, for the control volume labeled CV2. Thermocouple TC3 measures the output for control volume CV3. Measurements from thermocouples TC4a and TC4b are also averaged to provide a single output measurement, TC4, for control volume CV4. 
\begin{figure}[tpb]
    \centering
    \includegraphics[width=3.1in]{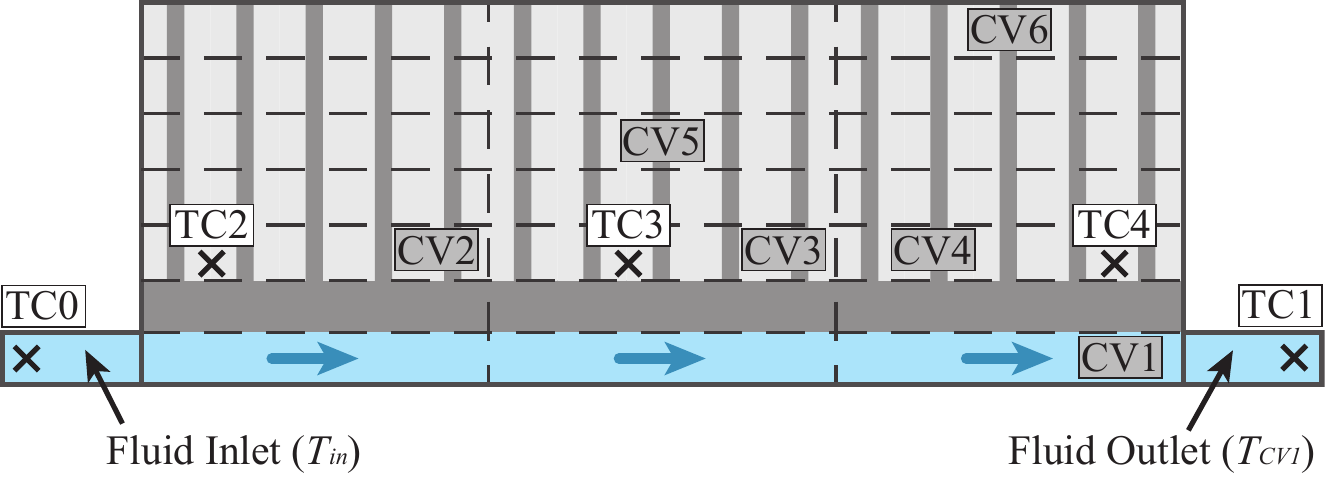}
    \caption{Locations of thermocouples TC0-4 relative to the control volume grid, denoted by $\times$ marks. Selected control volumes used for validation are also labeled CV1-6.}
    \label{fig:TC_CV_labels}
\end{figure}

During experimental tests, the SDRE filter is deployed on a National Instruments PXIe-8820 embedded controller with an Intel Celeron 1020E 2.20 GHz dual-core processor.  Data aquisition peripherals include a NI PXI-6225 16-bit analog input module for reading the flow meter and a NI PXIe-4353 24-bit thermocouple module.  Additionally, experimental data is recorded so the state estimator can be simulated offline for validation.  A custom LabVIEW VI manages the data acquisition, executes the state estimator algorithm, and translates user input into voltage signals to operate the valves, heaters, and pump.


\revres{The thermocouple sensor noise variance is determined by repeatedly sampling a nearly constant temperature, fitting a linear regression model to the sampled data, and calculating the residual variance.}  Each thermocouple is assumed to have independent Gaussian white noise.  \revres{The sample variance of each thermocouple is calculated from a dataset containing 4000 samples, which yields a median value of $\sigma^2=0.0068 K^2$. Given the similarity in noise variance across the thermocouples (which is expected given that each is identical and purchased from the same manufacturer), a single value of $\sigma^2=0.007 K^2$ (to the nearest thousandth) is assumed for each thermocouple.}  Averaging the measurements from laterally adjacent thermocouples reduces the variance of these outputs to $\text{Var}\!\left[\frac{y_{a} + y_{b}}{2}\right]=\frac{\sigma^2}{2}=0.0035\,\textup{K}^2$.  Assuming the output is $y_k = \begin{bmatrix}y_{k,TC1} & y_{k,TC2} & y_{k,TC3} & y_{k,TC4}\end{bmatrix}^\top$, the measurement noise covariance matrix is given in Eqn.\ \eqref{eq:meas_noise}.  The process noise covariance is not as easily characterized, so we assume it is negligible; Eqn.\ \eqref{eq:proc_noise} defines the matrix $W_k$ just large enough to ensure that $P_{k+1|k}$ remains positive definite.  
\begin{subequations}
\begin{gather}
    V_k = \begin{bmatrix}
        0.007 & 0 & 0 & 0\\ 0 & 0.0035 & 0 & 0\\0 & 0 & 0.007 & 0 \\ 0 & 0 & 0 & 0.0035
    \end{bmatrix}\label{eq:meas_noise}\allowdisplaybreaks\\
    W_k = 10^{-7}I_{n}\label{eq:proc_noise}
\end{gather}
\end{subequations}

\section{Testing and Validation}\label{sec:testing}

The objective of the following case studies is to validate the SDRE filter by demonstrating that the errors in the estimated temperatures and state of charge converge near zero and remain bounded.  As it is not possible to validate the SOC estimate experimentally with only five thermocouples embedded in PCM layer of the TES module, we first test the estimator using the high-fidelity simulation model validated in \cite{gohil_reduced-order_2020} and \cite{shanks_design_2022}. However, in order to enable a comparison with the experimental validation and draw conclusions about those estimation errors that are due to modeling error, we use experimental data as inputs to the simulation model, as discussed in Section \ref{sec:exp_data}. The simulation-based case study in Section \ref{sec:sim_testing} is followed by validation of the estimator's convergence and boundedness on the experimental testbed as discussed in Section \ref{sec:exp_testing}. In Section \ref{sec:sample_rate}, we investigate the effect of increased and reduced sample rates by comparing 80 samples per second (S/s), 10 S/s, 1 S/s, and 0.2 S/s.  Although the SDRE filter is capable of real-time estimation on the testbed, we perform experimental validation offline so that multiple filter configurations can be tested using the same dataset.

\subsection{Experimental Data Collection}\label{sec:exp_data}

We first collect a dataset containing temperature measurements from the seven thermocouples and a measurement of the mass flow rate on the experimental testbed at the maximum sample rate of 80 S/s.  The experimentally-measured mass flow rate and fluid inlet temperature are used as inputs for the simulation-based testing described in Section \ref{sec:sim_testing}, whereas the entire experimental dataset is used for experimental validation of the estimator in Section \ref{sec:exp_testing}. The dataset is downsampled when testing the SDRE filter with slower sample rates.  The mass flow rate through the TES is controlled by manually opening and closing the flow control valves through the LabVIEW VI. Fig.\ \ref{fig:mass_flow_rate} shows the measured mass flow rate through the TES.  Fig.\ \ref{fig:fluid_in_out} shows the temperatures of the fluid entering and leaving the TES measured by TC0 and TC1, respectively. Note that periods of zero flow rate were intentionally introduced as it would be reasonable for the TES to only be charged or discharged intermittently.  Conductive heat transfer still occurs inside the TES even when the mass flow rate is zero, so the state estimator must still work during these periods to track the changing temperature distribution.  In the subsequent figures, blue shading denotes periods of zero mass flow rate when this information is relevant to the result presented in the figure.
\begin{figure}[tpb]
    \includegraphics[width=3.1in]{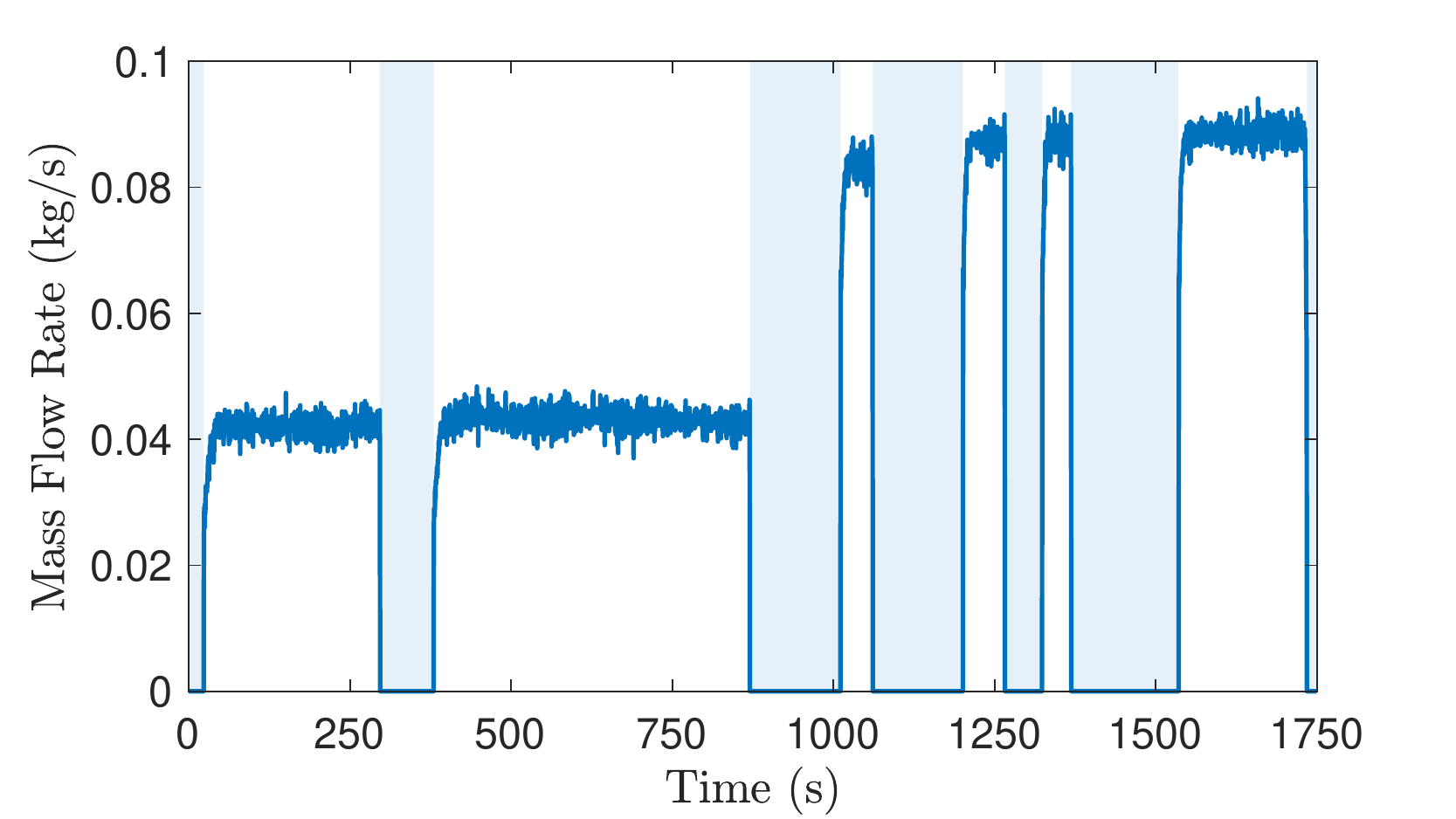}
    \caption{Measured mass flow rate through the TES. Blue shading in this figure and subsequent figures highlights periods of zero mass flow rate.}
    \label{fig:mass_flow_rate}
\end{figure}
\begin{figure}[tpb]
    \includegraphics[width=3.1in]{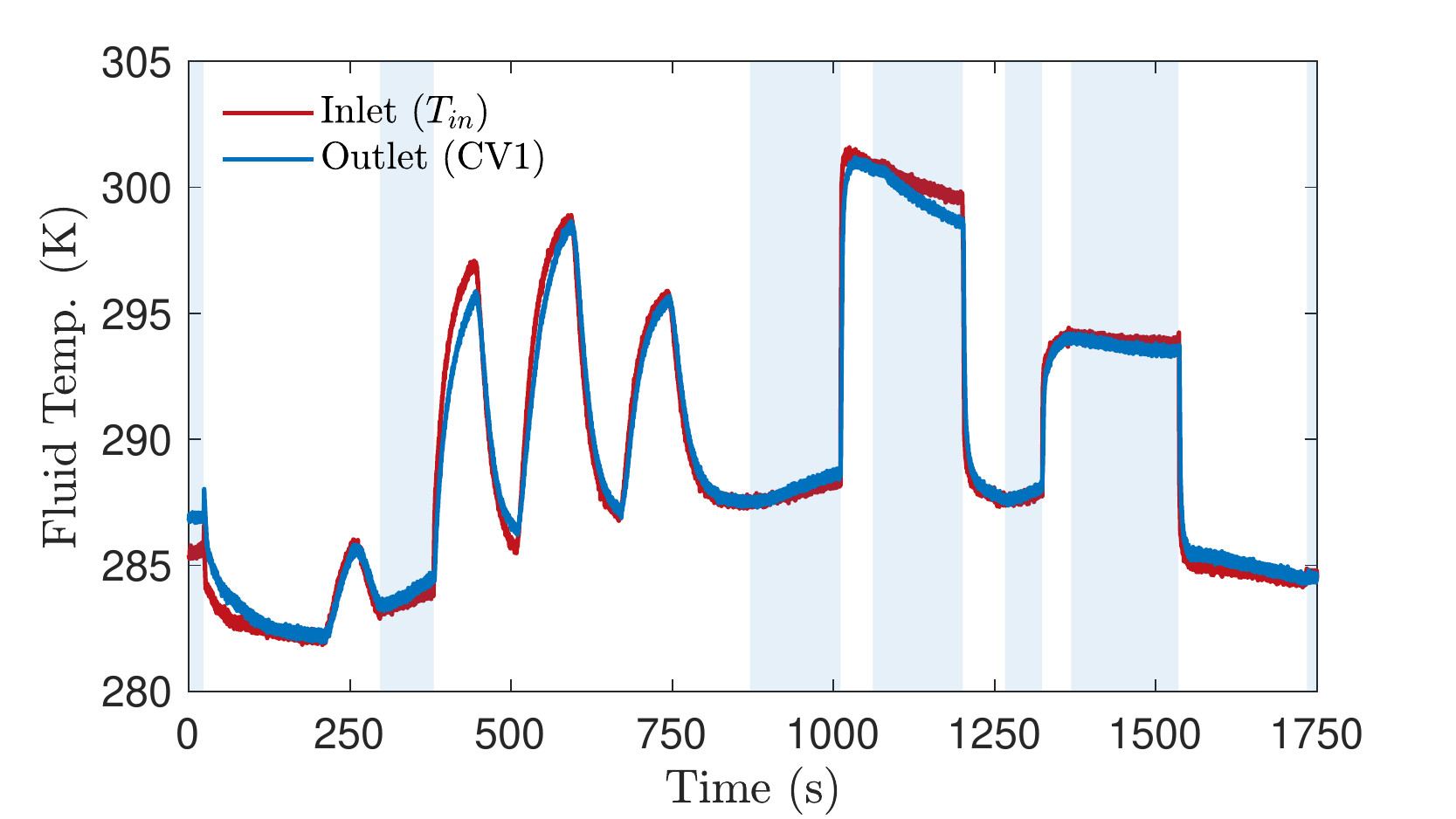}
    \caption{Inlet and outlet fluid temperature measurements.}
    \label{fig:fluid_in_out}
\end{figure}

In Section \ref{sec:sample_rate}, we investigate the effect of sample rate on the accuracy of the estimator. To make an unbiased comparison between different sample rates, the prediction equations given in Eqn.\ \eqref{eq:prop} are executed with the time step $\Delta t_k = 0.0125$~s so that the state estimator model is identical for all sample rates.  The update step given in Eqns.\ \eqref{eq:kalman_gain} and \eqref{eq:update} is performed once every 1, 8, 80, or 400 time steps depending on whether the sample rate is 80 S/s, 10 S/s, 1 S/s or 0.2 S/s, respectively.  Note that the filters receiving measurements at 10 S/s, 1 S/s, and 0.2 S/s are continuous-discrete filters, but the filter receiving measurements at 80~S/s is a discrete filter since the prediction and update steps are executed at the same rate. 

\subsection{Simulation-Based Testing}\label{sec:sim_testing}

Using the experimentally-measured mass flow rate and fluid inlet temperature (TC0) as inputs, we simulate the high-fidelity simulation model \cite{gohil_reduced-order_2020,shanks_design_2022} to generate a \emph{simulated} dataset containing the SOC and temperatures of all control volumes. Note that this model has the same finite volume structure as the model used by the estimator albeit with a finer spatial discretization of $n_x = 21$ and $n_y = 22$;\revres{ the finer discretization was chosen so that the simulation model is an accurate representation of the experimental system.  In other words, differences in accuracy between the simulation and the model used for state estimation are representative of the differences between the experimental TES and the model used for state estimation.  In the model used for state estimation, low computation time is prioritized over model accuracy}. The simulation is executed in MATLAB using the \texttt{ode23} solver \cite{shampine_matlab_1997}. The simulation calculates the temperature distribution inside the TES, converts the high-fidelity temperature distribution to a set of temperatures corresponding to the state estimator model's control volumes, and generates ``thermocouple measurements'' with a sample rate of 10 S/s by adding Gaussian noise to the temperatures of those control volumes that map to the measurement locations on the experimental TES module at discrete time steps. These simulated thermocouple measurements comprise the measurement vector $y_k$ for the state estimator; the thermocouple locations and the labels of the corresponding control volumes are given in Fig.\ \ref{fig:TC_CV_labels}.  We then test the state estimator on the simulated dataset, using simulated values of the thermocouple measurements, to establish convergence of the SOC estimate. 

First, we verify that the temperature estimates of unmeasured control volumes converge toward the temperatures calculated by the simulation, and that the estimation errors remain bounded.  In this case study, the state estimator receives the full measurement output vector at a rate of 10 S/s.  Fig.\ \ref{fig:TCall_sim_a} compares the estimated temperatures of the control volumes labeled CV5 and CV6 (see Fig.\ \ref{fig:TC_CV_labels} for the locations of these control volumes) to the corresponding simulated temperatures.  The gray shading represents the range of temperatures over which the latent heat of fusion is approximated by a large increase in specific heat. Since it is difficult to distinguish the estimates from the simulated values in Fig.\ \ref{fig:TCall_sim_a}, we show the estimation errors in Fig.\ \ref{fig:TCall_sim_b}.  Note that all estimation errors are calculated as $e(t_k)=\hat{x}(t_k)-x_{sim}(t_k)$ where $x_{sim}$ is a state variable calculated by the simulation. 
\begin{figure}[tb]
    \centering
    \begin{subfigure}[b]{\columnwidth}
        \includegraphics[width=3.1in]{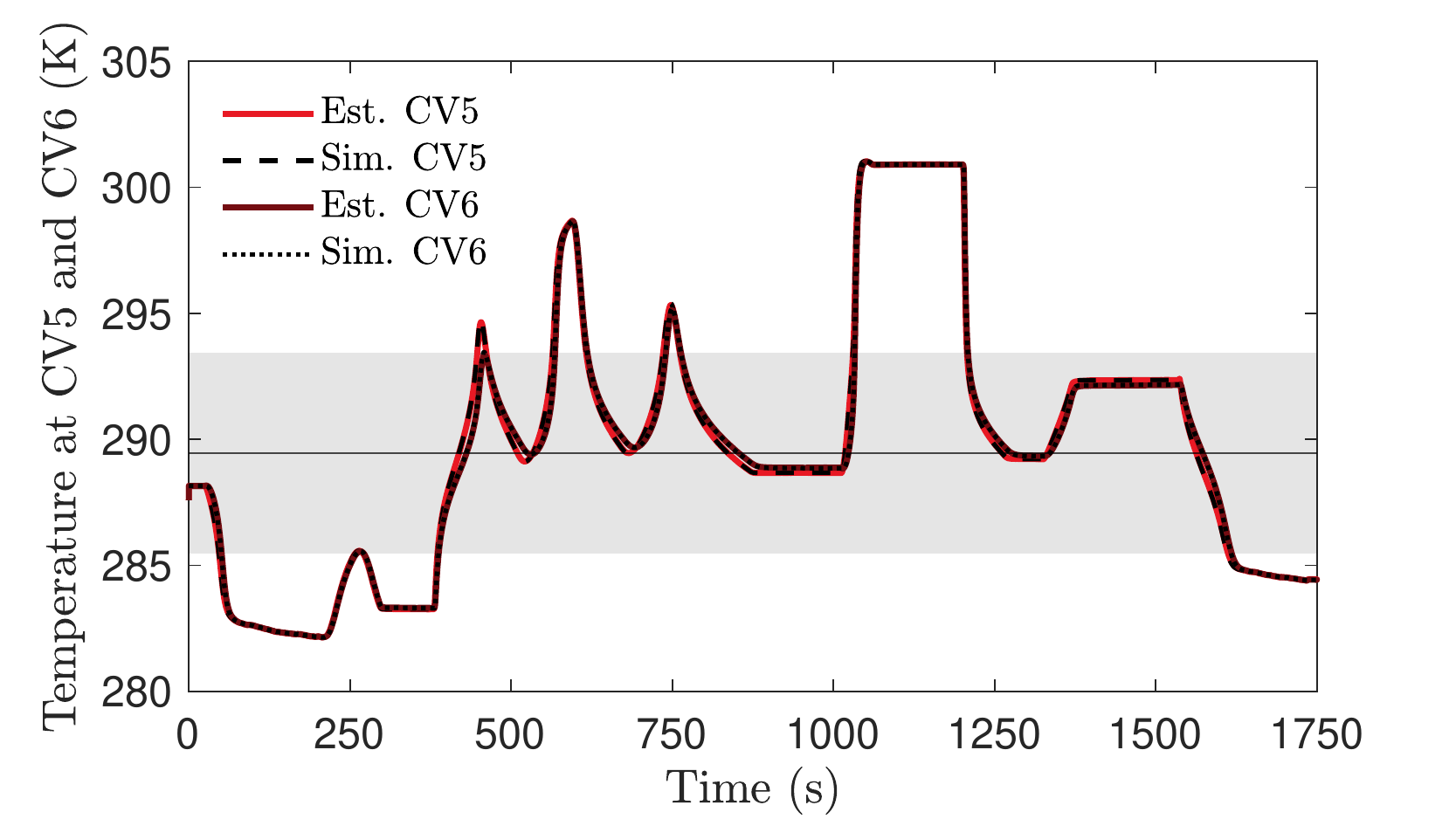}
        \caption{}
        \label{fig:TCall_sim_a}
    \end{subfigure}
    \begin{subfigure}[b]{\columnwidth}
        \includegraphics[width=3.1in]{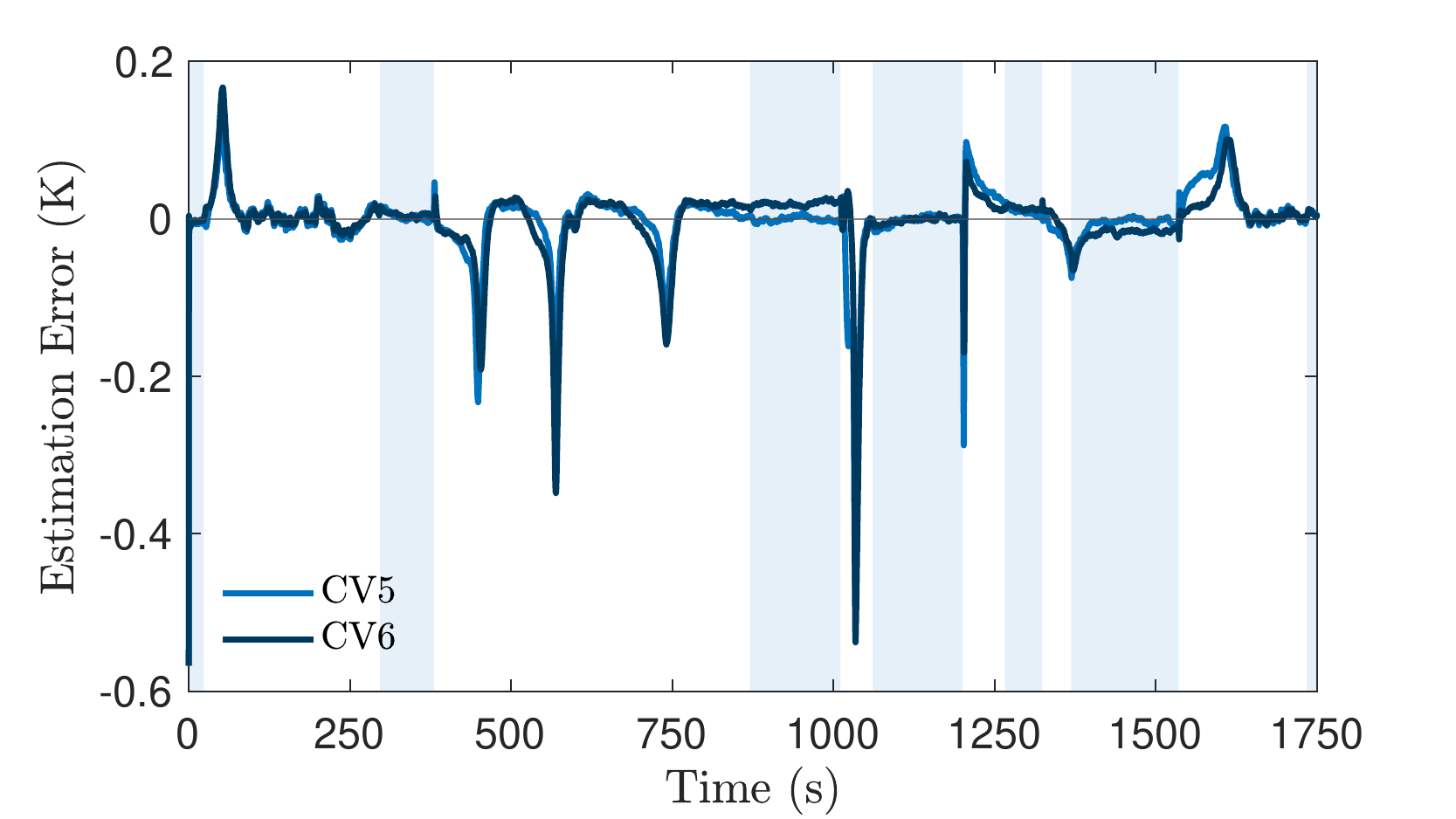}
        \caption{}
        \label{fig:TCall_sim_b}
    \end{subfigure}
    \caption{(a) Estimated temperatures (red) and simulated temperatures (black) at CV5 and CV6. Gray shading denotes the range of temperatures corresponding to the latent region, and the black line is the melting point of the PCM. (b) Estimation errors for CV5 and CV6.}
    \label{fig:TCall_sim}
\end{figure}

Even though the temperatures of CV5 and CV6 are not included in the output vector $y_k$, the SDRE filter is able to track these temperatures with an estimation error magnitude less than 0.6 K. During periods when the temperatures are changing slowly, the estimation error magnitude is less than 0.05 K, which is less than one standard deviation of the measurement noise ($\sqrt{0.0035\textup{ K}^2} = 0.059\textup{ K}$).  In fact, the largest estimation errors only occur during periods of phase change; the larger negative transient estimation errors correspond to periods when the PCM is melting, suggesting that the SDRE filter introduces a small phase lag between the estimates and the true values.  This claim is supported by the observation that the estimation error has a larger negative value when the temperatures change rapidly, such as at $t = 1050$ seconds. In turn, a faster change in temperature results in more phase lag, which increases the transient estimation error.  There are fewer periods of large positive error because the PCM only fully solidifies at the start ($t=50$ seconds) and end ($t=1650$ seconds) of the dataset; this means that the largest transient errors are expected when the experimental TES completes a phase change and begins to experience sensible heating while the state estimates lag behind in the latent temperature range.

\revres{ The primary motivation for using the SDRE as compared to other nonlinear filters is its suitability for the application at hand, namely, SOC estimation for latent TES systems with graph-based models.  Nevertheless, comparison between the SDRE filter, the extended Kalman filter (EKF), and the unscented Kalman filter (UKF) suggests that the SDRE filter achieves the lowest estimation error, as shown in Fig.\ \ref{fig:CV5_comp}, which compares the estimation errors at CV5 for the three state estimators.  Additionally, the two Kalman filters have large estimation errors during the same periods when the SDRE filter has a large estimation error, so we can conclude that these errors result from limitations in the model used for state estimation and not from the choice of state estimator. 
}
\begin{figure}[tb]
    \centering
    \includegraphics[width=3.1in]{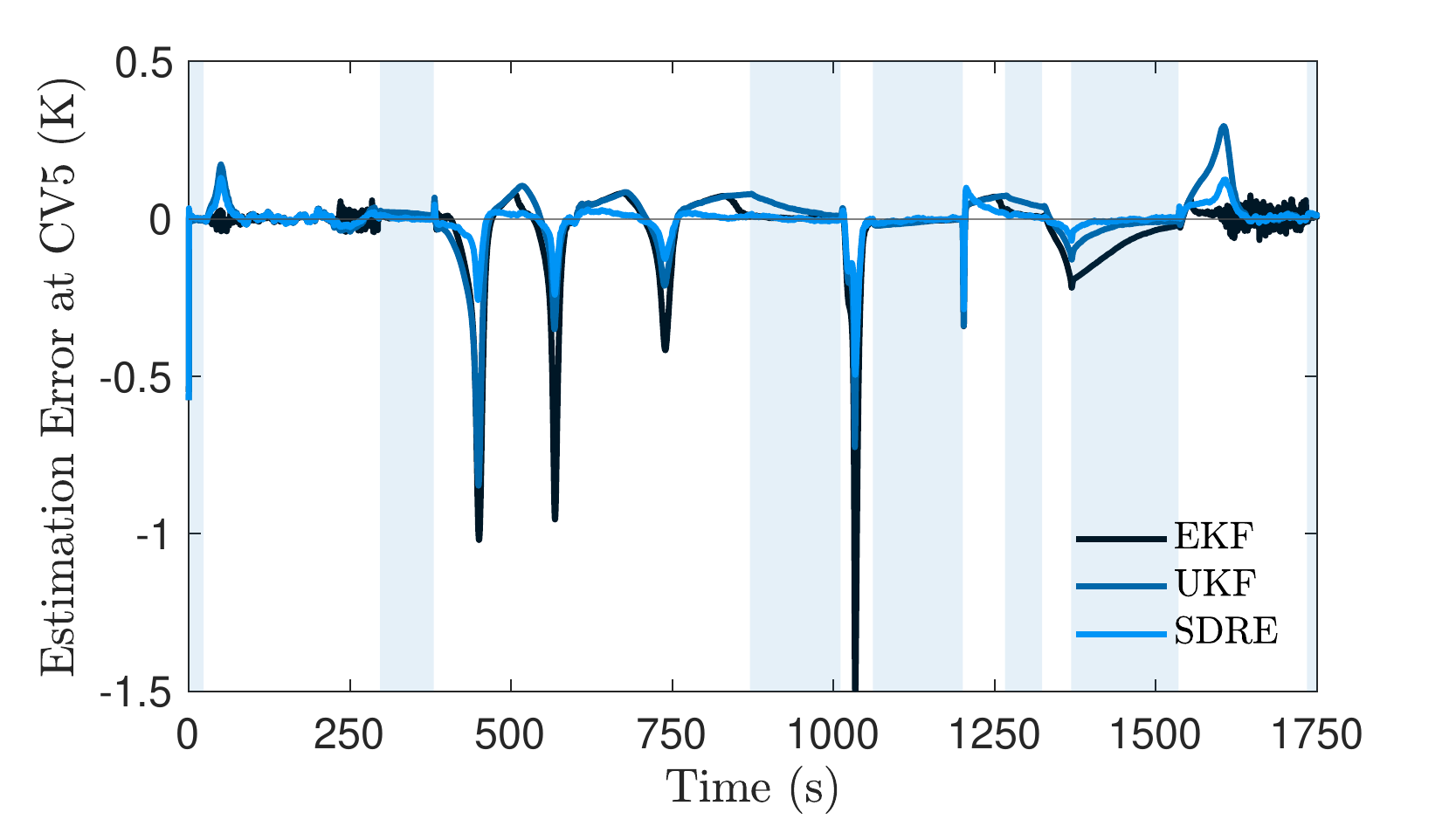}
    \caption{\revres{Comparison of estimation error at CV5 for the EKF, the UKF, and the SDRE filter.}}
    \label{fig:CV5_comp}
\end{figure}

Next, we quantify the temperature estimation error when the simulated measurement from (i) TC1 or (ii) TC3 is withheld from the state estimator. Note that when testing the state estimator on experimental data (see Section \ref{sec:exp_testing}), these thermocouples are withheld to be used for validation.  Fig.\ \ref{fig:TC1_sim_a} shows the temperatures at CV1, calculated by the simulation and estimated by the SDRE filter, when TC1 is withheld, and Fig.\ \ref{fig:TC1_sim_b} shows the corresponding estimation error.  As expected, the estimation error is more substantial when an output is removed; the error magnitude tends to be less than 0.5 K when the control volume temperature is changing slowly, but it increases to nearly 2 K when the temperature changes suddenly, as it does at $t=1050$ seconds or $t=1200$ seconds.  The steady-state estimation errors from $t=850$ to $t=1000$ seconds and $t=1350$ to $t=1550$ seconds are due to the zero mass flow rate through the TES.  Since advection is the dominant form of heat transfer in the fluid channel, when advection is removed, heat transfer in the fluid control volumes is greatly reduced.  This causes the temperature estimate to converge more slowly. 
\begin{figure}[tb]
    \centering
    \begin{subfigure}[b]{\columnwidth}
        \includegraphics[width=3.1in]{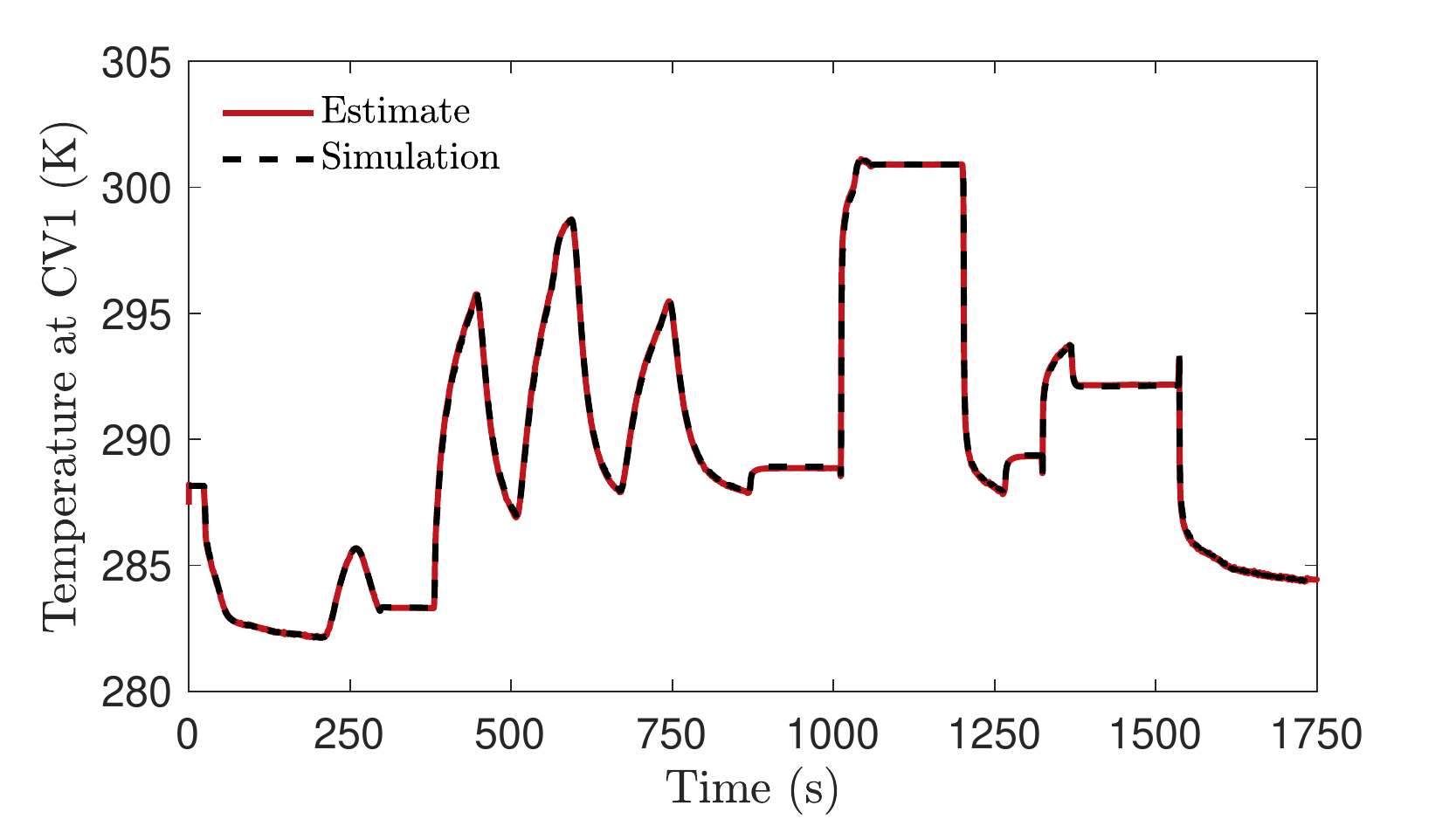}
        \caption{}
        \label{fig:TC1_sim_a}
    \end{subfigure}
    \begin{subfigure}[b]{\columnwidth}
        \includegraphics[width=3.1in]{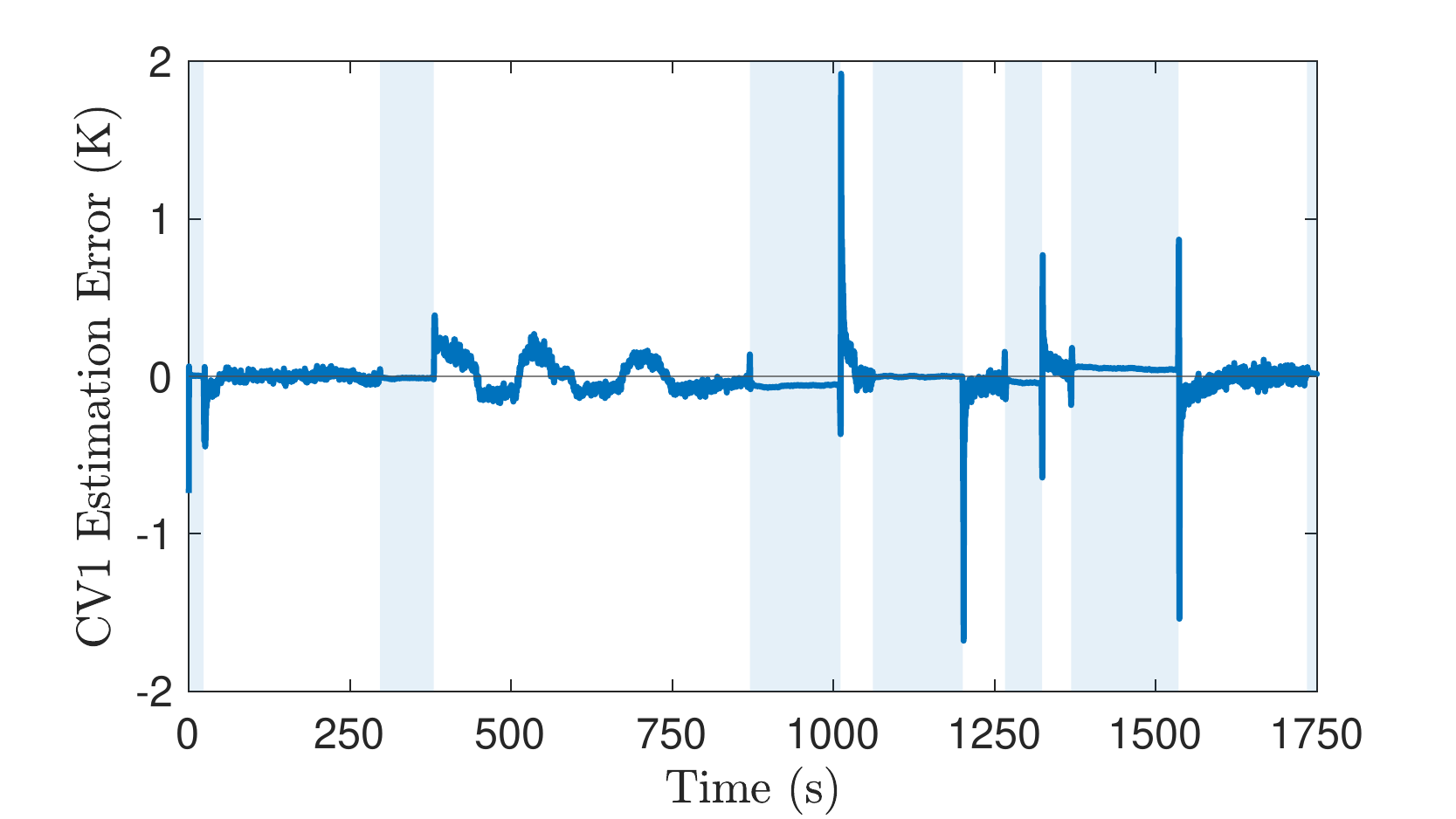}
        \caption{}
        \label{fig:TC1_sim_b}
    \end{subfigure}
    \caption{(a) Simulated temperature at CV1 and the estimated temperature for CV1. (b) Estimation error.}
    \label{fig:TC1_sim}
\end{figure}

Fig.\ \ref{fig:TC3_sim_a} shows that the estimated temperature for control volume CV3 still tracks the simulated temperature for this control volume closely when TC3 is withheld from the estimator; Fig.\ \ref{fig:TC3_sim_b} shows that the error tends to be largest near the melting point and during periods of zero mass flow rate.  The estimation error magnitude stays within 1 K, but it only converges to zero outside the latent temperature range.  Fig.\ \ref{fig:RMSE_sim} shows the root-mean-square error (RMSE) averaged over all $n$ control volumes at time step $t_k$, given in Eqn.\ \eqref{eq:RMSE}.  In this figure, the RMSE is shown for the state estimator with full access to all measurements, the state estimator with access to all measurements except TC1, and the state estimator with access to all measurements except TC3.  The RMS errors of the estimator with all measurements and the estimator without TC1 remain below 0.4 K (except at $t=0$ seconds, when the state estimates are initialized) and are nearly indistinguishable; this indicates that the measurement from TC1 has a negligible effect on the estimates of the other control volumes in the model.  Removing TC3 slightly increases the RMSE during phase changes and periods of zero mass flow rate.  Given that this sensitivity to TC3 is not a result of the system model itself, we verify that it occurs when testing the estimator on the experimental dataset in Section \ref{sec:exp_testing}.
\begin{figure}[tb]
    \centering
    \begin{subfigure}[b]{\columnwidth}
        \includegraphics[width=3.1in]{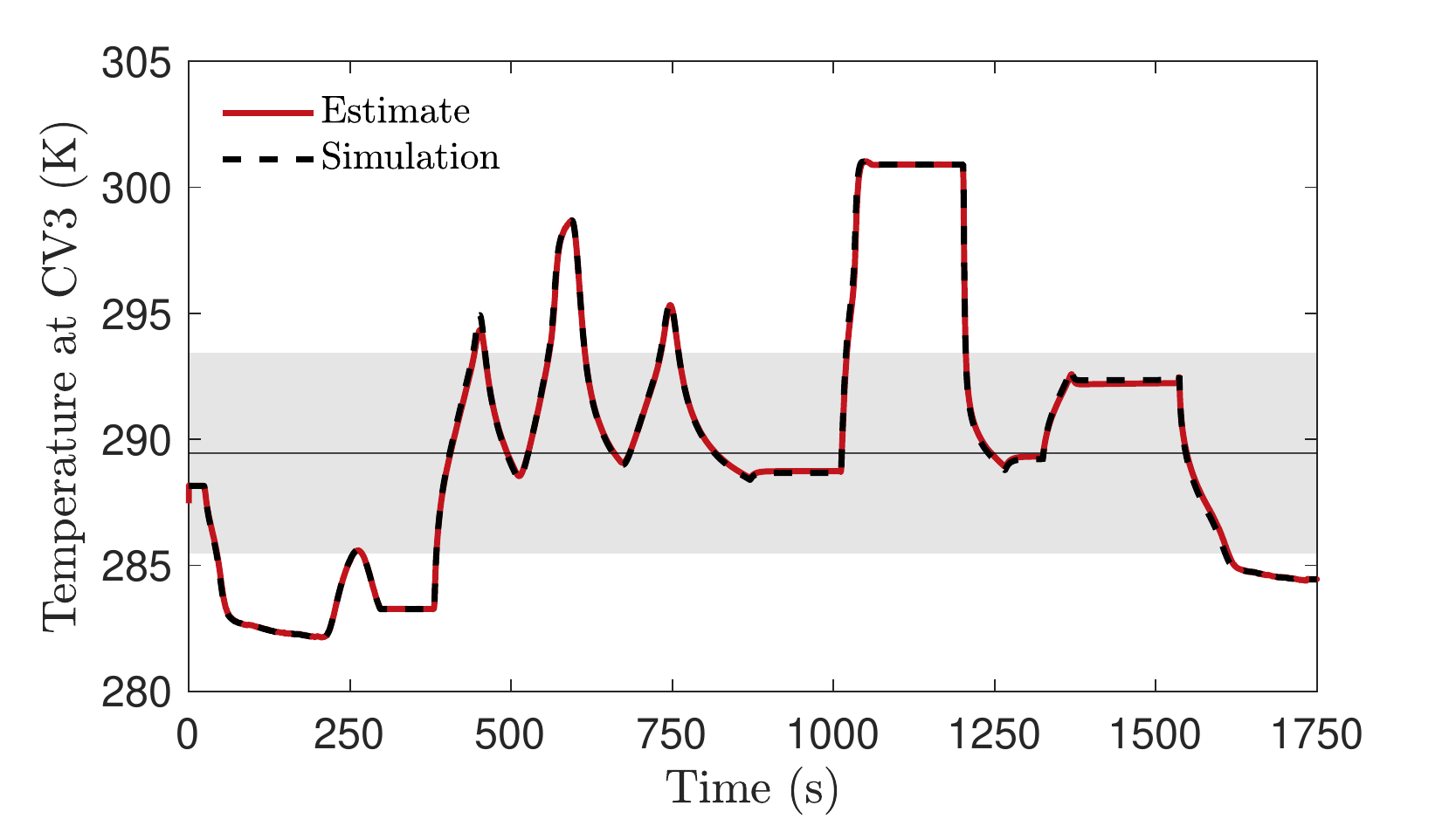}
        \caption{}
        \label{fig:TC3_sim_a}
    \end{subfigure}
    \begin{subfigure}[b]{\columnwidth}
        \includegraphics[width=3.1in]{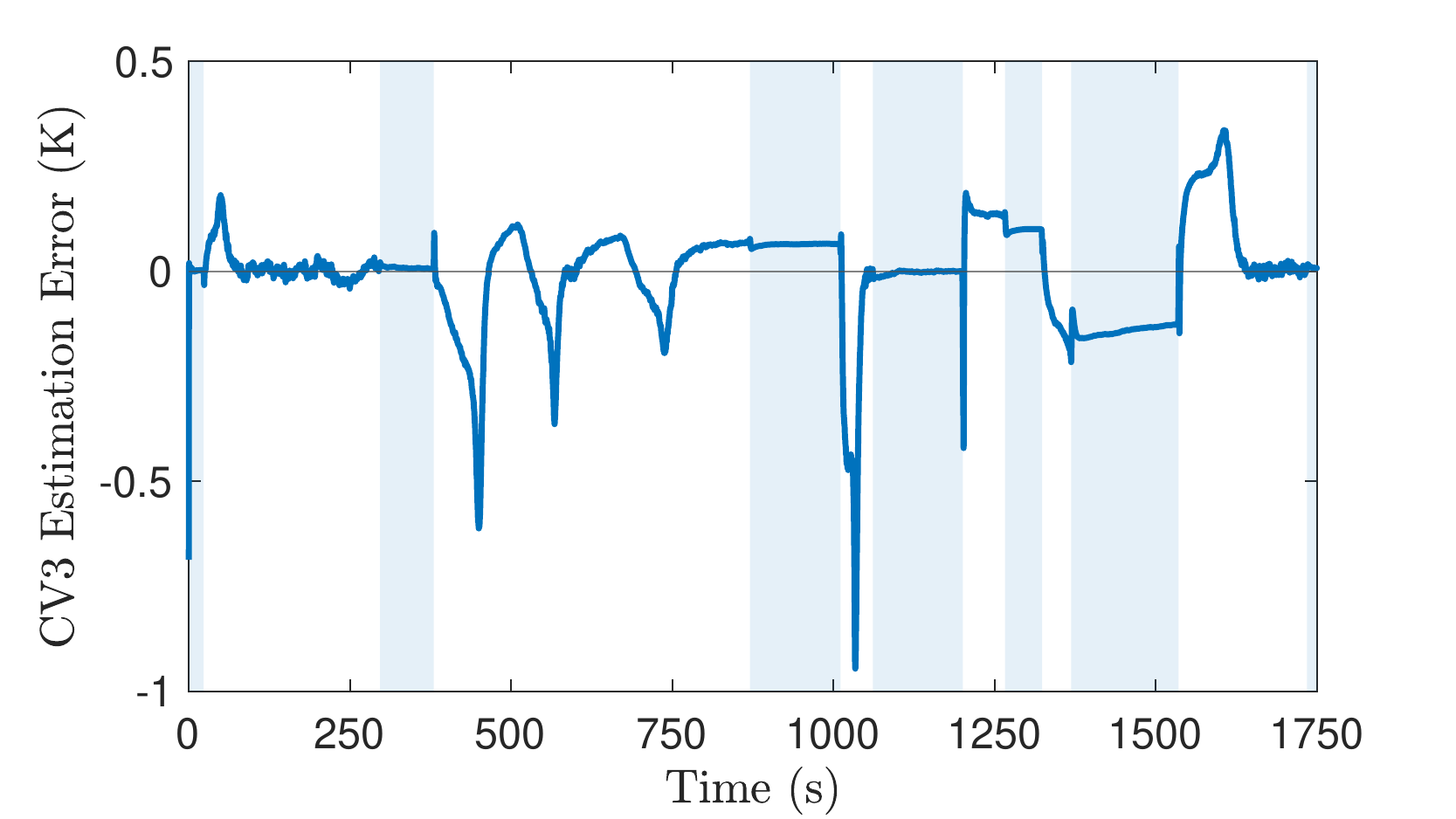}
        \caption{}
        \label{fig:TC3_sim_b}
    \end{subfigure}
    \caption{(a) Simulated temperature at CV3 and the estimated temperature for CV3. (b) Estimation error.}
    \label{fig:TC3_sim}
\end{figure}
\begin{equation}
    e_{rms}(t_k)=\left[\sum_{i=1}^{n}\frac{\left(\hat{x}_i(t_k)-x_i(t_k)\right)^2}{n}\right]^\frac{1}{2}\label{eq:RMSE}
\end{equation}
\begin{figure}[tb]
    \includegraphics[width=3.1in]{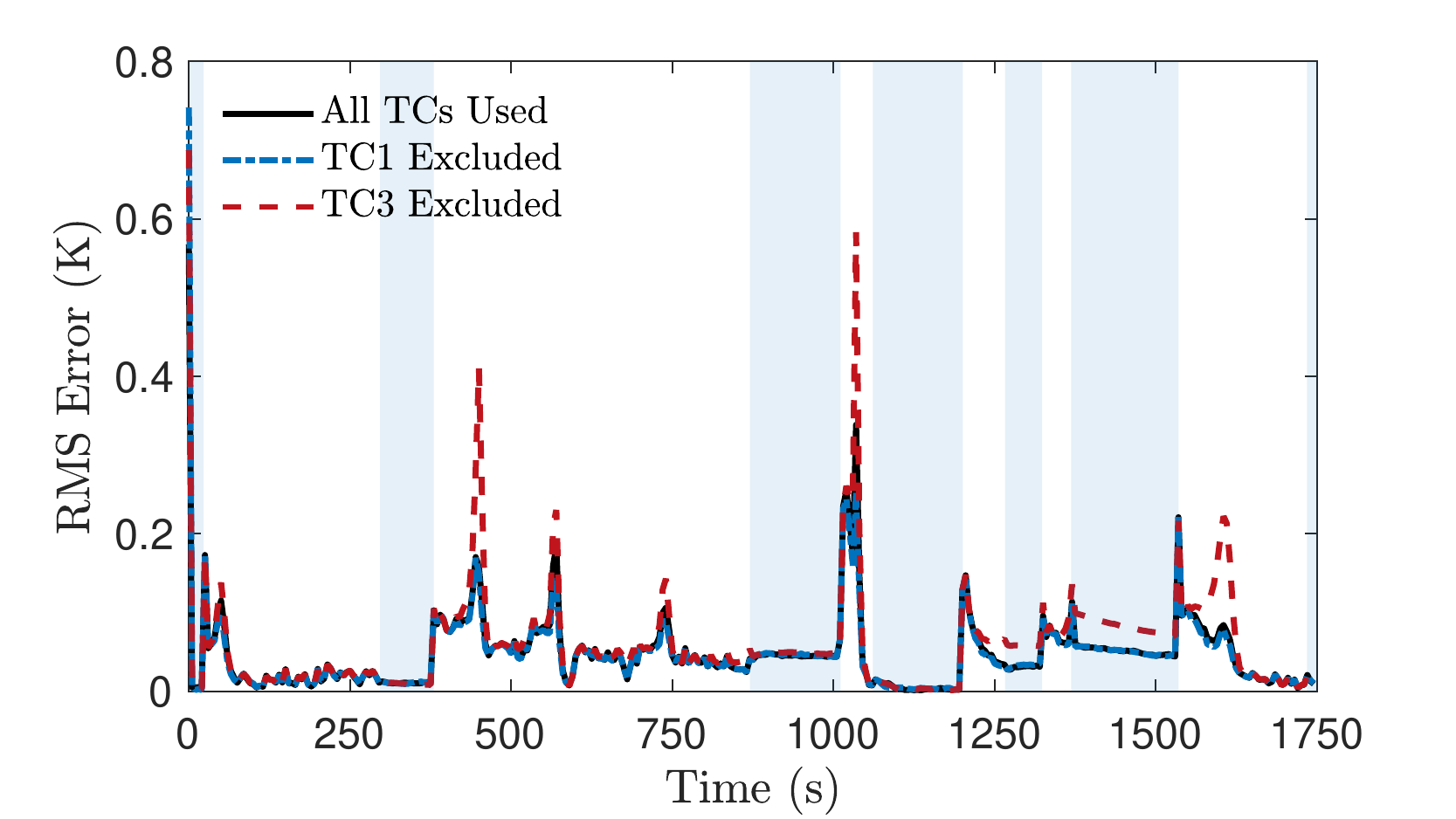}
    \caption{RMS errors of all control volumes for the state estimator with all available measurements, all measurements except TC1, and all measurements except TC3. }
    \label{fig:RMSE_sim}
\end{figure}

Although it is important to analyze the error of individual temperature estimates, the overarching goal of the proposed estimator is to estimate the SOC of the TES for online control decision-making. Fig.\ \ref{fig:SOC_sim_a} compares the simulated and estimated SOC values calculated from the temperature estimates of the SDRE filter with access to all outputs. The shading in Fig.\ \ref{fig:SOC_sim_a} represents the range of SOC in which all or part of the PCM is undergoing a phase change.  Fig.\ \ref{fig:SOC_sim_a} shows simulated PCM temperatures for three control volumes over the duration of the simulation; refer to Fig.\ \ref{fig:TC_CV_labels} for the location of these control volumes.  Comparing Fig.\ \ref{fig:SOC_sim_a} to Fig.\ \ref{fig:OLS_temp} shows that the SOC is near 1 when the PCM temperatures are low and near 0 when the PCM temperatures are high. In Fig.\ \ref{fig:OLS_temp}, the black line is the melting point of the PCM near which small changes in temperature result in large changes in SOC.

This sensitivity near the melting point is also evident in Fig.\ \ref{fig:SOC_sim_b}, which shows the SOC estimation error.  The error is maximized when the PCM temperatures are near the melting point and the SOC is within the latent heat range.  Despite the nonlinear relationship between SOC and temperature, the estimated SOC tracks the simulation well, and the estimate error remains within $\pm$0.02.  Outside the latent range when the system dynamics are approximately linear (between $t=50$ and $t=380$ seconds, for example), the SOC error converges to zero (although there is some noise with amplitude less than 0.001).  The SOC error tends to be positive when the SOC is decreasing and negative when the SOC is increasing, which indicates that there is a small phase lag between the SOC estimate and the true SOC.  Additionally, the error does not converge to zero when the mass flow rate through the TES is zero and the PCM temperatures are within the latent range (for example, between $t=850$ and $t=1000$ seconds).  This is again because heat transfer along the length of the module is dominated by advection. When the mass flow rate is zero, it takes longer for the PCM to reach an equilibrium temperature, which means it takes longer for temperature estimation errors to converge to zero.  Small temperature errors near the phase change temperature result in large errors in SOC.  Outside the latent temperature range, the SOC error converges to zero when the mass flow rate is zero---see the period between $t=1050$ and $t=1200$ seconds for an example.
\begin{figure}[tb]
    \centering
    \begin{subfigure}[b]{\columnwidth}
        \includegraphics[width=3.1in]{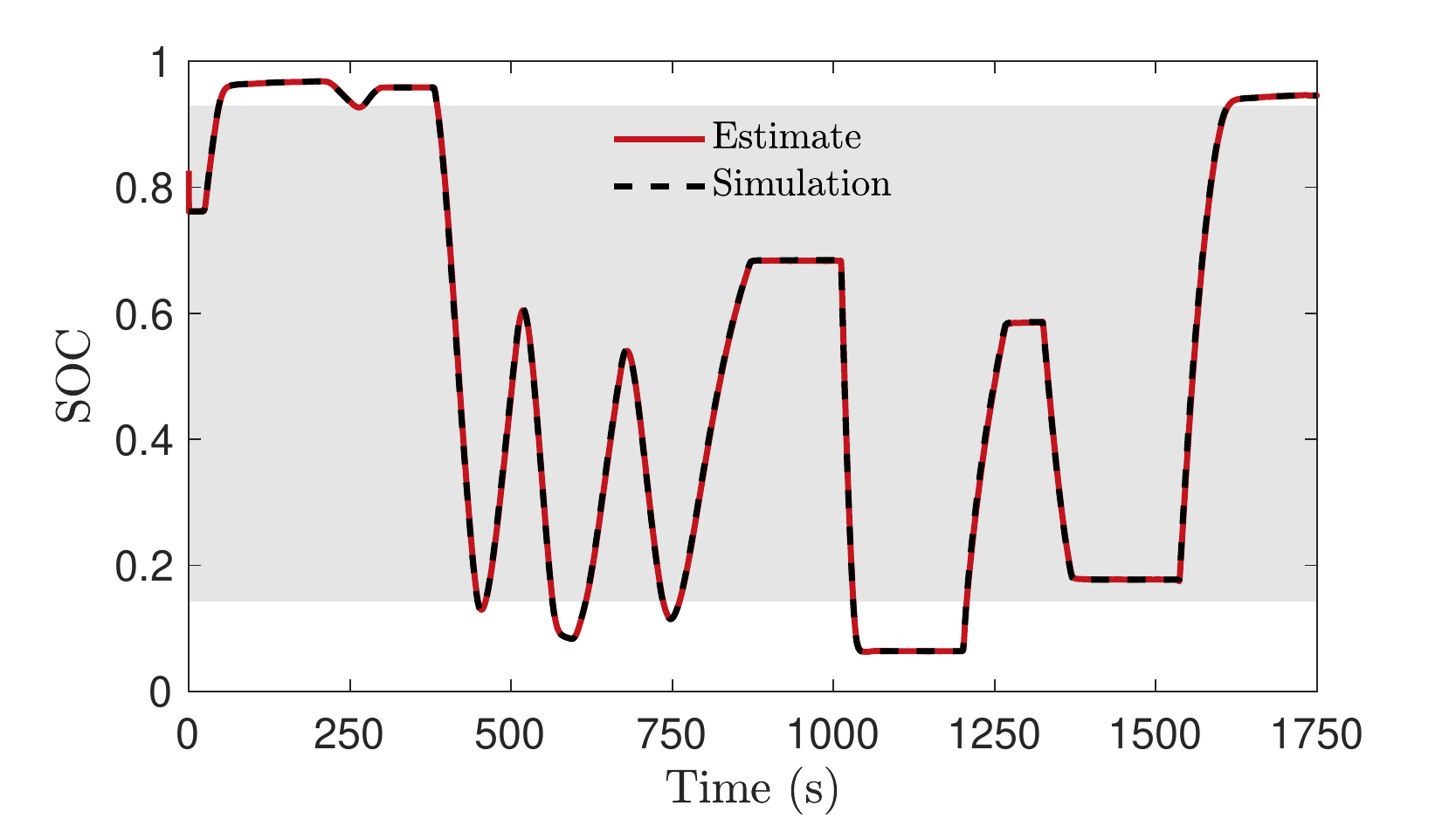}
        \caption{}
        \label{fig:SOC_sim_a}
    \end{subfigure}
    \begin{subfigure}[b]{\columnwidth}
        \includegraphics[width=3.1in]{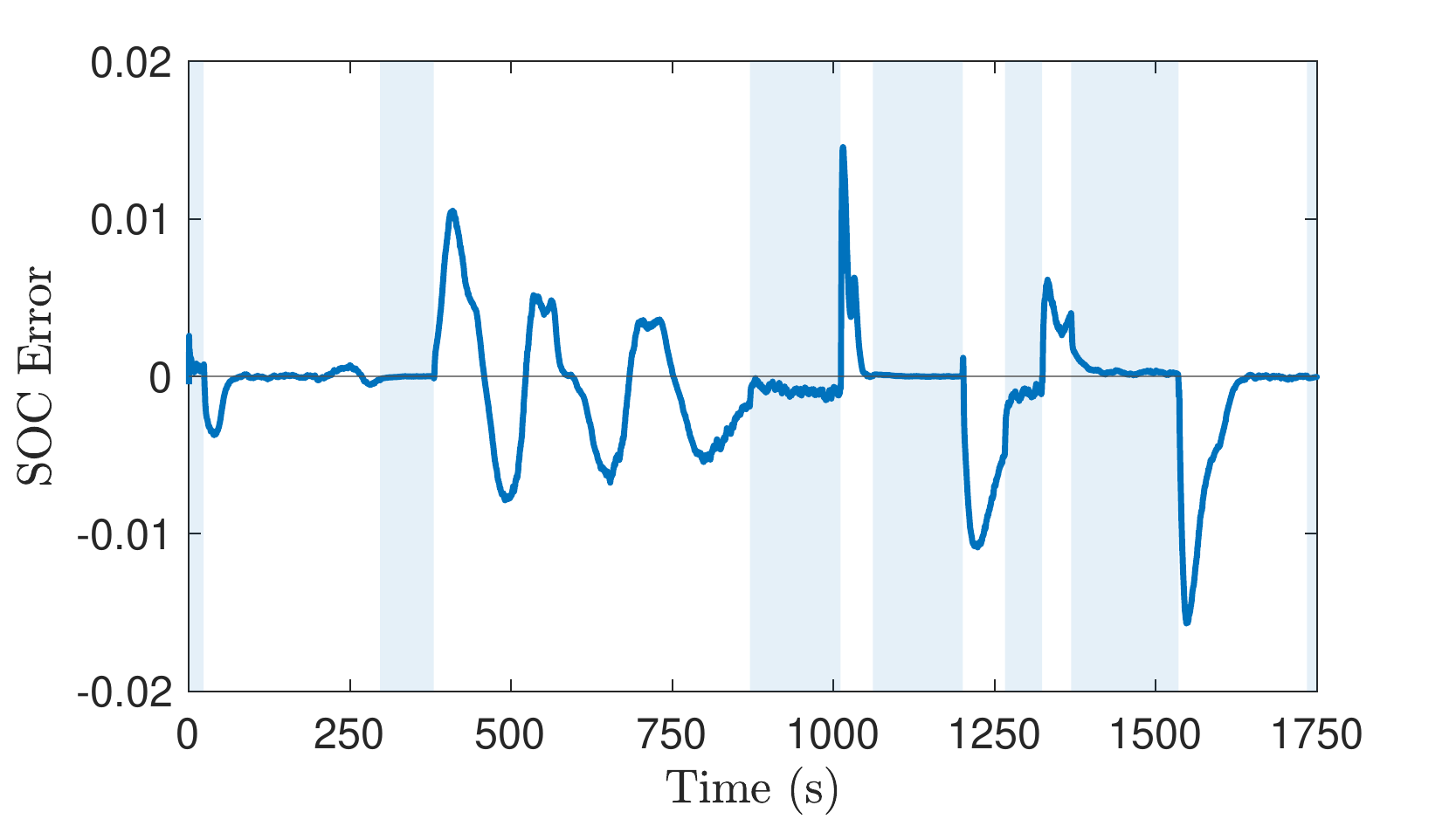}
        \caption{}
        \label{fig:SOC_sim_b}
    \end{subfigure}
    \caption{(a) Simulated and estimated SOC; gray shading represents the latent SOC region where errors are expected to be large. (b) SOC estimation error; blue shading represents periods when the mass flow rate through the TES is zero. }
    \label{fig:SOC_sim}
\end{figure}
\begin{figure}[tb]
    \includegraphics[width=3.1in]{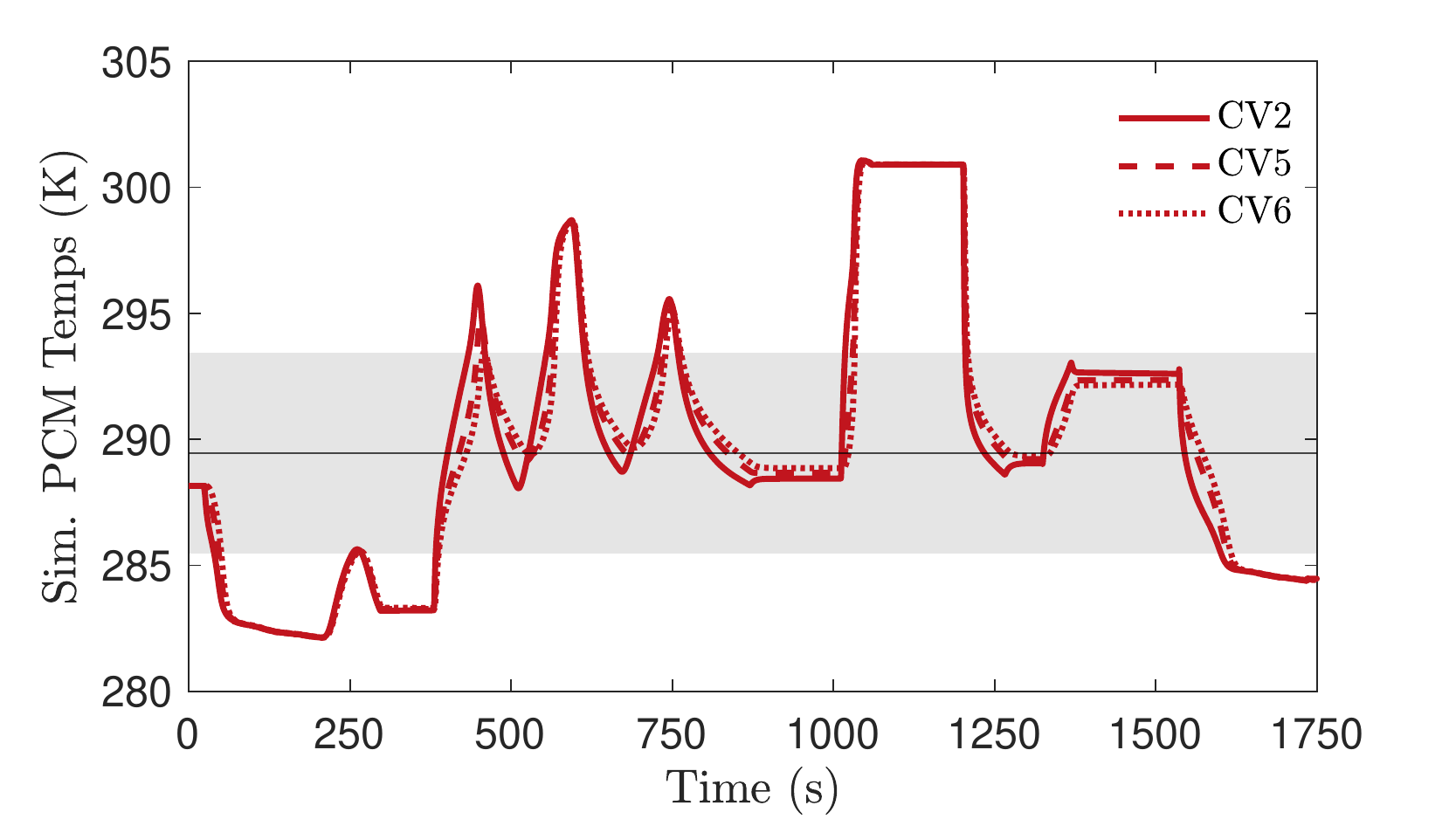}
    \caption{Simulated temperature values from select PCM control volumes.}
    \label{fig:OLS_temp}
\end{figure}


\subsection{Experimental Validation}\label{sec:exp_testing}
In this case study, we test the SDRE filter on experimental data.  First, we investigate convergence of the SDRE filter estimates when (i) thermocouple TC1 is withheld for validation and (ii) thermocouple TC3 is withheld for validation.  We also investigate how the SDRE filter performs with different sample rates.  All estimation errors are calculated as $e(t_k) = \hat{x}(t_k)- y_{k,j}$ where $y_{k,j}$ is a measurement from an excluded thermocouple.

When TC1, the fluid outlet thermocouple measurement, is excluded from the measurement set, the state estimator tracks the fluid outlet temperature closely over long time periods, as shown in Fig.\ \ref{fig:TC1_exp_a}. However, the estimation error increases when the temperature changes rapidly, as shown in Fig.\ \ref{fig:TC1_exp_b} at $t=1000$ seconds, $t=1200$ seconds, and $t=1550$ seconds. For most of the duration of the experiment, the magnitude of the estimation error at CV1 is less than 1 K.  This error behavior is similar to that of the simulated results shown in Fig.\ \ref{fig:TC1_sim_b}, but the larger magnitude of the error when implementing the estimator on the experimental testbed is due to modeling errors. These include inaccuracies in the reduced-order finite volume model due to spatial discretization \cite{gohil_reduced-order_2020, shanks_design_2022}, simplifications of the phase-change dynamics such as neglecting undercooling and hysteresis \cite{sgreva_thermo-physical_2022, barz_paraffins_2021}, and exogenous disturbances in the physical system like heat transfer with the surroundings.
\begin{figure}[tb]
    \centering
    \begin{subfigure}[b]{\columnwidth}
        \includegraphics[width=3.1in]{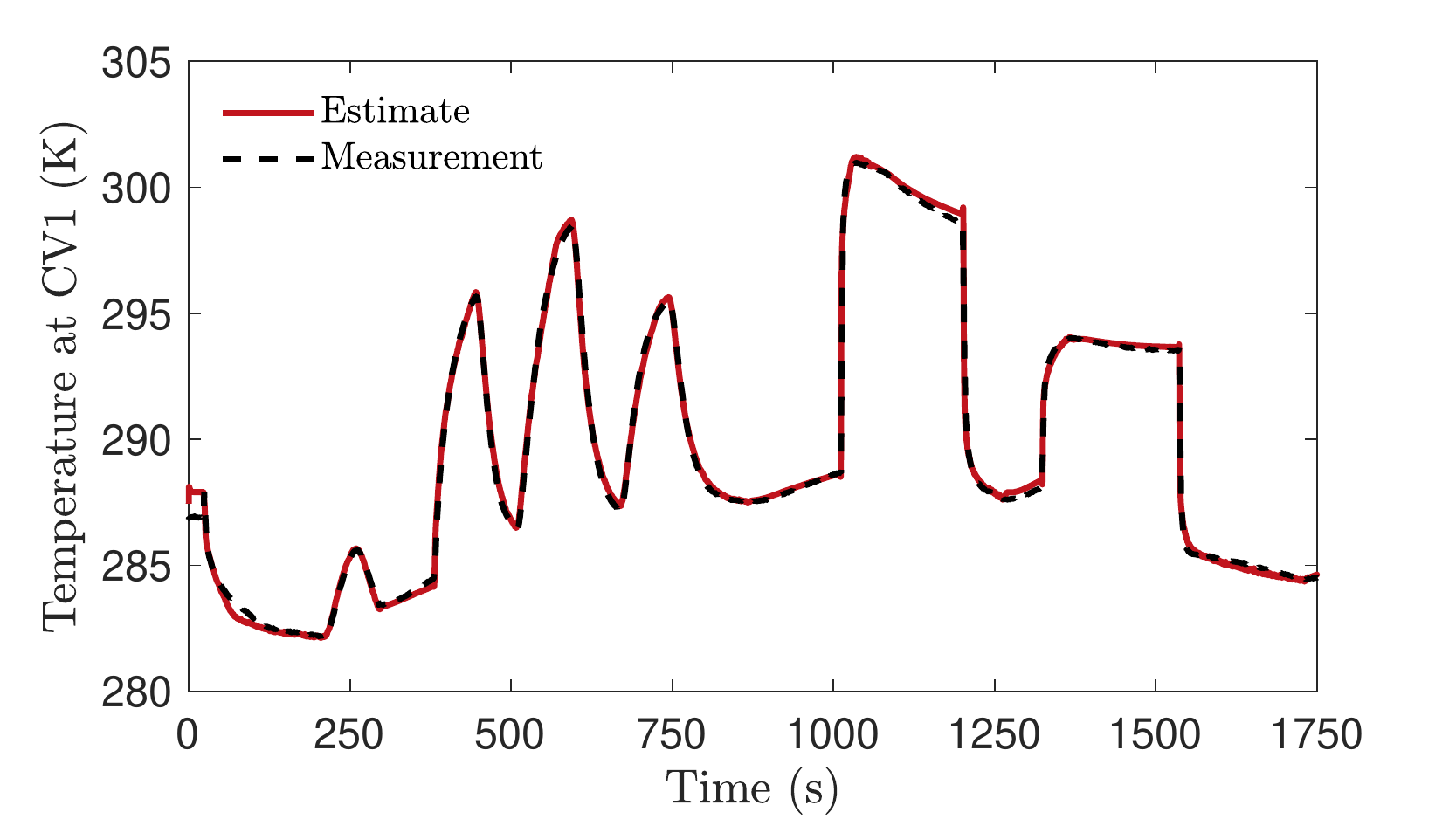}
        \caption{}
        \label{fig:TC1_exp_a}
    \end{subfigure}
    \begin{subfigure}[b]{\columnwidth}
        \includegraphics[width=3.1in]{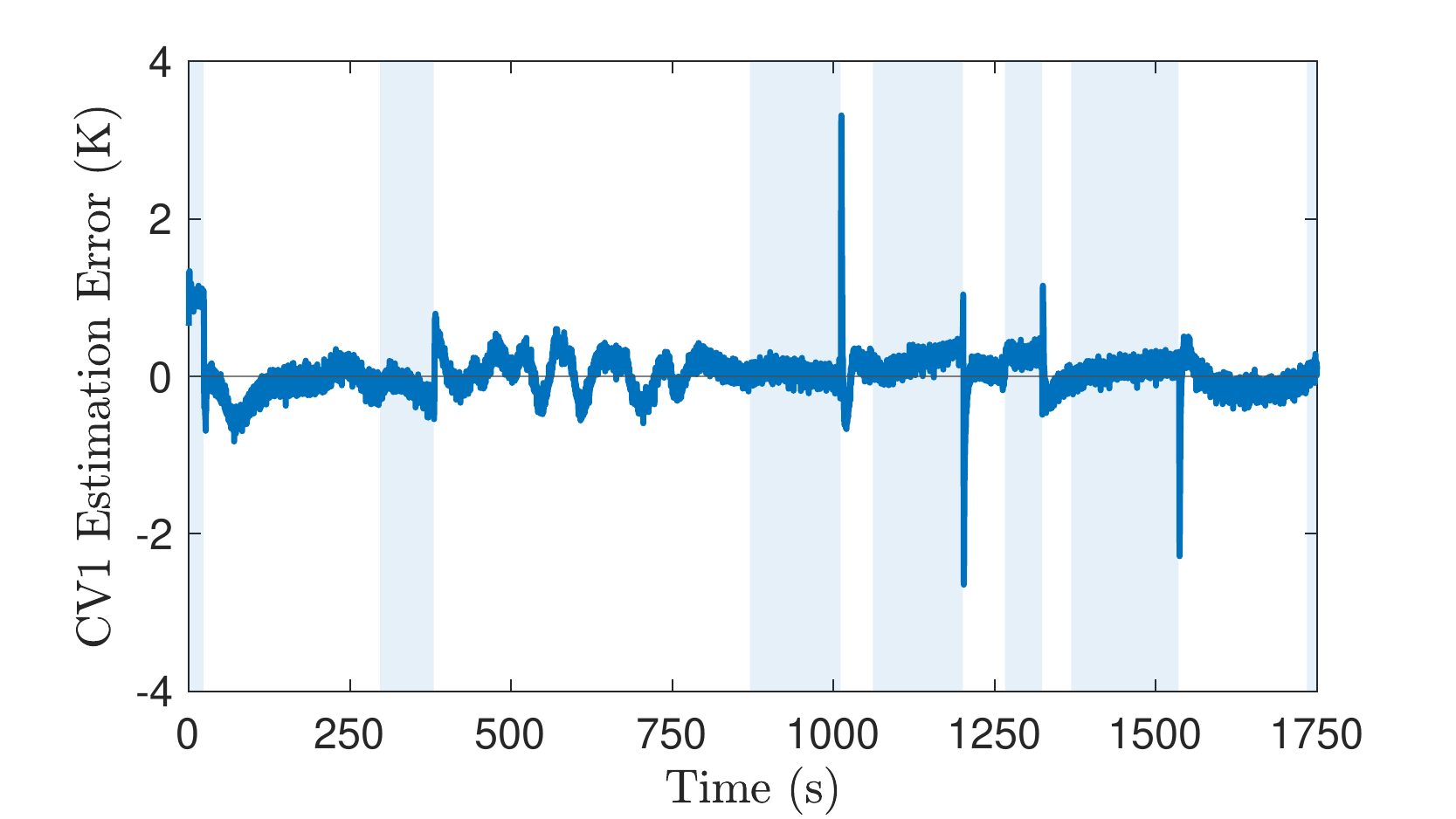}
        \caption{}
        \label{fig:TC1_exp_b}
    \end{subfigure}
    \caption{(a) Estimated temperature of CV1 compared to the fluid outlet temperature measurement TC1. (b) Estimate error for CV1.}
    \label{fig:TC1_exp}
\end{figure}

Excluding TC3 from the measurement set results in degradation of the estimator's performance, as expected based on the simulated results. The estimate of CV3 still tracks TC3 closely over long time periods, as Fig.\ \ref{fig:TC3_exp_a} shows.  When the TES is experiencing rapid changes in temperature, Fig.\ \ref{fig:TC3_exp_b} shows that the magnitude of the error tends to increase; the convergence rate of the estimate error of CV3 is slower than the transient dynamics, resulting in larger errors up to 6 K at $t=1050$ seconds and 3 K at $t=1550$ seconds.  During the period between $t=450$ and $t=900$ seconds and the period between $t=1200$ and $t=1400$ seconds, the PCM undergoes partial phase change.  Estimation errors during these periods of time are due to the phase change hysteresis and undercooling effects which delay solidification when the PCM is cooled rapidly \cite{barz_paraffins_2021}.  The system model does not account for these effects, so the estimator incorrectly predicts that CV3 (and other unmeasured control volumes) begins to solidify at 293.5 K (the top boundary of the gray shaded region in Fig.\ \ref{fig:TC3_exp_a}) whereas the PCM in the experimental TES begins to solidify at a lower temperature.  Over sufficiently long time periods during which the PCM temperatures reach equilibrium, such as $t=750$ to $t=1000$ seconds, errors caused by hysteresis and undercooling are corrected and the estimate error returns to zero.
\begin{figure}
    \centering
    \begin{subfigure}[b]{\columnwidth}
        \includegraphics[width=3.1in]{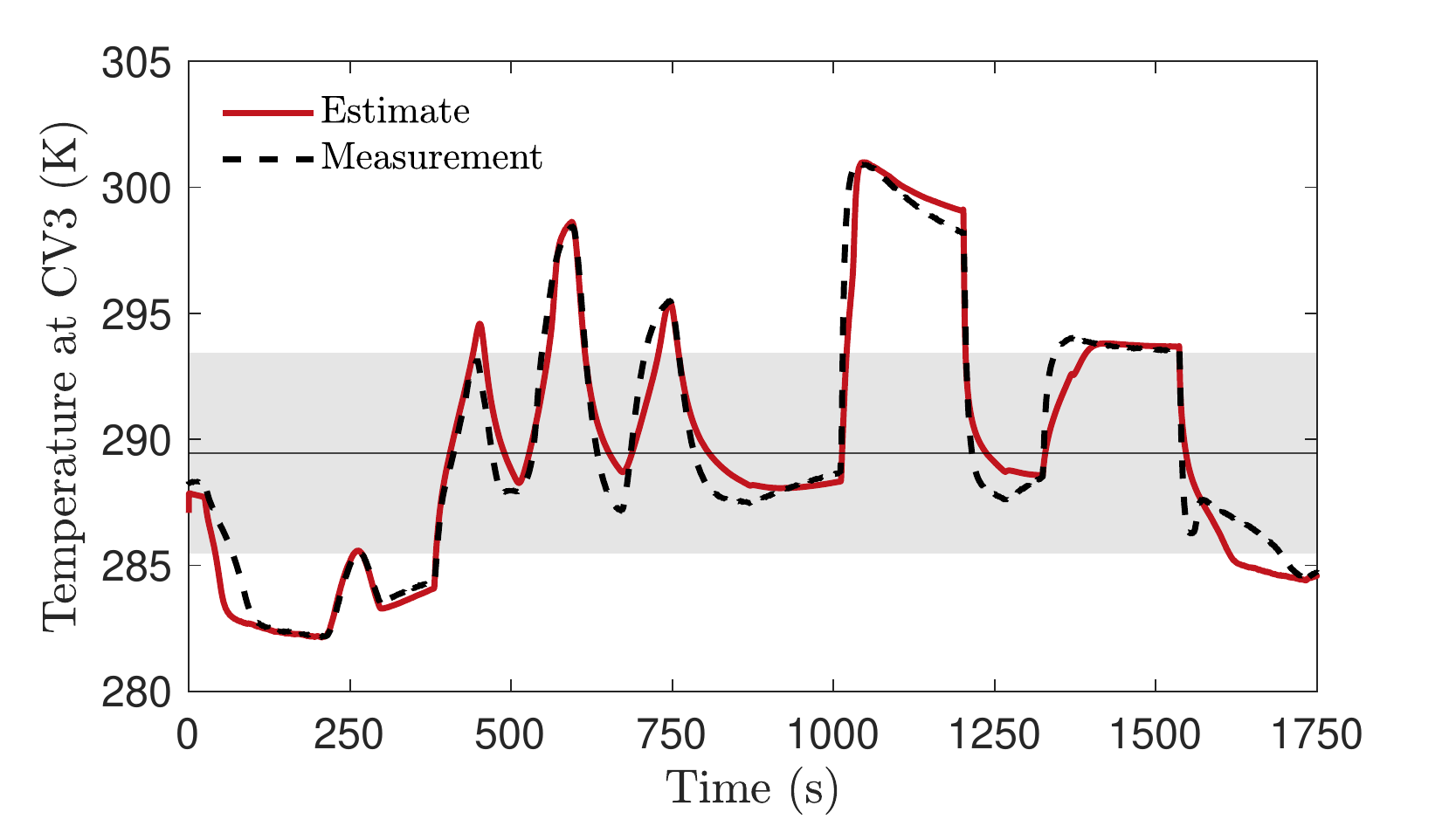}
        \caption{}
        \label{fig:TC3_exp_a}
    \end{subfigure}
    \begin{subfigure}[b]{\columnwidth}
        \includegraphics[width=3.1in]{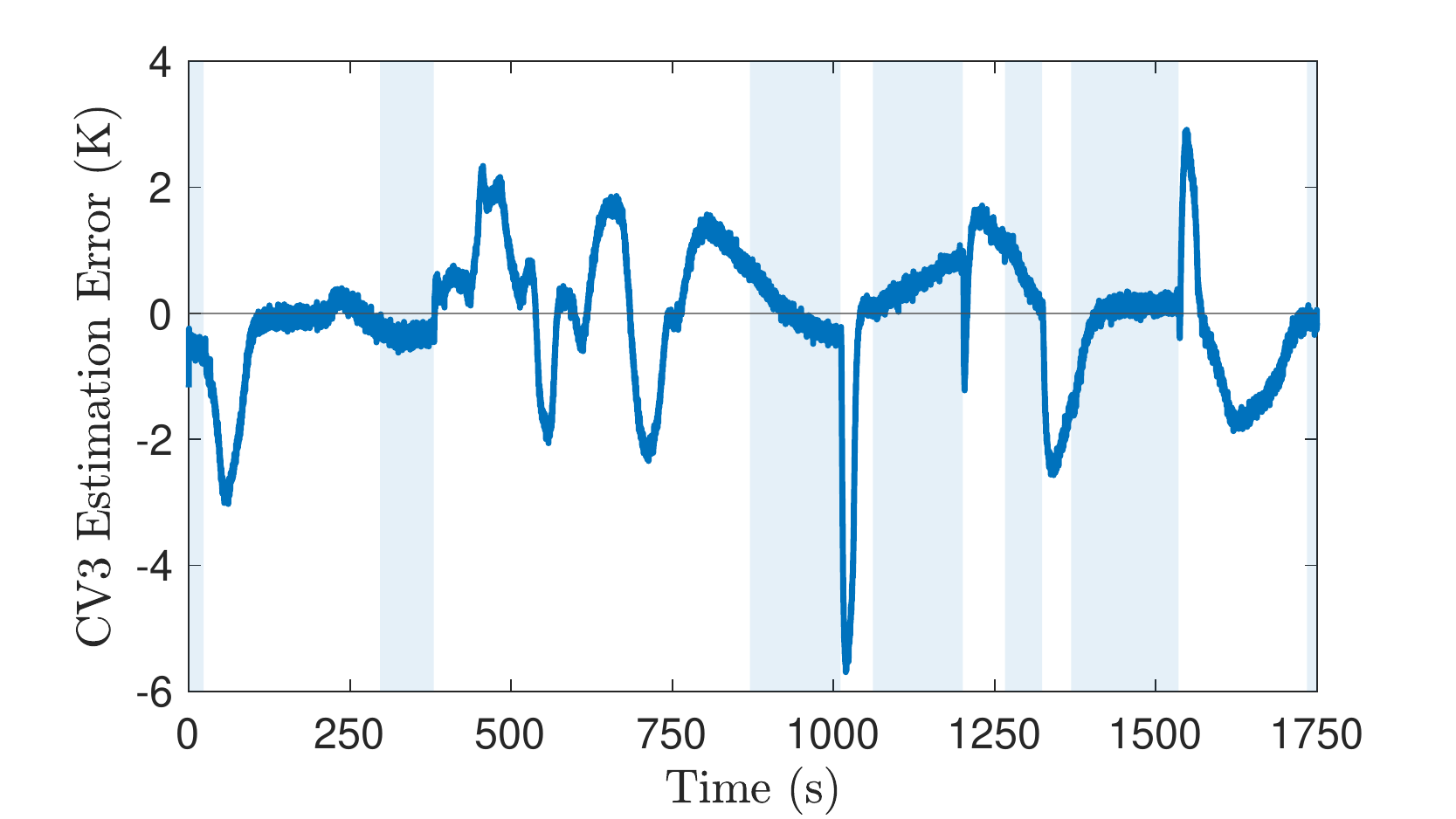}
        \caption{}
        \label{fig:TC3_exp_b}
    \end{subfigure}
    \caption{(a) Estimated temperature of CV3 compared to the measurement of TC3. (b) Estimate error for CV3.}
    \label{fig:TC3_exp}
\end{figure}

\subsection{Effect of Sample Rate}\label{sec:sample_rate}
In this section, we investigate how the SDRE filter performs when measurements are provided at four different rates: 80 samples per second (S/s), 10 S/s, 1 S/s, and 0.2 S/s.  All previous experimental and simulation results used a sample rate of 10 S/s.  The continuous-discrete SDRE filter formulation allows for more flexibility with respect to sample rates because the state estimates and the estimate covariance matrix are propagated with the state transition matrix of the continuous-time system; estimation errors introduced by discretizing the nonlinear system are mitigated by propagating the state estimates and error covariance with a time step smaller than the interval between samples. 

Fig.\ \ref{fig:TC1_exp_SR} shows (a) the estimated temperatures at CV1 and (b) the estimation errors when TC1 is excluded from the measurement set during a validation test with the experimental dataset.  As discussed in the previous section, the state estimator can track the temperature at CV1 well; this is also the case at the faster and slower sample rates.  In Fig.\ \ref{fig:TC1_exp_SR_b}, 80 S/s, 10 S/s, and 1 S/s all appear to have similar estimation errors, but 0.2 S/s produces larger error during periods of transient temperature changes.  To more easily compare the performance of the four sample rates, we calculate the root-mean-square errors for the temperature at CV1 averaged over all time steps using Eqn.\ \eqref{eq:RMSE_t} where $N_s$ is the total number of time steps in the dataset for the given sample rate. Results for the four sample rates are included in Table \ref{tab:RMSE_TC1}.  Note that this RMSE definition represents an average over all time steps, but the previous RMSE definition in Eqn.\ \eqref{eq:RMSE} averages over all control volumes.  As expected, the 80 S/s rate results in the smallest estimation error, but only by a small margin; 10 S/s results in a similar RMSE.  The slowest sample rate, 0.2~S/s, results in almost twice the RMSE of the next slowest sample rate, 1 S/s.  In the fluid channel, the control volume temperatures can change rapidly due to the high advective heat transfer rates.  The larger error in the 0.2~S/s sample rate is due to the fast dynamics of control volume CV1; the estimation error can increase quickly when update steps are not performed fast enough.
\begin{equation}\label{eq:RMSE_t}
    e_{rms,j}=\left[\sum_{k=1}^{N_{s}}\frac{\left(\hat{x}_{j}(t_k)-y_{k,j}\right)^2}{N_{s}}\right]^\frac{1}{2}
\end{equation}
\begin{table}[htbp]
\centering
\caption{RMS errors at CV1 for four sample rates}
\label{tab:RMSE_TC1}
\begin{tabular}{r|l}
Sample Rate  & $e_{rms,CV1}$ (K) \\ \hline
80 S/s         & 0.2436     \\
10 S/s         & 0.2617     \\
1 S/s          & 0.3811    \\
0.2 S/s         & 0.7526    
\end{tabular}
\end{table}

Fig.\ \ref{fig:SOC_exp_TC1_SR} shows the estimated SOC for the four sample rates.  All four estimates are similar with almost no difference outside the latent range. During the periods of partial phase change, a noticeable difference in the peak of the SOC estimate can be seen at $t=500$ seconds, $t=650$ seconds, and $t=1250$ seconds. The faster sample rates reach a sharper and higher peak SOC, suggesting that the use of a faster sample rate results in a faster response time to transient changes.
\begin{figure}[tb]
    \centering
    \begin{subfigure}[b]{\columnwidth}
        \includegraphics[width=3.1in]{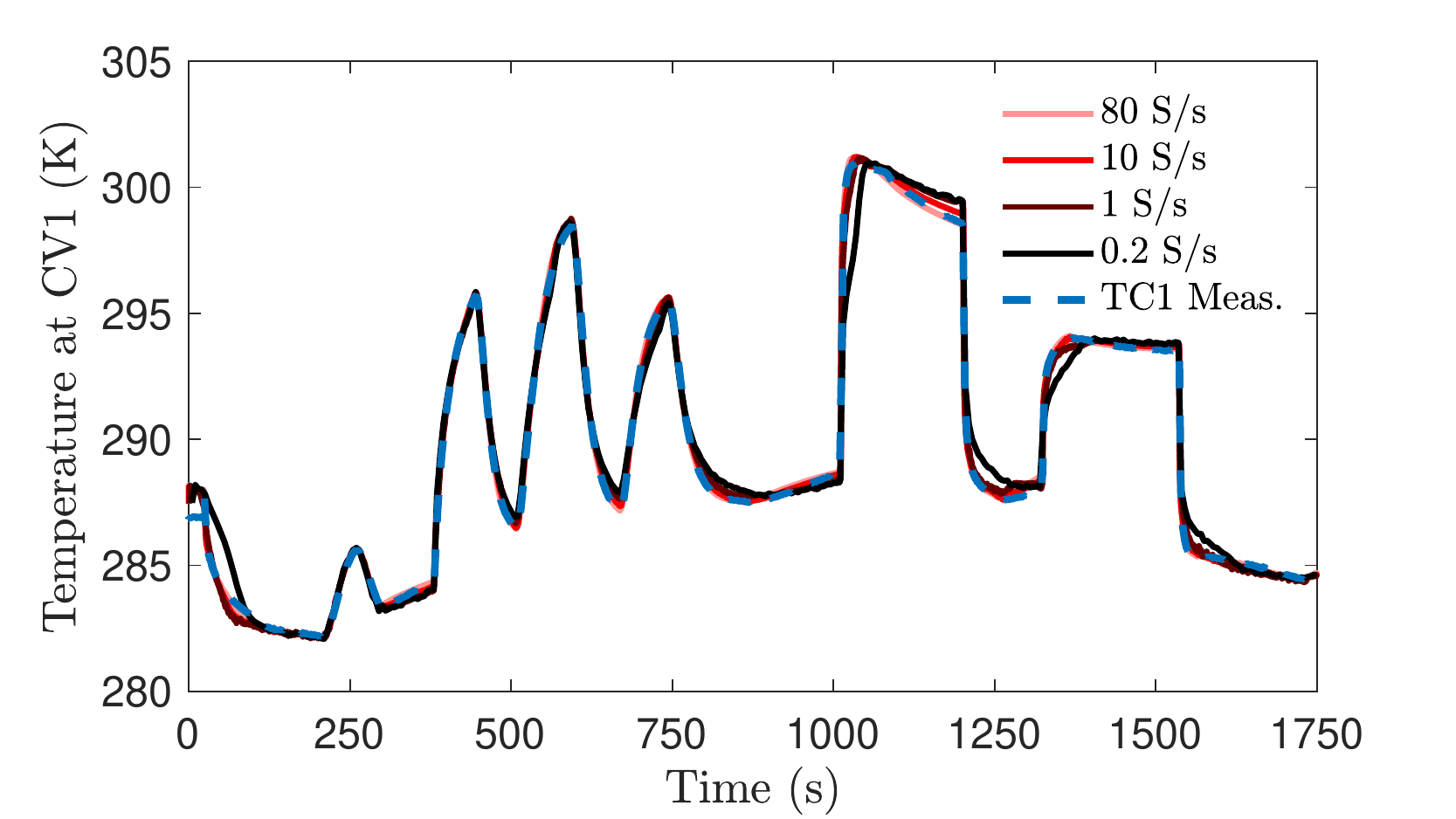}
        \caption{}
        \label{fig:TC1_exp_SR_a}
    \end{subfigure}
    \begin{subfigure}[b]{\columnwidth}
        \includegraphics[width=3.1in]{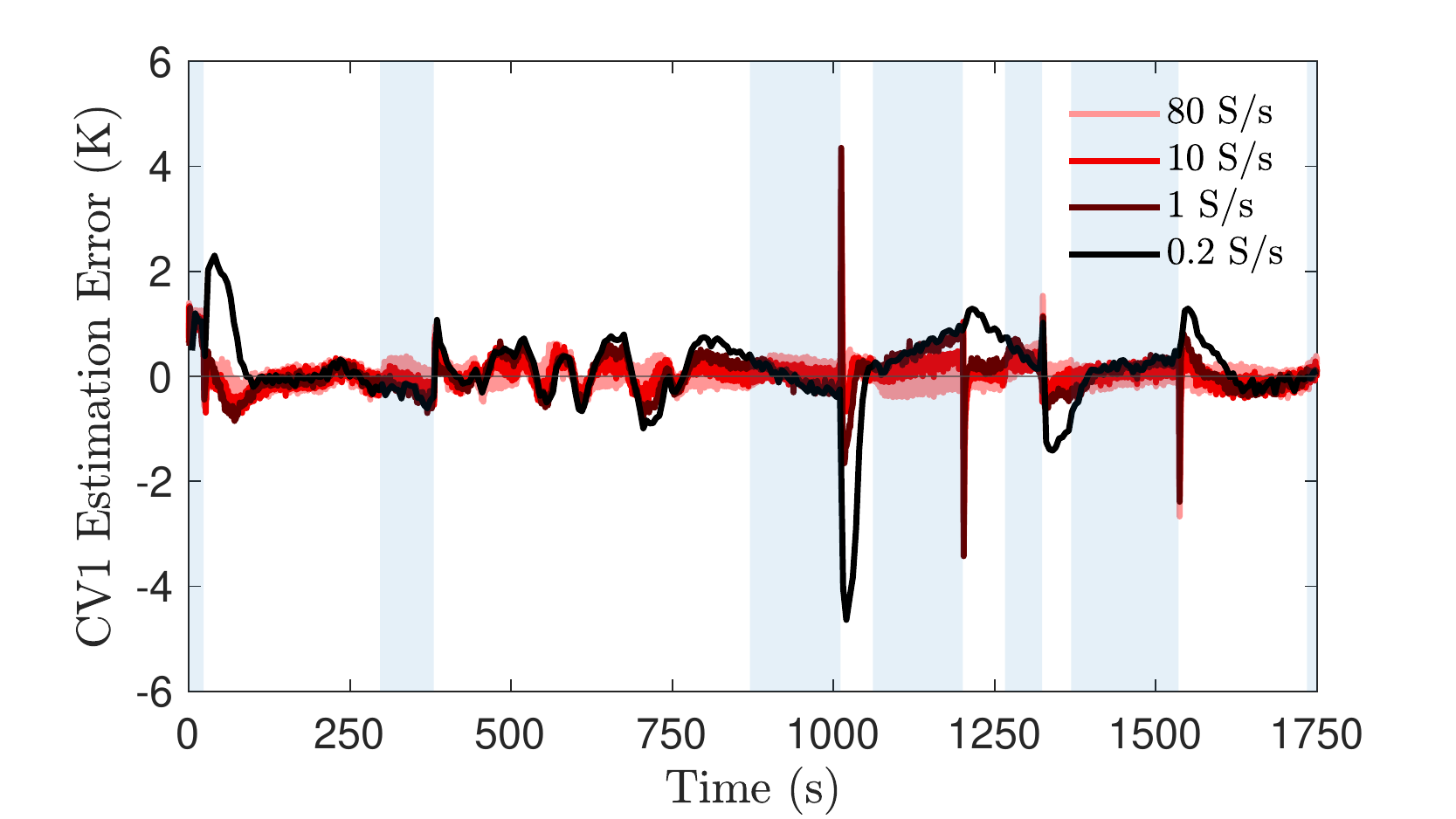}
        \caption{}
        \label{fig:TC1_exp_SR_b}
    \end{subfigure}
    \caption{(a) Estimated temperatures of CV1 compared to the fluid outlet temperature measurement TC1. (b) Estimation errors for CV1.}
    \label{fig:TC1_exp_SR}
\end{figure}
\begin{figure}[tb]
    \includegraphics[width=3.1in]{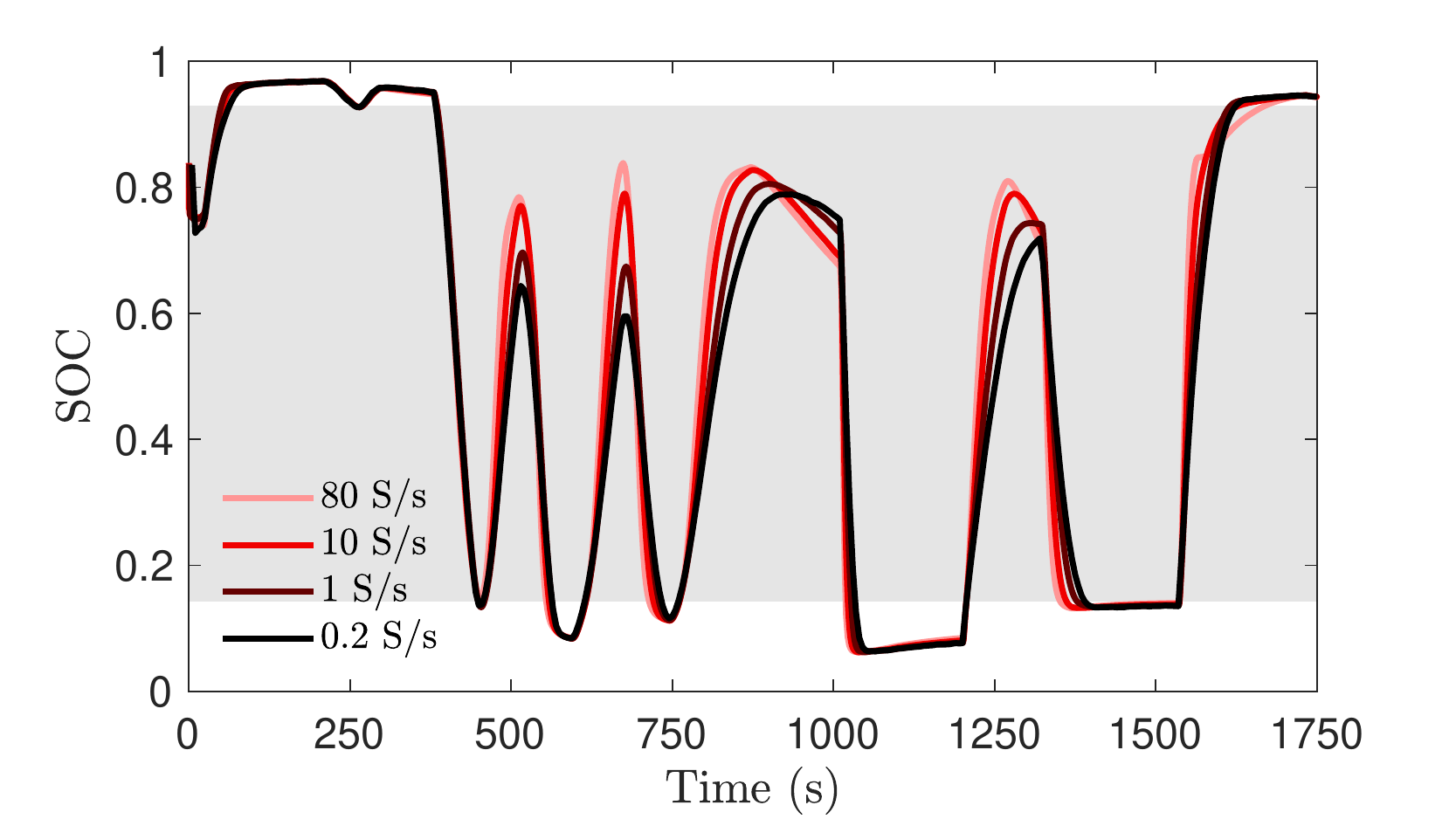}
    \caption{Estimated SOC for the four sample rates when TC1 is withheld.}
    \label{fig:SOC_exp_TC1_SR}
\end{figure}

Finally, we compare the performance of the four sample rates when TC3 is excluded from the measurement set.  Fig.\ \ref{fig:TC3_exp_SR} shows (a) the estimated temperatures at CV3 and (b) the estimation errors.  In Fig.\ \ref{fig:TC3_exp_SR_b}, it is clear that the state estimator receiving 80 S/s has a slightly smaller error most of the time.  This is noticeable at $t=500$ seconds, $t=1100$ seconds, and $t=1600$ seconds.  The RMSE of the estimate at CV3 averaged over all time steps is given in Table \ref{tab:RMSE_TC3}. The state estimator with a sample rate of 80 S/s again performs the best, but by a small margin; this time, there is only a small difference in RMSE between the fastest and slowest sample rates.  Since the thermocouple TC3 is embedded in the PCM layer, the dynamics of control volume CV3 are slower than the dynamics of CV1, which explains why sample rate has a lesser effect on RMSE at CV3 compared to the RMSE at CV1 in Table \ref{tab:RMSE_TC1}.  Note that the RMS errors in Table \ref{tab:RMSE_TC3} are larger than those in Table \ref{tab:RMSE_TC1}; this indicates that the measurement from TC3 is more critical to estimate the SOC accurately than that of TC1.  This makes sense given that TC3 is located near the center of the TES; without the measurement from TC3, only measurements of the temperatures at the periphery of the PCM layer are available. Estimation errors in CV3 and the control volumes above CV3 can be large for long time periods (greater than 2 K for more than 20 seconds, according to Fig.\ \ref{fig:TC3_exp_SR_b}) before the state estimator compensates for the errors.
\begin{figure}[tb]
    \centering
    \begin{subfigure}[b]{\columnwidth}
        \includegraphics[width=3.1in]{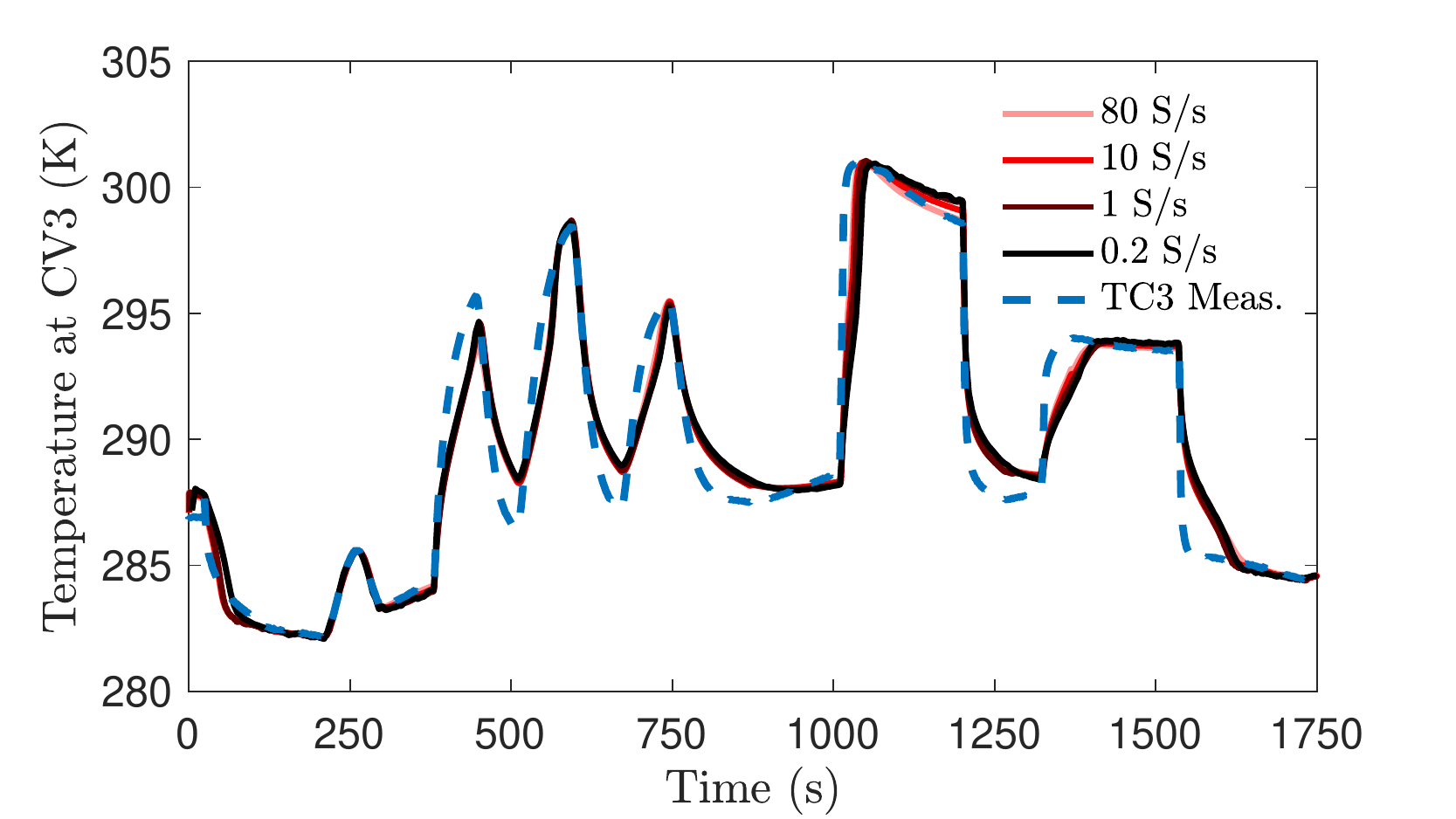}
        \caption{}
        \label{fig:TC3_exp_SR_a}
    \end{subfigure}
    \begin{subfigure}[b]{\columnwidth}
        \includegraphics[width=3.1in]{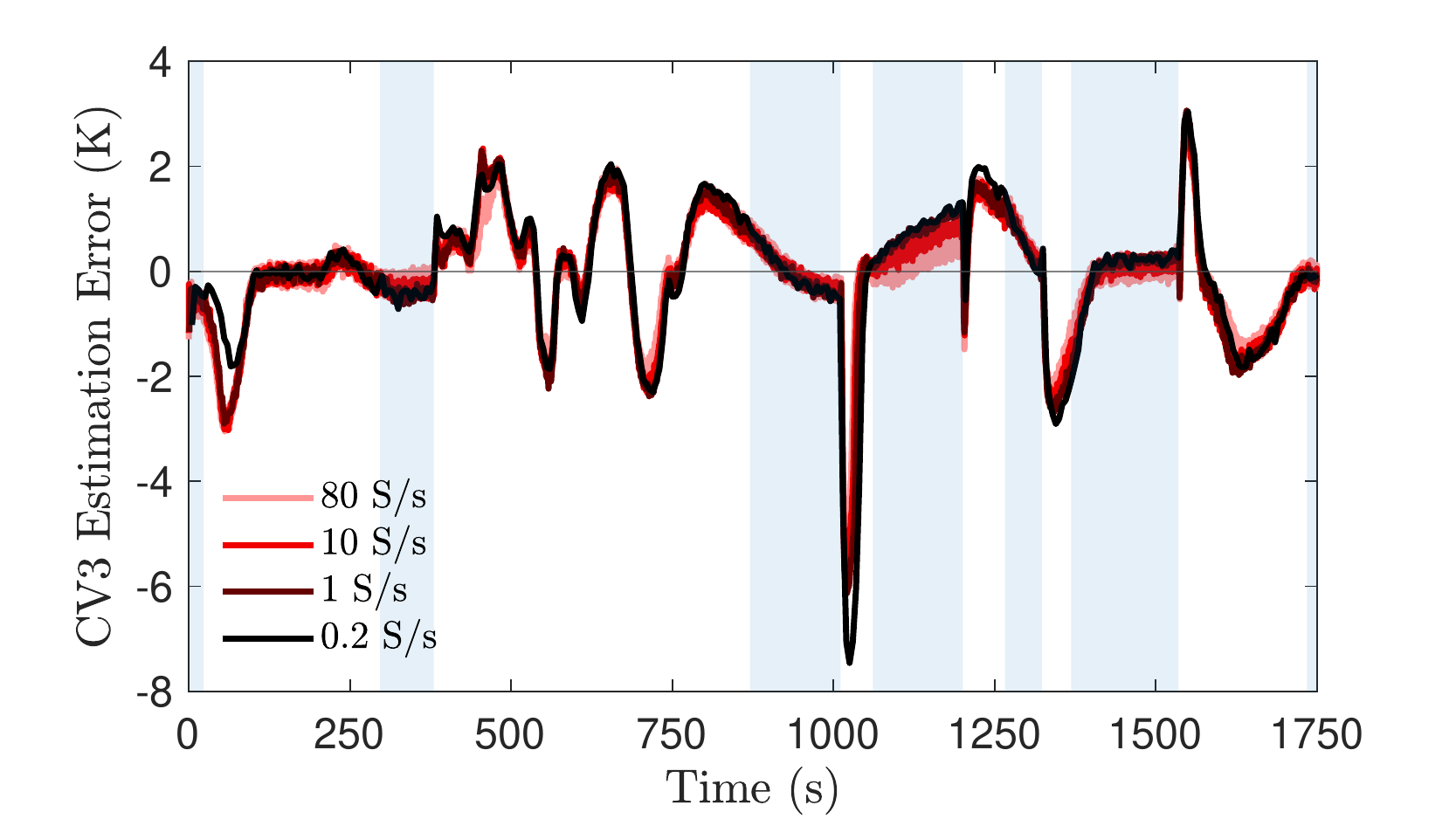}
        \caption{}
        \label{fig:TC3_exp_SR_b}
    \end{subfigure}
    \caption{(a) Estimated temperature of CV3 compared to the measurement of TC3. (b) Estimation errors for CV3.}
    \label{fig:TC3_exp_SR}
\end{figure}
\begin{table}[htbp]
\centering
\caption{RMS errors at CV3 for four sample rates}
\label{tab:RMSE_TC3}
\begin{tabular}{r|l}
Sample Rate & $e_{rms,CV3}$ (K) \\ \hline
80 S/s         & 1.0022     \\
10 S/s         & 1.1046     \\
1 S/s          & 1.2048     \\
0.2 S/s         & 1.3270
\end{tabular}
\end{table}

Fig.\ \ref{fig:SOC_exp_TC3_SR} shows the estimated SOC for the four sample rates when the measurement from TC3 is excluded.  The results shown in this figure are almost identical to those in Fig.\ \ref{fig:SOC_exp_TC1_SR}, except there is less difference between the 80 S/s results and the 10 S/s results.  These results demonstrate an important advantage of the continuous-discrete SDRE filter over other discrete-time state estimators.  Since the state estimates are propagated with a small time step, accuracy of the prediction step is not degraded as the time interval between update steps is increased.  Errors due to disturbances or model inaccuracies still affect the accuracy of the estimate, so update steps must be performed sufficiently often to avoid large errors during transient events.
\begin{figure}[tb]
    \centering
    \includegraphics[width=3.1in]{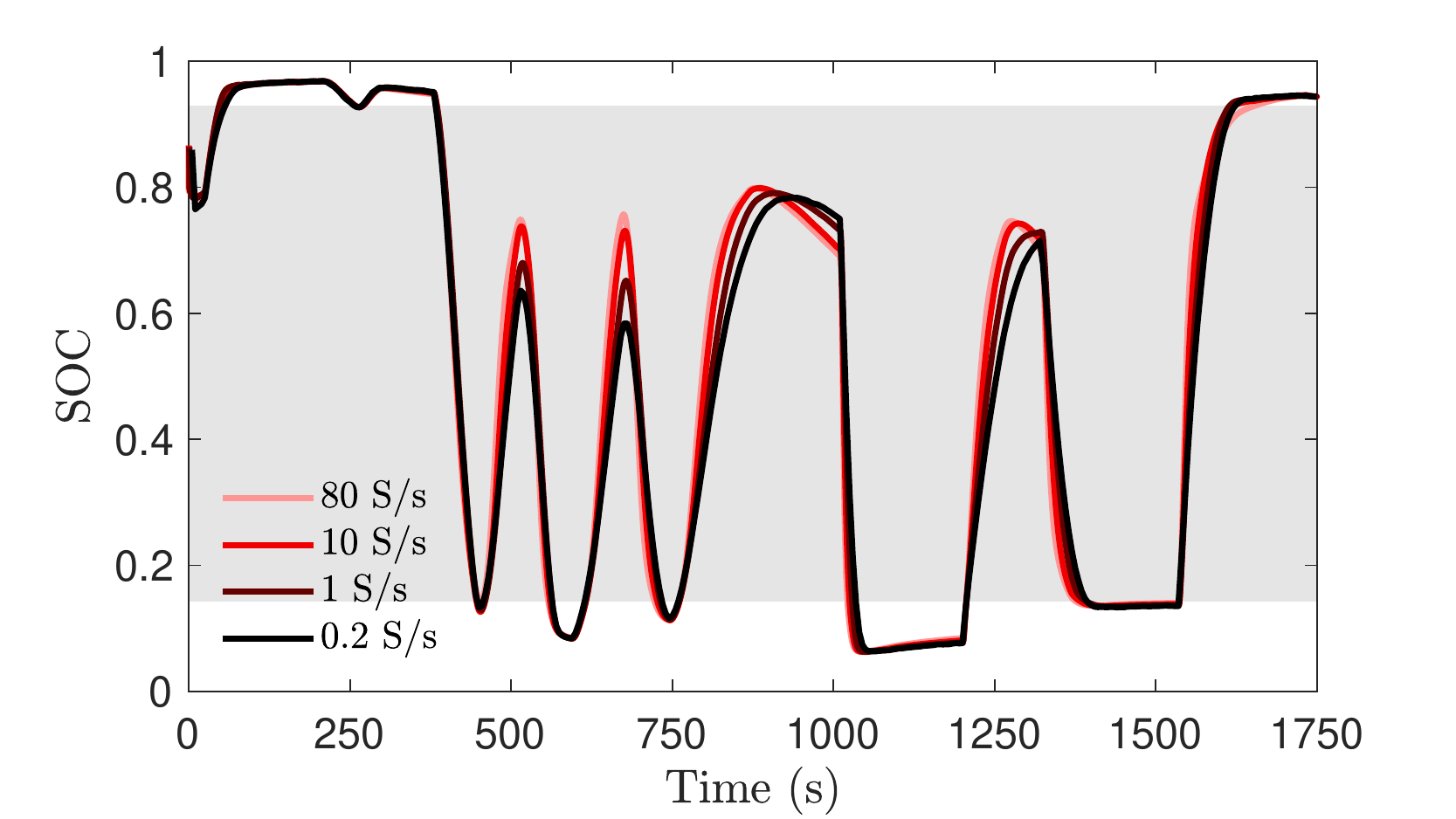}
    \caption{Estimated SOC for the three sample rates when TC3 is withheld.}
    \label{fig:SOC_exp_TC3_SR}
\end{figure}
\section{Conclusion} \label{sec:conclusion}

In this paper we designed and experimentally demonstrated a state-dependent Riccati equation (SDRE) filter for nonlinear state and state of charge estimation in a phase-change thermal energy storage device.  Unlike other state estimators for nonlinear systems, the SDRE filter uses a linear parameter-varying system model which offers several advantages.  First, the LPV model can be computed quickly using graph-based methods, and linearization is not required.  Second, the SDRE filter is generalizable to many TES architectures because the finite-volume heat transfer model can be applied to any heat conduction problem, even those with highly nonlinear dynamics and material properties.  Finally, the structure of the LPV model remains the same, so boundedness and detectability of the SDRE filter can be guaranteed in advance.

The SDRE filter is first tested in simulation to verify that the SOC estimate converges to the true SOC.  The SOC estimation error remains bounded within $\pm0.02$, even when the TES experiences rapid temperature changes or the PCM undergoes phase change.  In experimental tests, the SDRE filter is shown to accurately estimate temperatures at two locations in the TES.  The largest sources of error are model inaccuracies such as (i) the assumption that there is no heat transfer with the surroundings and (ii) a simplified phase-change model which does not account for hysteresis or undercooling during solidification.  Additionally, when the mass flow rate through the TES is zero, estimate accuracy is reduced because heat transfer along the length of the TES is typically dominated by mass transfer.  Despite the limitations of the finite volume model, the accuracy of the estimator is not significantly affected by the sample rate of the measurements; sample rates of 80 samples/s and 1 sample/s yield similar root-mean-square errors.  

Experimental validation of model-based state estimation techniques for fast-timescale thermal storage devices advances the continuing research on control strategies for hybrid transient thermal management systems. \revres{In future work, the boundedness and convergence of other Kalman-type and particle filters suitable for SOC estimation may be explored.} \revres{Additionally, investigation of the effects of model uncertainty and the previously discussed model inaccuracies has the potential to decrease estimation error.}

\revres{\section{Appendix} \label{sec:appendix}}
\revres{\begin{defn} [Connected Undirected Graph] A graph is a pair $\mathcal G = (\mathcal V,\mathcal E)$ where $\mathcal V$ is is a finite set of $n$ nodes or vertices  $\{v_1,\dots,v_n\}$,  and $\mathcal E \subset \{1,\dots,n\} \times \{1,\dots,n\}$ is a set of edges. The pair $(i,j)$ denotes the edge that links vertex $v_i$ to $v_j$. If $(i,j) \in \mathcal E \Leftrightarrow (j,i) \in \mathcal E$, the graph is said to be \emph{undirected}. An undirected graph is \emph{connected} if there exists a path (set of edges) between any two nodes. The undirected graph is \textit{connected} if there exists a path (set of edges) between any two nodes; the connected egde set is denoted with $\mathcal E_c$
\end{defn}}

\revres{\begin{defn}[Adjacency and Laplacian Matrices] A (weighted) graph is fully described by the adjacency matrix, $\Lambda \in \mathbb{R}^{n \times n}$ where
\begin{align}
\Lambda_{i,j} = 
\begin{cases}\lambda_{i,j} & \text{if } (i,j) \in \mathcal E \\
0& \text{otherwise};  \end{cases}
\end{align} 
where $\lambda_{i,j}$ is the \textit{weight} assigned to any of its edges $(i,j)$. Consider the diagonal matrix $D$ whose diagonal entry $D_{i,i}=\sum_{j=1}^{n}\lambda_{i,j}$. $D$ contains the  total egdes' weight that is incident to each vertex, and is termed the \textit{degree matrix}. The graph Laplacian is defined as $L = D-\Lambda$.
\end{defn}}

\revres{\begin{defn}[Integral Graph] \label{def:int_graph} Given a parameter-varying graph $\mathcal G(\rho)=$ $(\mathcal V, \mathcal E(\rho))$ for some time-dependent parameter $\rho$, the integral graph of $\mathcal G(\rho)$ on $[0, \infty)$ is a constant graph $\bar{\mathcal G}_{[0, \infty)}:=(\mathcal V, \bar{\mathcal E})$ where $\mathcal V$ is the same vertex set of $\mathcal G(\rho)$, and the adjacency matrix is defined by
$$
\bar{\Lambda}_{i,j}= \begin{cases}1, & \text { if } \int_0^{+\infty} \lambda_{i,j}(\rho) d t=\infty \\ 0, & \text { if } \int_0^{+\infty} \lambda_{i,j}(\rho) d t<\infty\end{cases}
$$
\end{defn}}

\revres{Consider the system where the dynamics of each node $x_i$ is given by
\begin{align}
    \dot{x}_i &=\sum_{j \in \mathcal{N}(i)} \lambda_{i,j}(\rho)\left(x_j - x_i\right) ~~ i\in\{1, \dots, n\}, \label{eq:sys_graph}
\end{align}
and $\lambda_{i,j}(\rho)$ is the weight on the vertex if $(i,j) \in \mathcal{E}$ (otherwise, $\lambda_{i,j}=0$). This system corresponds to a weighted graph and can be represented in state-space form as
\begin{align}
    \dot{x}&= -L(\rho)x, \label{eq:sys}
\end{align}
where $L$ is the Laplacian. Lemma \ref{lemma:sys_cons}  holds for the system \cite{cao_consensus_2011}.}

\revres{\begin{lemma}[\cite{cao_consensus_2011}]\label{lemma:sys_cons}
The dynamics of \eqref{eq:sys} implements consensus, that is, $\lim _{t \rightarrow+\infty} x(t) \in \operatorname{span}\{\mathbf{1}\}$ if and only if $\bar{\mathcal G}_{[0, \infty)}$ is connected.
\end{lemma}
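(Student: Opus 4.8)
The plan is to reduce the ``if and only if'' to the decay of a single scalar energy and then handle the two implications separately. Because each $L(\rho(t))$ is the Laplacian of an undirected weighted graph, it is symmetric, positive semidefinite, and satisfies $L(\rho)\mathbf{1}=0$; hence $\frac{d}{dt}\!\left(\mathbf{1}^\top x\right)=-\mathbf{1}^\top L(\rho)x=0$, so the mean $\alpha=\frac{1}{n}\mathbf{1}^\top x(0)$ is conserved along \eqref{eq:sys}. Writing the disagreement vector $\delta=x-\alpha\mathbf{1}$ gives $\dot\delta=-L(\rho)\delta$ with $\mathbf{1}^\top\delta\equiv 0$, so ``implements consensus'' is equivalent to $\delta\to 0$. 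Taking $V=\tfrac{1}{2}\|\delta\|^2$ yields $\dot V=-\delta^\top L(\rho)\delta\le 0$, so $V$ is nonincreasing, converges to some $V_\infty\ge 0$, and $\int_0^\infty \delta^\top L(\rho)\delta\,dt=V(0)-V_\infty<\infty$. Expanding $\delta^\top L(\rho)\delta=\tfrac{1}{2}\sum_{(i,j)\in\mathcal E}\lambda_{i,j}(\delta_i-\delta_j)^2$ shows that this finiteness forces $\int_0^\infty\lambda_{i,j}(t)(\delta_i-\delta_j)^2\,dt<\infty$ for every edge. I also record that the state diameter $D(t)=\max_i x_i(t)-\min_i x_i(t)$ is nonincreasing, a standard feature of these dynamics since at a maximizing index $\dot x_m=\sum_j\lambda_{m,j}(x_j-x_m)\le 0$ and symmetrically for the minimum. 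The lemma then reduces to two claims: when $\bar{\mathcal G}_{[0,\infty)}$ is connected, $V_\infty=0$ for every initial condition; and when it is disconnected, $V_\infty>0$ for at least one initial condition.

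For necessity I argue by contraposition using the linearity of \eqref{eq:sys}. Suppose $\bar{\mathcal G}_{[0,\infty)}$ is disconnected, so the vertices split as $\mathcal V=\mathcal V_1\cup\mathcal V_2$ with every cross edge satisfying $\int_0^\infty\lambda_{i,j}\,dt<\infty$; hence the cut weight $S(t)=\sum_{i\in\mathcal V_1,\,j\in\mathcal V_2}\lambda_{i,j}(t)$ is integrable. Choose $T$ so large that $\int_T^\infty S(t)\,dt<\epsilon$, set the late state $x(T)$ equal to $+1$ on $\mathcal V_1$ and $-1$ on $\mathcal V_2$, and track the block-average gap $g=a_1-a_2$, where $a_k$ is the mean over $\mathcal V_k$. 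The intra-block terms cancel by the antisymmetry $\lambda_{i,j}=\lambda_{j,i}$, so $\dot g$ is driven only by the cut and obeys $|\dot g|\le\beta\,D(t)\,S(t)$ with $\beta$ a constant; since $D(t)\le D(T)=2$ by diameter monotonicity, $|g(\infty)-g(T)|\le 2\beta\!\int_T^\infty\! S(t)\,dt<2\beta\epsilon$, whereas $g(T)=2$. For $\epsilon<1/\beta$ the gap stays bounded away from zero, so the trajectory started at time $T$ does not reach consensus. Because the transition matrix $\Psi(T,0)$ of this linear system is invertible, the initial condition $x(0)=\Psi(T,0)^{-1}x(T)$ reproduces $x(T)$ and therefore also fails to reach consensus, establishing the contrapositive.

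Sufficiency is the substantive direction. Assuming $\bar{\mathcal G}_{[0,\infty)}$ is connected, there is a spanning set of edges with $\int_0^\infty\lambda_{i,j}\,dt=\infty$, and I wish to combine this with $\int_0^\infty\lambda_{i,j}(\delta_i-\delta_j)^2\,dt<\infty$ to force $\delta_i-\delta_j\to 0$ across each spanning edge; propagating equality through the connected integral graph then gives $\delta\to\operatorname{span}\{\mathbf{1}\}$, and $\mathbf{1}^\top\delta\equiv 0$ upgrades this to $\delta\to 0$, i.e.\ $V_\infty=0$. To justify the pointwise limit I would first show that $\delta$, and hence each difference $\delta_i-\delta_j$, is uniformly continuous --- which follows from boundedness of $\dot\delta=-L(\rho)\delta$ under the standing assumption that the weights are uniformly bounded --- and then apply a Barbalat-type argument over sliding time windows, using the divergence of $\int\lambda_{i,j}\,dt$ as a persistency-of-excitation condition that prevents the coupling from vanishing in the long run.

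The main obstacle is precisely this last step, and it is the reason the result is nontrivial. The implication ``$\int_0^\infty\lambda\,(\delta_i-\delta_j)^2\,dt<\infty$ together with $\int_0^\infty\lambda\,dt=\infty$ forces $\delta_i-\delta_j\to 0$'' is false in general, because the weight $\lambda_{i,j}(t)$ could be large exactly when $\delta_i-\delta_j$ is small and small when the disagreement is large. Overcoming this needs more than elementary Barbalat: one typically establishes a persistency-of-excitation estimate guaranteeing that the coupling accumulated on each spanning edge over every fixed-length window stays bounded below, and then combines it with the uniform continuity of $\delta$; equivalently, one reformulates the flow through the infinite product of the associated doubly stochastic transition matrices $\Psi(t_{k+1},t_k)$ and invokes the weak-ergodicity (joint-connectivity) theory for such products. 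Managing this interplay between the time-varying, state-induced weights and the disagreement vector is where I would concentrate the analysis.
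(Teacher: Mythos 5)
The paper does not actually prove this lemma---it is quoted verbatim from \cite{cao_consensus_2011}---so there is no in-paper argument to compare against; judged on its own terms, your proposal is incomplete. The setup is correct: conservation of the mean, the disagreement vector $\delta$, the decay $\dot V=-\delta^\top L(\rho)\delta\le 0$ and the resulting integrability of $\sum_{(i,j)}\lambda_{i,j}(\delta_i-\delta_j)^2$, and monotonicity of the diameter are all standard and valid for symmetric nonnegative weights. Your necessity argument by contraposition is also sound under the exogenously time-varying reading of the weights that the lemma's statement presupposes: the cut weight across a disconnected partition of the integral graph is integrable, so the block-average gap can move only by an arbitrarily small amount after a late enough time $T$, and invertibility of the linear transition matrix lets you pull the chosen state at time $T$ back to an initial condition at $t=0$.

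The genuine gap is the sufficiency direction, and you flag it yourself without closing it. You reduce sufficiency to the implication ``$\int_0^\infty\lambda_{i,j}(\delta_i-\delta_j)^2\,dt<\infty$ together with $\int_0^\infty\lambda_{i,j}\,dt=\infty$ forces $\delta_i-\delta_j\to0$,'' correctly observe that this implication is false in general, and then only gesture at persistency-of-excitation estimates or the weak-ergodicity theory of products of stochastic matrices without carrying either out; as written, the substantive half of the ``if and only if'' is unproven. The step you are missing is not a sharper Barbalat lemma but a prior convergence result: for symmetric (more generally, cut-balanced) weights one first proves that each component $x_i(t)$ converges individually to some limit $x_i^*$, via the sorted-state / nested-Lyapunov-family argument of the Hendrickx--Tsitsiklis type that underlies the cited result. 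Once individual limits exist, sufficiency is immediate by contradiction: if $x_i^*\neq x_j^*$ for an edge with $\int_0^\infty\lambda_{i,j}\,dt=\infty$, then $|x_i-x_j|$ is eventually bounded below, so $\int_0^\infty\lambda_{i,j}(x_i-x_j)^2\,dt=\infty$, contradicting the finiteness you already derived from $V$. Replacing your Barbalat/PE sketch with that convergence step is what is needed to make the proof go through.
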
}

\revres{Lemma \ref{lemma:sys_cons2}  follows directly from Definition \ref{def:int_graph}.}

\revres{\begin{lemma}\label{lemma:sys_cons2} \eqref{eq:sys} achieves consensus if its corresponding parameter-varying graph $\mathcal G(\rho)$ is connected $\forall t$, that is, $\lambda_{i,j}(\rho) \geq \lambda_l >0$ for $(i,j)\in \mathcal E_c(\rho) ~\forall t$.
\end{lemma}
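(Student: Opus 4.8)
The plan is to collapse Lemma~\ref{lemma:sys_cons2} onto the integral-graph criterion of Lemma~\ref{lemma:sys_cons}, so that consensus follows with essentially no additional machinery. First I would unpack the hypothesis: saying that $\mathcal{G}(\rho)$ is ``connected $\forall t$'' means precisely that there is a set of edges $\mathcal{E}_c$ forming a connected spanning structure on the vertex set $\mathcal{V}$, each edge of which carries a weight bounded below by a single positive constant, $\lambda_{i,j}(\rho)\geq\lambda_l>0$, uniformly in $t$. This uniform positive lower bound is the only quantitative ingredient the argument needs, because the integral-graph construction of Definition~\ref{def:int_graph} classifies edges solely according to whether the time integral of their weight diverges.

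The next step is a one-line lower-bound estimate on each such edge. For every $(i,j)\in\mathcal{E}_c$ I would write
\[
\int_0^{+\infty}\lambda_{i,j}(\rho)\,dt \;\geq\; \int_0^{+\infty}\lambda_l\,dt \;=\;+\infty,
\]
so that by Definition~\ref{def:int_graph} the corresponding adjacency entry of the integral graph satisfies $\bar{\Lambda}_{i,j}=1$. Hence every edge of $\mathcal{E}_c$ is also an edge of the integral graph $\bar{\mathcal{G}}_{[0,\infty)}$, i.e. $\mathcal{E}_c\subseteq\bar{\mathcal{E}}$. Since $\mathcal{E}_c$ already provides a path between any two vertices of $\mathcal{V}$, the integral graph contains a connected spanning subgraph and is therefore itself connected.

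Finally I would invoke Lemma~\ref{lemma:sys_cons}: because $\bar{\mathcal{G}}_{[0,\infty)}$ is connected, the dynamics \eqref{eq:sys} implements consensus, which is exactly the claim. I do not anticipate a serious obstacle, as the result is meant to follow directly; the only points requiring care are logical rather than computational. One must confirm that the uniform lower bound $\lambda_l$ forces \emph{divergence} of the integral (not merely nonnegativity of the integrand), and that a connected subgraph spanning all of $\mathcal{V}$ is enough to render the full integral graph connected. Both are immediate once the time-invariant positive bound on the connected edge set is established, and this bound is what the application in Proposition~\ref{prop:bounded} supplies through the strict positivity of the thermal conductances over the physical temperature range.
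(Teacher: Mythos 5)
Your proposal is correct and is exactly the argument the paper intends: the paper dispatches this lemma with the single remark that it ``follows directly from Definition~\ref{def:int_graph},'' and your expansion---using the uniform lower bound $\lambda_l>0$ to force $\int_0^{\infty}\lambda_{i,j}(\rho)\,dt=\infty$ on every edge of the connected edge set, concluding that $\bar{\mathcal{G}}_{[0,\infty)}$ is connected, and then invoking Lemma~\ref{lemma:sys_cons}---is precisely the chain of reasoning being elided. No discrepancy to report.
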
}

\revres{Suppose the system in \eqref{eq:sys} has a piecewise constant output $y_{{k}}\in \mathbb{R}^{m}$ over time interval $\Delta t = t_{{k+1}} - t_{k}$, the discrete-time formulation of the system may be written as}
\revres{\begin{subequations}\label{eq:sys_disc}
\begin{align}
        x_{{{k}}+1} &= \Phi_{k}x_{{k}} \\
    y_{{k}} &= Cx_{{k}}
\end{align}
\end{subequations}}
\revres{where $x_{k} \in \mathbb{R}^{n}$ is the state vector, $\Phi_{k} \in \mathbb{R}^{n\times n}$  and $C \in \mathbb{R}^{m \times n}$are the system state and output matrices respectively. Further, for two positive successive integers $k, {l}$ (${l}\geq k$), let the \textit{state transition matrix} be defined as $\Phi_{{l}|k} = \Phi_{{l}|{l}-1}\Phi_{{l}-1|k}$ where $\Phi_{k|k} = I_n$ and $\Phi_{k+1|k} = \Phi_{k}$. Then the system is said to be uniformly detectable if Definition \ref{def:detectability} holds.} 

\revres{\begin{defn}[Uniform Detectability \cite{anderson_detectability_1981}] \label{def:detectability} The pair $(A_{k},C)$ is uniformly detectable if there exist integers $p$, $q \geq 0$, and some $0 \leq a<1$, $b>0$ such that whenever
\begin{align}
\left\|\Phi_{k+q|k}\zeta\right\| & \geq  a\left\|\zeta\right\|
\end{align}
{for some} $\zeta \in \mathbb{R}^{n}$ and $k$, then
\begin{align}
\zeta^TW_{k+r|k}\zeta \geq b\zeta^T\zeta,
\end{align}
where $W_{k+r|k}$  is the discrete-time observability gramian given by
\begin{align}\label{eq:gramian_disc}
   W_{k+r|k} \coloneqq \sum^{k+q}_{i = k}\Phi_{{i}|k}^{T}C^TC\Phi_{{i}|k}.
\end{align}
\end{defn}}

\revres{\begin{lemma}
$(\Phi_{k}, C)$ is uniformly detectable if $C$ has at least one non-zero row sum.
\end{lemma}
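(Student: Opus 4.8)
The plan is to exploit the consensus property established in Lemma~\ref{lemma:sys_cons2}: the only mode of the graph dynamics that is not asymptotically contracted is the consensus direction $\mathbf{1}$, and the hypothesis that $C$ has a non-zero row sum is exactly the statement that this surviving mode is seen by the output. Indeed, the $i$-th row sum of $C$ equals $(C\mathbf{1})_i$, so ``$C$ has at least one non-zero row sum'' is equivalent to $C\mathbf{1}\neq 0$. The strategy is therefore to show that any $\zeta$ that the dynamics fail to contract over a window of length $q$ must be nearly aligned with $\mathbf{1}$, and then to use $C\mathbf{1}\neq 0$ to bound the observability Gramian of Definition~\ref{def:detectability} from below.

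First I would decompose $\mathbb{R}^n=\operatorname{span}\{\mathbf{1}\}\oplus\mathbf{1}^{\perp}$. Because each $L(\rho_k)$ is a symmetric connected-graph Laplacian, $L(\rho_k)\mathbf{1}=0$ gives $\Phi_k\mathbf{1}=e^{-L(\rho_k)\Delta t}\mathbf{1}=\mathbf{1}$, so the consensus direction is fixed by every $\Phi_k$ and both subspaces are $\Phi_k$-invariant. On $\mathbf{1}^{\perp}$ the contraction is governed by the smallest positive eigenvalue (algebraic connectivity) of $L(\rho_k)$; since the weights are bounded below by $\lambda_l>0$ on a fixed connected edge set, this eigenvalue admits a uniform lower bound $\underline\lambda>0$, so $\|\Phi_{k}\zeta_\perp\|\leq\mu\|\zeta_\perp\|$ for all $k$ with $\mu:=e^{-\underline\lambda\,\Delta t}<1$. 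Writing $\zeta=\beta\mathbf{1}+\zeta_\perp$, this yields $\Phi_{k+q|k}\zeta=\beta\mathbf{1}+\Phi_{k+q|k}\zeta_\perp$ with $\|\Phi_{k+q|k}\zeta_\perp\|\leq\mu^{q}\|\zeta_\perp\|$.

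Next I would verify the detectability condition directly. Fix any $a\in(0,1)$ and choose $q$ so that $\mu^{q}<a$. If $\|\Phi_{k+q|k}\zeta\|\geq a\|\zeta\|$, then the triangle inequality together with $\|\zeta_\perp\|\leq\|\zeta\|$ forces $\|\beta\mathbf{1}\|\geq(a-\mu^{q})\|\zeta\|=:\delta\|\zeta\|$ with $\delta>0$; that is, a slowly-contracted state is necessarily dominated by its consensus component. To bound $W_{k+q|k}=\sum_{i=k}^{k+q}\Phi_{i|k}^{\top}C^{\top}C\Phi_{i|k}$, I would estimate each summand with the reverse triangle inequality as $\|C\Phi_{i|k}\zeta\|\geq|\beta|\,\|C\mathbf{1}\|-\|C\|\,\mu^{\,i-k}\|\zeta_\perp\|$. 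Since $C\mathbf{1}\neq0$ the first term is a fixed positive multiple of $\delta\|\zeta\|$, while the decaying term is negligible for the later indices of the window; summing those terms gives $\zeta^{\top}W_{k+q|k}\zeta\geq b\|\zeta\|^{2}$ for some $b>0$ independent of $k$ and $\zeta$, which is precisely uniform detectability.

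I expect the main obstacle to be the uniformity in the second step, namely upgrading the purely asymptotic consensus guarantee of Lemma~\ref{lemma:sys_cons2} to a \emph{uniform} (in $k$) exponential contraction rate $\mu<1$ on $\mathbf{1}^{\perp}$. This hinges on a uniform positive lower bound for the algebraic connectivity of the time-varying Laplacian, which must be argued from the fixed connected topology and the weight bound $\lambda_{i,j}\geq\lambda_l$ rather than from connectivity alone. A secondary point of care is that the physical TES generator is $-M^{-1}L$ rather than the symmetric $-L$ of the appendix; restoring self-adjointness by working in the $M$-weighted inner product (equivalently, conjugating by $M^{1/2}$) preserves $\mathbf{1}$ as the consensus direction and leaves the decomposition and estimates intact.
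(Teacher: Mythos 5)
Your proposal follows the same outline as the paper's proof---use consensus to argue that the only direction the dynamics fail to contract is $\operatorname{span}\{\mathbf{1}_n\}$, then use the non-zero row sum (equivalently $C\mathbf{1}_n \neq 0$) to make that direction visible in the observability Gramian---but your execution is substantially more careful, and it closes two genuine gaps in the paper's own argument. First, the paper verifies the Gramian inequality only at the single vector $v = \mathbf{1}_n$, whereas Definition \ref{def:detectability} requires the bound for \emph{every} $\zeta$ satisfying $\left\|\Phi_{k+q|k}\zeta\right\| \geq a\left\|\zeta\right\|$; that set is an open cone around $\mathbf{1}_n$, not a single ray. Your decomposition $\zeta = \beta\mathbf{1} + \zeta_\perp$ with the quantitative conclusion $|\beta| \geq \delta\|\zeta\|$ for slowly-contracted states, followed by the reverse-triangle-inequality estimate on $\|C\Phi_{i|k}\zeta\|$, is exactly what is needed to handle the full cone. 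Second, the paper passes from the asymptotic consensus statement of Lemma \ref{lemma:sys_cons2} to a contraction ratio strictly below $a$ that is implicitly uniform in $k$ and in $v \notin \operatorname{span}\{\mathbf{1}_n\}$; as you correctly flag, this uniformity does not follow from the limit statement alone and must be extracted from the weight bound $\lambda_{i,j} \geq \lambda_l > 0$ on the fixed connected edge set via a uniform lower bound on the algebraic connectivity. Your observation that the physical generator is $-M^{-1}L$ rather than the symmetric $-L$ of the appendix, and that the decomposition survives in the $M$-weighted inner product, addresses a point the paper sidesteps entirely. The only loose end in your sketch is bookkeeping: the window length $q$ must be chosen large enough to serve both purposes (forcing $\mu^{q} < a$ and leaving enough late indices in the Gramian sum where the decaying term is dominated by $|\beta|\,\|C\mathbf{1}\|$), but that is routine.
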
}

\revres{\begin{proof}
Following Definition \ref{def:detectability}, we need to evaluate {for what $v$} {
\begin{align}
    \frac{\left\|\Phi_{k+q|k}v\right\|}{\left\|v\right\|} \geq a ,  ~ a \in [0,1)\label{eq:detect_norm}
\end{align}}
 and show that for such $v$,
 \begin{align}
v^TW_{k+r|k}v \geq bv^Tv ~~ (b>0)\label{eq:detect_2norm}
\end{align}
for {some integer} $q$ {$\geq 0$}.
Since $\mathcal G(\rho)$ implements consensus, that is, $\lim _{t \rightarrow+\infty} x(t) = x^*\in \operatorname{span}\{\mathbf{1}_n\}$; for any $t_{k}\geq0$ ($t_{k} = k\Delta t$ ) where $ x(t_{k}) \notin \operatorname{span}\{\mathbf{1}_n\}, ~\exists$ ${a \in [0,1)}$ and $t_{{p+k}}$ with $p>0$ such that
\begin{align} \label{eq:x_ineq}
    \left\|x(t_{{p+k}}) - x^*\right\| < {a}\left\|x(t_{k}) - x^*\right\| .
\end{align}
Without loss of generality, let $v = x(t_{k}) - x^*$. \eqref{eq:x_ineq} implies $\exists ~\Phi_{k+q|k} $ such that 
\begin{align}
    \frac{\left\|\Phi_{k+q|k}v\right\|}{\left\|v\right\|} < {a} , ~~\forall ~ v \notin \operatorname{span}\{\mathbf{1}_n\} 
\end{align}
Thus, we only need to show that \eqref{eq:detect_2norm} is satisfied with $v= \mathbf{1}_n$. Since $C$ has at least one non-zero row sum, $\exists$ {some}  $b {>0} $ such that
\begin{align}
v^TW_{k+r|k}v  &= v^T\sum^{k+q}_{i = k}\Phi_{{i}|k}^{T}C^TC\Phi_{{i}|k}v \nonumber \\&= {\begin{cases} 
\mathbf{1}_n^TC^TC\mathbf{1}_n
~ (q=0)\\q \mathbf{1}_n^TC^TC\mathbf{1}_n\end{cases}} \geq b>0.
\end{align}
\end{proof}}

\begin{acknowledgment}
This work is supported by the U.S. Office of Naval Research Thermal Science and Engineering Program under contract number N00014-21-1-2352.
\end{acknowledgment}


\bibliographystyle{asmems4}
\bibliography{references}

\begin{thebibliography}{10}

\bibitem{shanks_control_2022}
Shanks, M., and Jain, N., 2022,
\newblock ``Control of a {Hybrid} {Thermal} {Management} {System}: {A}
  {Heuristic} {Strategy} for {Charging} and {Discharging} a {Latent} {Thermal}
  {Energy} {Storage} {Device},''
\newblock In 2022 21st {IEEE} {Intersociety} {Conference} on {Thermal} and
  {Thermomechanical} {Phenomena} in {Electronic} {Systems} ({iTherm}),
  pp.~1--10.

\bibitem{shafiei_model_2015}
Shafiei, S.~E., and Alleyne, A., 2015,
\newblock ``Model predictive control of hybrid thermal energy systems in
  transport refrigeration,''
\newblock {\em Applied Thermal Engineering, {\bf 82}}, pp.~264--280.

\bibitem{pangborn_hierarchical_2020}
Pangborn, H.~C., Laird, C.~E., and Alleyne, A.~G., 2020,
\newblock ``Hierarchical {Hybrid} {MPC} for {Management} of {Distributed}
  {Phase} {Change} {Thermal} {Energy} {Storage},''
\newblock In 2020 {American} {Control} {Conference} ({ACC}), pp.~4147--4153.

\bibitem{barz_state_2018}
Barz, T., Seliger, D., Marx, K., Sommer, A., Walter, S.~F., Bock, H.~G., and
  Körkel, S., 2018,
\newblock ``State and state of charge estimation for a latent heat storage,''
\newblock {\em Control Engineering Practice, {\bf 72}}, pp.~151--166.

\bibitem{zsembinszki_evaluation_2020}
Zsembinszki, G., Orozco, C., Gasia, J., Barz, T., Emhofer, J., and Cabeza,
  L.~F., 2020,
\newblock ``Evaluation of the {State} of {Charge} of a {Solid}/{Liquid} {Phase}
  {Change} {Material} in a {Thermal} {Energy} {Storage} {Tank},''
\newblock {\em Energies, {\bf 13}}(6), p.~1425.

\bibitem{paberit_detecting_2016}
Paberit, R., and Öjerborn, J., 2016,
\newblock ``Detecting {State} of {Charge} in {PCMs} - {Experimental}
  investigation of changes in chemical and physical properties during phase
  transitions,''
\newblock Master's thesis, Chalmers University of Technology.

\bibitem{beyne_estimating_2022}
Beyne, W., Couvreur, K., T'Jollyn, I., Lecompte, S., and De~Paepe, M., 2022,
\newblock ``Estimating the state of charge in a latent thermal energy storage
  heat exchanger based on inlet/outlet and surface measurements,''
\newblock {\em Applied Thermal Engineering, {\bf 201}}, p.~117806.

\bibitem{kalman_new_1960}
Kalman, R.~E., 1960,
\newblock ``A {New} {Approach} to {Linear} {Filtering} and {Prediction}
  {Problems},''
\newblock {\em Journal of Basic Engineering, {\bf 82}}(1), pp.~35--45.

\bibitem{luenberger_observers_1966}
Luenberger, D., 1966,
\newblock ``Observers for multivariable systems,''
\newblock {\em IEEE Transactions on Automatic Control, {\bf 11}}(2),
  pp.~190--197.

\bibitem{henze_experimental_2005}
Henze, G., Kalz, D., Liu, S., and Felsmann, C., 2005,
\newblock ``Experimental {Analysis} of {Model}-{Based} {Predictive} {Optimal}
  {Control} for {Active} and {Passive} {Building} {Thermal} {Storage}
  {Inventory},''
\newblock {\em HVAC\&R Research, {\bf 11}}(2), pp.~189--213.

\bibitem{steinmaurer_development_2014}
Steinmaurer, G., Krupa, M., and Kefer, P., 2014,
\newblock ``Development of {Sensors} for {Measuring} the {Enthalpy} of {PCM}
  {Storage} {Systems},''
\newblock {\em Energy Procedia, {\bf 48}}, pp.~440--446.

\bibitem{charvat_visual_2017}
Charvát, P., Štětina, J., Mauder, T., and Klimeš, L., 2017,
\newblock ``Visual monitoring of the melting front propagation in a
  paraffin-based {PCM},''
\newblock {\em EPJ Web of Conferences, {\bf 143}}, p.~02042.

\bibitem{ezan_ice_2011}
Ezan, M.~A., Çetin, L., and Erek, A., 2011,
\newblock ``Ice thickness measurement method for thermal energy storage unit,''
\newblock {\em Journal of Thermal Science \& Technology, {\bf 31}}(1),
  pp.~1--10.

\bibitem{pernsteiner_state_2021}
Pernsteiner, D., Schirrer, A., Kasper, L., Hofmann, R., and Jakubek, S., 2021,
\newblock ``State estimation concept for a nonlinear melting/solidification
  problem of a latent heat thermal energy storage,''
\newblock {\em Computers \& Chemical Engineering, {\bf 153}}, p.~107444.

\bibitem{jaccoud_state_2018}
Jaccoud, B.~R., Orlande, H. R.~B., Colaço, M.~J., Fudym, O., and Caldeira,
  A.~B., 2018,
\newblock ``State estimation for the thermal storage in phase change materials
  containing nanoparticles,''
\newblock {\em High Temperatures-High Pressures, {\bf 47}}(2), pp.~117--137.

\bibitem{morales_sandoval_design_2021}
Morales~Sandoval, D.~A., De~La Cruz~Loredo, I., Bastida, H., Badman, J. J.~R.,
  and Ugalde-Loo, C.~E., 2021,
\newblock ``Design and verification of an effective state-of-charge estimator
  for thermal energy storage,''
\newblock {\em IET Smart Grid, {\bf 4}}(2), pp.~202--214.

\bibitem{mracek_new_1996}
Mracek, C., Clontier, J., and D'Souza, C., 1996,
\newblock ``A new technique for nonlinear estimation,''
\newblock In Proceeding of the 1996 {IEEE} {International} {Conference} on
  {Control} {Applications}, pp.~338--343.

\bibitem{jaganath_sdre-based_2005}
Jaganath, C., Ridley, A., and Bernstein, D., 2005,
\newblock ``A {SDRE}-based asymptotic observer for nonlinear discrete-time
  systems,''
\newblock In Proceedings of the 2005 {American} {Control} {Conference},
  pp.~3630--3635 vol. 5.

\bibitem{berman_comparisons_2014}
Berman, A., Zarchan, P., and Lewis, B., 2014,
\newblock ``Comparisons {Between} the {Extended} {Kalman} {Filter} and the
  {State}-{Dependent} {Riccati} {Estimator},''
\newblock {\em Journal of Guidance, Control, and Dynamics, {\bf 37}}(5),
  pp.~1556--1567.

\bibitem{the_analytic_sciences_corporation_applied_1974}
{The Analytic Sciences Corporation}, and Gelb, A., 1974,
\newblock {\em Applied {Optimal} {Estimation}}
\newblock {IT} {Pro}. MIT Press.

\bibitem{julier_new_1997}
Julier, S.~J., and Uhlmann, J.~K., 1997,
\newblock ``A new extension of the {Kalman} filter to nonlinear systems,''
\newblock In Signal {Processing}, {Sensor} {Fusion}, and {Target} {Recognition}
  {VI}, Vol.~3068, SPIE, pp.~182--193.

\bibitem{wan_unscented_2001}
Wan, E.~A., and van~der Merwe, R., 2001,
\newblock ``The {Unscented} {Kalman} {Filter},''
\newblock In {\em Kalman Filtering and Neural Networks}. John Wiley \& Sons,
  Ltd, pp.~221--280.

\bibitem{beikzadeh_exponential_2012}
Beikzadeh, H., and Taghirad, H.~D., 2012,
\newblock ``Exponential nonlinear observer based on the differential
  state-dependent {Riccati} equation,''
\newblock {\em International Journal of Automation and Computing, {\bf 9}}(4),
  pp.~358--368.

\bibitem{ewing_analysis_2000}
Ewing, C., 2000,
\newblock ``An analysis of the state dependent {Riccati} equation method
  nonlinear estimation technique,''
\newblock In Proceedings of the {AIAA} {Guidance}, {Navigation} and {Control}
  {Conference}, American Institute of Aeronautics and Astronautics.

\bibitem{simon_optimal_2006}
Simon, D., 2006,
\newblock {\em Optimal {State} {Estimation}: {Kalman}, {H} {Infinity}, and
  {Nonlinear} {Approaches}}
\newblock Wiley, Hoboken, NJ, USA.

\bibitem{hale_phase_1971}
Hale, D.~V., Hoover, M.~J., and O'Neill, M.~J., 1971,
\newblock Phase {Change} {Materials} {Handbook}
\newblock Tech. Rep. NASA-CR-61363.

\bibitem{gohil_reduced-order_2020}
Gohil, K.~N., Deckard, M., Shamberger, P.~J., and Jain, N., 2020,
\newblock ``A {Reduced}-order {Model} for {Analyzing} {Heat} {Transfer} in a
  {Thermal} {Energy} {Storage} {Module},''
\newblock In 2020 19th {IEEE} {Intersociety} {Conference} on {Thermal} and
  {Thermomechanical} {Phenomena} in {Electronic} {Systems} ({ITherm}),
  pp.~681--689.

\bibitem{shanks_design_2022}
Shanks, M., Shoalmire, C.~M., Deckard, M., Gohil, K.~N., Lewis, H., Lin, D.,
  Shamberger, P.~J., and Jain, N., 2022,
\newblock ``Design of spatial variability in thermal energy storage modules for
  enhanced power density,''
\newblock {\em Applied Energy, {\bf 314}}, p.~118966.

\bibitem{shamberger_cooling_2018}
Shamberger, P.~J., and Fisher, T.~S., 2018,
\newblock ``Cooling power and characteristic times of composite heatsinks and
  insulants,''
\newblock {\em International Journal of Heat and Mass Transfer, {\bf 117}},
  pp.~1205--1215.

\bibitem{tamraparni_design_2021}
Tamraparni, A., Hoe, A., Deckard, M., Zhang, C., Elwany, A., Shamberger, P.~J.,
  and Felts, J.~R., 2021,
\newblock ``Design and {Optimization} of {Lamellar} {Phase} {Change}
  {Composites} for {Thermal} {Energy} {Storage},''
\newblock {\em Advanced Engineering Materials, {\bf 23}}(1), p.~2001052.

\bibitem{gillis_numerical_2021}
Gillis, B., and Jain, N., 2021,
\newblock ``Numerical {Validation} of {Effective} {Specific} {Heat} {Functions}
  for {Simulating} {Melting} {Dynamics} in {Latent} {Heat} {Thermal} {Energy}
  {Storage} {Modules},''
\newblock In 2021 20th {IEEE} {Intersociety} {Conference} on {Thermal} and
  {Thermomechanical} {Phenomena} in {Electronic} {Systems} ({ITherm}),
  pp.~544--550.

\bibitem{yangSolvingHeatTransfer2010}
Yang, H., and He, Y., 2010,
\newblock ``Solving heat transfer problems with phase change via smoothed
  effective heat capacity and element-free {{Galerkin}} methods,''
\newblock {\em International Communications in Heat and Mass Transfer, {\bf
  37}}(4), pp.~385--392.

\bibitem{sgreva_thermo-physical_2022}
Sgreva, N.~R., Noel, J., Métivier, C., Marchal, P., Chaynes, H., Isaiev, M.,
  and Jannot, Y., 2022,
\newblock ``Thermo-physical characterization of {Hexadecane} during the
  solid/liquid phase change,''
\newblock {\em Thermochimica Acta, {\bf 710}}, p.~179180.

\bibitem{inyang-udoh_strongly_2022}
Inyang-Udoh, U., Shanks, M., and Jain, N., 2022,
\newblock ``A (strongly) connected weighted graph is uniformly detectable based
  on any output node,''
\newblock {\em arXiv:2209.13119}.

\bibitem{inyang-udoh_sampling_2022}
Inyang-Udoh, U., Shanks, M., Mandali, F., and Jain, N., 2022,
\newblock ``Sampling for uniform observability in a state-dependent {Riccati}
  estimator,''
\newblock under review. Available at:
  https://engineering.purdue.edu/JainResearchLab/\linebreak
  pdf/Inyang\_Udoh\_Shanks\_Jain\_L\_CSS\_2022.pdf.

\bibitem{anderson_detectability_1981}
Anderson, B. D.~O., and Moore, J.~B., 1981,
\newblock ``Detectability and {Stabilizability} of {Time}-{Varying}
  {Discrete}-{Time} {Linear} {Systems},''
\newblock {\em SIAM Journal on Control and Optimization, {\bf 19}}(1),
  pp.~20--32.

\bibitem{shampine_matlab_1997}
Shampine, L.~F., and Reichelt, M.~W., 1997,
\newblock ``The {MATLAB} {ODE} {Suite},''
\newblock {\em SIAM Journal on Scientific Computing, {\bf 18}}(1), pp.~1--22.

\bibitem{barz_paraffins_2021}
Barz, T., and Emhofer, J., 2021,
\newblock ``Paraffins as phase change material in a compact plate-fin heat
  exchanger - {Part} {I}: {Experimental} analysis and modeling of complete
  phase transitions,''
\newblock {\em Journal of Energy Storage, {\bf 33}}, p.~102128.

\bibitem{cao_consensus_2011}
Cao, L., Zheng, Y., and Zhou, Q., 2011,
\newblock ``A necessary and sufficient condition for consensus of
  continuous-time agents over undirected time-varying networks,''
\newblock {\em IEEE Transactions on Automatic Control, {\bf 56}}(8),
  pp.~1915--1920.

\end{thebibliography}

\end{document}